\documentclass[final]{article}

\usepackage{amsfonts,amssymb,amsmath,amsthm}

\usepackage{natbib}
\usepackage{setspace}
\usepackage{fancyhdr}
\usepackage[paperwidth=8.5in,left=1in,right=1in,top=1in,bottom=1in,paperheight=11in]{geometry}
\usepackage{titlesec}
\usepackage{color}
\usepackage{array}
\usepackage{booktabs}
\usepackage{threeparttable}
\usepackage{url}
\usepackage{array}
\usepackage{caption}
\usepackage[colorlinks=true, allcolors=black]{hyperref}

\usepackage{fourier}
\usepackage[T1]{fontenc}

\usepackage{subcaption}
\usepackage{graphicx}
\usepackage{tikz}
\usetikzlibrary{shapes.geometric, arrows.meta, positioning}
\usepackage{pgfplots}
\pgfplotsset{compat=1.18}
\usetikzlibrary{patterns}
\usepgfplotslibrary{groupplots,fillbetween}

\definecolor{darkblue}{rgb}{0,0,0.55}
\definecolor{darkred}{rgb}{0.5,0,0}
\newcommand\Bheadfont{\fontsize{14pt}{\baselineskip}\selectfont}

\definecolor{darkblue}{rgb}{0,0,0.55}
\definecolor{darkred}{rgb}{0.5,0,0}

\titleformat{\section}[hang] {\normalfont\sc\color{darkblue}\Bheadfont} {\thesection\hskip0.618em}{0em}{}
\titlespacing*{\section}
{0pt}{15pt plus 2pt minus 2pt}{9pt plus 2pt minus 2pt}
\titleformat{\subsection}[runin]
{\normalfont\sc\color{darkblue}} {\thesubsection\hskip0.618em}{0em}{}
\titlespacing*{\subsection}
{0pt}{13pt plus 2pt minus 2pt}{13pt plus 2pt minus 2pt}
\titleformat{\subsubsection}[runin]
{\normalfont\sc\color{darkblue}} {\thesubsubsection\hskip0.618em}{0em}{}
\titlespacing*{\subsubsection}
{0pt}{13pt plus 2pt minus 2pt}{13pt plus 2pt minus 2pt}

\theoremstyle{plain} \newtheorem{proposition}{Proposition}

\begin{document}

	\title{\Large{Beneath the VIX: Probability, Severity, and Uncertainty Shocks}\thanks{I thank Zheng Liu, Nick Bloom,
	Venky Venkateswaran and participants at the Stanford SITE session on the Macroeconomics of Uncertainty and Volatility for comments.}}
	\author{Todd B. Walker\thanks{Department of Economics, Indiana University, \texttt{walkertb@iu.edu}.}}
	\date{October 2026}
	\maketitle

	\vspace{0.7in}
\begin{abstract}
	Implied stock-market volatility is widely used to identify uncertainty
	shocks and discipline models of their macroeconomic effects.
	This paper gives distributional content to the option-price variation
	underlying these exercises.
	A normalized at-the-money forward put price, the Downside Protection
	Premium (DPP), closely tracks the VIX and preserves the principal
	macroeconomic responses in the specifications examined.
	Its price and local strike slope separate risk-neutral shortfall
	probability from conditional severity.
	Severity accounts for almost all monthly premium comovement and
	91.5 percent of the recession log premium.
	Conditional gains also rise substantially, while conditional losses
	remain larger in phase averages.
	Despite this severity dominance, quotations across strikes frequently
	exclude a common-scale explanation, conditional on the maintained
	forward and premium construction.
	Realized-volatility and future-volatility-news innovations both raise
	protection costs primarily through severity, yet produce different
	estimated employment responses in common impact-severity units;
	simultaneous significance depends on the inference formulation.
	Together, these findings supply joint empirical restrictions on the
	connection between uncertainty, option prices, and economic activity.
\end{abstract}

	\bgroup
	\let\footnoterule\relax 
	\thispagestyle{empty} 
	\egroup
	\vspace{0.25in} 
	\phantom{s} 

	\indent JEL Codes: E32, G12, G13 \\
	\indent Keywords: VIX, Uncertainty Shocks, Option-Implied Distribution
	\onehalfspacing

	\newpage

	\setcounter{page}{1}
	\thispagestyle{empty}

	\section{Introduction}
\label{sec:intro}

Volatility indices such as the VIX and VXO are widely used proxies for
market uncertainty in macroeconomics.
These measures provide observables from which uncertainty innovations are identified
and targets used to discipline quantitative macroeconomic models of uncertainty
transmission.
Among other results, a VXO or VIX shock has been shown to produce a sharp drop and rebound in
output and employment within about six months \citep{Bloom2009}; lower
output, consumption, investment, and hours together \citep{BasuBundick2017};
raise unemployment and lower inflation, much like a negative aggregate
demand shock \citep{LeducLiu2016}; raise unemployment by substantially more
in recessions than in expansions \citep{CaggianoCastelnuovoGroshenny2014};
and have larger contractionary effects when the zero lower bound binds
\citep{CaggianoCastelnuovoPellegrino2017}.
However, \citet{BergerDewBeckerGiglio2020SecondMomentNews} find that the contraction is associated
with realized-volatility shocks, while news about future volatility has
little effect on activity.
This paper similarly looks beneath the volatility index.
I construct the price of a one-month at-the-money forward put, normalized
by the index level, and show that this downside protection premium moves
almost perfectly with the VIX.
Its price equals the product of the risk-neutral probability of a
shortfall and the expected severity of that shortfall.
I show that severity accounts for nearly all of the premium's movements.
Moreover, I document that innovations to realized volatility and news
about future volatility both raise protection costs mainly through
severity, yet they have different estimated effects on employment.

I begin with the price of an at-the-money forward put divided by the current
index value, which I call the Downside Protection Premium (DPP).
This price measures the cost of protection against the index finishing below
its forward value.
Its monthly correlation with the VIX is 0.992, and substituting the DPP for
the VIX or VXO as a proxy for uncertainty preserves the principal macroeconomic responses
well-known settings \citet{Bloom2009,LeducLiu2016,BasuBundick2017}.
The DPP, therefore, captures the market-price variation
underlying familiar macroeconomic uncertainty exercises.

The DPP permits a direct interpretation of that variation.
After adjusting for carrying costs, the premium equals the risk-neutral
probability of finishing below the forward (i.e., probability of shortfall)
multiplied by the expected proportional shortfall conditional (i.e., conditional severity)
on doing so.
The DPP pins down only the product, but puts with nearby strikes
can be used to separate the two because raising the strike slightly increases the put's payoff
only when the index finishes below it, so the resulting change in price
reveals the probability of a shortfall.
Thus, simple put prices do more than track the VIX,
these prices also support a familiar insurance interpretation of its movements.
When the cost of protection rises, is a loss more likely, or would it be
more severe?

Thus, the paper is able to address three questions.
First, which component accounts for movements in the protection premium?
Second, can that composition be explained by a change in scale alone,
that is, by a distribution of index outcomes that simply widens or
narrows around the forward without changing its shape?
Such a pure widening is a natural benchmark because it keeps the
shortfall probability fixed and moves the premium entirely through
severity, while also fixing how the prices of puts at other strikes move
relative to the at-the-money price.
Third, do volatility innovations that raise the premium through a
similar mix of probability and severity have similar macroeconomic
consequences?
Together, these questions turn the decomposition into empirical
restrictions on the connection between uncertainty, option prices, and
economic activity.

My first finding is that conditional severity accounts for almost all
monthly premium comovement and most of the recession premium.
Using daily SPX option quotations from January 1990 through July 2026,
I estimate a monthly severity loading of 1.032 and a complementary
probability loading of $-0.032$.
The probability loading's 95 percent interval, $[-0.071,0.006]$, includes zero,
and probability changes are substantially less variable than severity changes.
Severity accounts for 91.5 percent of the mean log premium difference
between recessions and expansions, although the recession probability
contribution is positive.
These conclusions persist across time periods, recession comparisons,
and alternative component-recovery methods.
Quarterly-horizon options give the same broad pattern of severity dominance.

The interpretation is two-sided due to put-call parity.
Because the forward is the market's price-weighted average of where the
index will finish, expected shortfalls below the forward must be offset
by expected gains above it.
By put--call parity, the adjusted premium therefore also equals the
probability of finishing at or above the forward multiplied by the
expected gain conditional on doing so, which I call conditional gain
severity.
The monthly gain-severity loading is 0.977, and conditional gains rise
substantially during recessions.
Higher protection costs thus accompany larger conditional movements on
both sides of the forward, rather than an exclusively downside movement.
The probability determines the relative size of these movements.
The shortfall probability averages 0.423 in expansions and 0.444 in
recessions---shortfalls are priced as less likely than gains, so for
expected losses and gains to balance, a shortfall must be larger than a
gain when it occurs.
The conditional loss is roughly 36 percent larger
than the conditional gain in expansions and 25 percent larger in
recessions.
The priced distribution thus describes frequent modest gains and less
frequent, larger losses.
This asymmetry is information that the dominance of severity does not
convey.
A lognormal distribution, the benchmark behind Black--Scholes, also
moves its premium almost entirely through severity; doubling the premium
from two to four percent raises severity from 3.92 to 7.69 percent while
the probability barely moves, from 0.51 to 0.52.
But it implies the opposite asymmetry.
Its outcomes are skewed to the right, with a long upper tail, so the
median lies below the mean and more than half the outcomes fall short of
the forward.
A lognormal therefore prices shortfalls as more likely than gains but
smaller, whereas the data price them as less likely but larger.
Thus, matching the volatility index, or even the dominance of severity,
is not sufficient.
The data clearly want a model that prices shortfalls as less likely than gains but larger when they occur,
which rules out a lognormal distribution.

The second finding is that the dominance of severity does not mean that
option prices simply reflect a widening distribution.
If the distribution of index outcomes only widens
or narrows around the forward without changing its shape, the shortfall
probability stays fixed and the premium moves one-for-one with severity.
Such a widening has a testable implication.
Once the at-the-money premium reveals how wide the distribution is on a
given date, the prices of puts at every other strike follow mechanically.
Within each month, I therefore ask whether any single shape, widened or
narrowed day by day, can match the quoted prices across strikes to within
their bid--ask spreads.
In 352 of 405 months, no such shape exists.
A weaker version asks only whether the shape of losses, given that a
shortfall occurs, stays fixed, allowing the shortfall probability and the
distribution of gains to change freely from day to day.
Even this weaker version fails in 112 months, or 27.7 percent of the
sample.
These counts take the forward and premium construction as given; they
are not statistical rejection rates, and some of the discrepancies are
small.
Nonetheless it shows that the at-the-money price can move almost
entirely through severity while the shape of the priced distribution
changes in ways that a pure widening cannot explain.
Thus, for a quantitative model disciplined by implied volatility,
matching the volatility index is not sufficient; its equilibrium option
prices must also generate changes in the shape of the priced
distribution, whether through nonlinear payoffs, state-dependent
valuation, or the uncertainty shock itself.

The third finding connects the composition of protection prices to
differences in macroeconomic responses.
\citet{BergerDewBeckerGiglio2020SecondMomentNews} distinguish innovations
to realized volatility from news about future volatility and find
contrasting responses of economic activity.
I extend this comparison to the probability and severity components
of the protection premium.
Both innovations raise protection costs almost entirely through
conditional severity, yet manufacturing employment falls more after
the realized-volatility innovation.
To put these employment responses on a common scale, I normalize
each innovation to raise conditional severity by one percent on impact.
Under this normalization, the estimated manufacturing employment
response twelve months later is about 0.065 percentage points lower
after a realized-volatility innovation than after volatility news.
The employment contrast therefore survives equalizing the impact
severity response.
This comparison supplies a joint empirical target for models of
uncertainty transmission---both innovations raise the severity of
priced shortfalls, but the estimated employment responses differ.

\subsection{Relation to Existing Literature}

The paper primarily contributes to the macroeconomic literature that uses implied
volatility to identify uncertainty shocks and discipline models of their effects
\citep{Bloom2009,LeducLiu2016,BasuBundick2017}.
I show that replacing the VIX or VXO with the DPP in the empirical
specifications of these papers preserves their main conclusions about how
uncertainty shocks affect economic activity.
Unlike the VIX, the DPP can be decomposed which I exploit to show that most of the
option-price variation attributable to severity rather than probability.

Other papers decompose uncertainty into components with different economic
effects.
\citet{SegalShaliastovichYaron2015GoodBadUncertainty} find that good
uncertainty about macroeconomic growth predicts higher future activity and asset
valuations, whereas bad uncertainty predicts lower ones.
\citet{ForniGambettiSala2025} find that widening the left tail of the
GDP-growth forecast distribution is contractionary, whereas widening the right
tail is mildly expansionary.
My decomposition differs in both object and components, and provides
additional facts that structural models must explain.

The closest comparison is
\citet{BergerDewBeckerGiglio2020SecondMomentNews}.
Their distinction between realized-volatility innovations and news about
future volatility has different implications for economic activity.
I add the priced composition of the associated protection-price responses
and compare activity in common impact-severity and protection-premium units.
Both innovations produce severity-dominated responses, yet their estimated
employment responses differ in these units.
The additional evidence directs structural interpretation toward the origins,
subsequent paths, and other equilibrium effects of the innovations.
Related measurement work connects at-the-money option statistics to implied
volatility.
\citet{DewBeckerGiglio2023} use at-the-money implied volatility in their
analysis of cross-sectional uncertainty.
\citet{Dotsis2024} constructs an option-implied absolute-deviation index and
studies its relationship with the VIX and the predictive content of their
spread for equity returns.
The present paper connects the at-the-money protection price to macroeconomic
uncertainty identification and model evaluation.
It examines the composition of the price movements underlying standard
uncertainty exercises, their compatibility with common-scale restrictions,
and their association with economic activity.
The contribution consequently extends beyond the close correspondence between
an at-the-money statistic and the VIX.

The distributional analysis is also related to work on option-implied skewness.
\citet{DewBecker2024} finds that aggregate option-implied skewness is approximately
acyclical, while firm-level skewness is procyclical.
The local probability gap and conditional loss--gain ratio studied here are
distinct from a global standardized third moment.
On identical quarter-end observations, the recession change in the
premium-adjusted local probability gap is larger for 30-day than for
quarterly-horizon options.
This horizon dependence and the joint quotation restrictions address
properties of the priced distribution that global skewness alone does not determine.
The analysis does not reinterpret the local gap as a new estimator of global
skewness or claim that its behavior overturns the aggregate skewness finding.

Options have also been used to assess disaster mechanisms and the pricing
of adverse states. \citet{BackusChernovMartin2011} infer disaster implications
from index options, while \citet{SeoWachter2019} study option prices in a
model with stochastic disaster risk. \citet{BekaertEngstrom2017} model
non-Gaussian fundamentals and valuation, and \citet{ChabiYoLoudis2024}
decompose conditional risk premia across downside, upside, and central regions.
\citet{CulpGandhiNozawaVeronesi2021} relate option-implied spreads
to risk-neutral probabilities of finishing below the strike and
conditional losses given that event, and use these measures to
study option risk premia.
The present exercise concerns the composition of protection costs and their
connection to empirical uncertainty innovations; it does not separately
recover physical probabilities and risk premia.
To the best of my knowledge, this paper is the first to jointly document
three features of the option-price variation underlying macroeconomic
uncertainty exercises: conditional shortfall severity accounts for
nearly all monthly protection-premium comovement; quotations exclude
within-month common scale in most months; and realized-volatility
innovations and volatility news have different estimated employment
responses even when normalized to the same impact.

\section{The Downside Protection Premium}
\label{Sec:AtM}

Consider an at-the-money forward (AtM-forward) put option, defined by a
strike $K$ equal to the no-arbitrage forward price $F_{t,t+\tau}$,
where the forward definition accounts for the cost of carrying the
underlying to expiration.
Let $r_t$ and $d_t$ denote the continuously compounded risk-free rate and
dividend yield, held constant over the option horizon. Then
$F_{t,t+\tau}=S_t\exp[(r_t-d_t)\tau]$,
where $S_t$ is the current underlying price and $\tau$ is time to maturity in
years.
Let $P_t^{\mathrm{AtM}}(\tau)$ denote the price of this European put.
I define the Downside Protection Premium (DPP) as its price divided by
the current underlying price\footnote{The close empirical DPP--VIX correspondence established
	below also holds with simpler at-the-money constructions.
	Holding quote dates, expirations, and interpolation in time fixed,
	interpolating put prices to the current index level, $K=S_t$, yields
	a nearly identical monthly relationship with the VIX.
	Even using the quoted put strike closest to spot without interpolation
	across strikes works nearly as well.
	Thus, the close monthly relationship does not require the forward
	adjustment. It is retained throughout for the exact forward-centered
	probability--severity decomposition and loss--gain identities below.}
\begin{equation}
	\chi_t(\tau)\equiv\frac{P_t^{\mathrm{AtM}}(\tau)}{S_t}
	\label{Eqn:AtMPut}
\end{equation}
For much of the analysis, I focus on 30 calendar days,
$\tau=30/365$, and write $\chi_t^{30}\equiv\chi_t(30/365)$ to match the horizon of the VIX.

The DPP measures the cost of downside protection per dollar of current equity
exposure and has an intuitive insurance interpretation---it is the cost of one month
of protection against the index finishing below its forward price per dollar of current equity exposure.
An investor holding \$100 of the index at date $t$ pays $100\chi_t^{30}$ dollars for
the corresponding put. Equivalently, the premium is $100\chi_t^{30}$ percent of
current equity exposure.
Moreover, the ratio is dimensionless, allowing comparisons across time and assets.

\subsection{Constructing a Synthetic AtM-Forward Put Price}
\label{Sec:DPPConstruction}

I use daily end-of-day SPX option quotes from DeltaNeutral
(HistoricalOptionData.com) to construct a daily time series
of the Downside Protection Premium, $\chi_t^{30}$.
The data contain puts and calls across a range of strikes and expirations
from January 1990 through July 2026.
I impose neither a minimum trading volume nor a maximum relative bid--ask
spread.
I price options at the midpoint of the bid-ask spread
and retain quotes with a nonnegative bid, a strictly positive ask, and
an ask no lower than the bid.
For put-call parity and additional calculations below, I match
put and call contracts within the same recorded time frame
by strike, expiration, and identified contract family and settlement terms.
Note that SPX options have European exercise, avoiding the need for an early-exercise adjustment.
Appendix~\ref{app:data} provides additional details and documents a substantial
increase in volume around 2005, with put volume exceeding calls,
and with most put volume occurring near-the-money.

In order to estimate a synthetic put price at the forward strike
and a constant 30-day horizon, I use two linear interpolations
designed to be simple by construction.
The first targets the forward strike separately at each maturity bracketing
the 30-day horizon, using
$
	F_t(T;K)=K+e^{r_tT}\bigl[C_t(T,K)-P_t(T,K)\bigr]$.
For both short and long durations, I take the median implied forward from up
to three matched strikes closest to the contemporaneous index level and denote it by $F_t(T)$.
The current implementation uses a daily three-month Treasury bill rate,
converted to continuously compounded units with past rates carried forward for at most three
calendar days.
Using parity to infer the forward incorporates the pricing of dividends
without requiring a separate dividend series; see \cite{vanBinsbergen2012}.
Let $K^-<F_t(T)<K^+$ be the nearest eligible put strikes below and above
the targeted forward.
The interpolated AtM-forward put price for that expiration is therefore
\begin{equation}
	\widehat P_t^{\mathrm{AtM}}(T)
	=\left(\frac{K^+-F_t(T)}{K^+-K^-} \right) P_t(T,K^-)
	+\left(\frac{F_t(T)-K^-}{K^+-K^-}\right) P_t(T,K^+)
	\notag
\end{equation}
I limit the total width of the strike bracket to five percent of
the forward
$
	(K^+-K^-)/F_t(T)\leq 0.05;
$
this restriction drops 6 days.

The second interpolation pins down maturity by selecting eligible maturities
immediately above and below the target duration, $T_s<\tau<T_l$.
The synthetic price is the combination
\begin{equation}
	\widehat P_t(\tau)
	=\left(\frac{T_l-\tau}{T_l-T_s}\right) \widehat P_t^{\mathrm{AtM}}(T_s)
	+\left( \frac{\tau-T_s}{T_l-T_s}\right) \widehat P_t^{\mathrm{AtM}}(T_l)
	\label{Eqn:AtMConstruct}
\end{equation}
I impose no additional restriction on the distance between the bracketing maturities.
If the required strike or maturity brackets are unavailable, the daily
observation remains missing.
The procedure neither extrapolates nor fills missing dates.
The construction is intentionally simple with
the final price using a convex combination of four put prices closest to
the desired AtM strike and 30-day duration.

\begin{figure}[t]
	\centering
	\begin{tikzpicture}[x=1cm,y=1cm,>=Stealth,font=\small]
		\draw[->,thick] (0,-1.65)--(0,1.9)
		node[above,align=center] {Strike relative to forward};
		\draw[->,thick] (0,-1.65)--(9.0,-1.65)
		node[right] {$T$};
		\draw[dashed,gray] (0,0)--(8.8,0);
		\node[left] at (0,0) {$K-F_t(T)=0$};
		\draw[thick] (1.7,-1.1)--(1.7,1.1);
		\draw[thick] (7.4,-.85)--(7.4,1.3);
		\draw[densely dotted] (4.3,-1.65)--(4.3,1.5);
		\foreach \x/\y in {1.7/-1.1,1.7/1.1,7.4/-.85,7.4/1.3}
		{\fill (\x,\y) circle (2pt);}
		\node[right] at (1.7,1.1) {$P_t(T_s,K_s^+)$};
		\node[right] at (1.7,-1.1) {$P_t(T_s,K_s^-)$};
		\node[left] at (7.4,1.3) {$P_t(T_l,K_l^+)$};
		\node[left] at (7.4,-.85) {$P_t(T_l,K_l^-)$};
		\draw[thick,->] (1.85,0)--(4.1,0);
		\draw[thick,->] (7.25,0)--(4.5,0);
		\foreach \x in {1.7,7.4}
		{\draw[fill=white,thick] (\x,0) circle (2.5pt);}
		\fill (4.3,0) circle (2.5pt);
		\node[above] at (4.3,.1) {$\widehat P_t(\tau)$};
		\node[below] at (1.7,-1.65) {$T_s$};
		\node[below] at (4.3,-1.65) {$\tau$};
		\node[below] at (7.4,-1.65) {$T_l$};
	\end{tikzpicture}
	\caption{Constructing the synthetic AtM-forward put price. At each
		bracketing maturity, the two observed put prices (solid endpoints)
		are interpolated across strikes to obtain a price at the forward
		(open circles). Interpolation between these two prices gives the
		estimate at the target horizon (central solid circle). Vertical
		positions represent strike distances from the maturity-specific
		forward.}
	\label{fig:AtMConstructionGraphic}
\end{figure}

The estimated Downside Protection Premium, $\hat \chi_t^{30}$, is available on 8,825 trading days from January 2, 1990 through July 31, 2026, covering 95.80 percent of the
9,212 regular trading sessions in the sample.\footnote{Of the 387 sessions
	without a constructed premium, two have no source SPX quotations, 379 lack a
	usable maturity bracket for the 30-day construction, and six are due to the
	5 percent strike-bracket restriction.
	Of these 379 maturity failures, 359 occur between the
	14th and 21st of the month and coincide with dates when the next standard monthly
	expiration is more than 30 days away.
	The remaining 20 occur between September 18 and October 15, 1992.}
I aggregate the daily DPP to monthly frequency by taking the arithmetic
average of available daily observations within each month.
The resulting series contains 439 monthly observations with no missing
months from January 1990 through July 2026.
Panel A of Table \ref{Tab:DPPVIXCorrelations} reports monthly summary statistics.
The mean monthly DPP is 1.88 percent, corresponding to a premium of
approximately \$1.88 for 30 days of protection per \$100 of current equity
exposure.
The median monthly premium is substantially lower than the mean at 1.68 percent.
Hence, there is substantial right skewness (2.039).
The standard deviation of the monthly series is 0.77 percentage points,
compared with 0.81 percentage points for the daily series; monthly averaging
leaves the mean premium nearly unchanged.
The monthly series is persistent with a first-order autocorrelation
of 0.853.

\subsection{Connection to the VIX and Uncertainty Shocks}
\label{Sec:VIXConnection}

Panels~C and~D of Table \ref{Tab:DPPVIXCorrelations} establish the empirical connection motivating the DPP for the subsequent analysis of uncertainty shocks.
\emph{The DPP and the VIX are almost perfectly correlated at every sampling frequency and in both levels and changes.}
The Pearson correlation is 0.990 across 8,822 matched daily observations and 0.992 across 439 monthly averages; the Spearman correlations are nearly as high (0.984 and 0.988), which, together with the expansion-month correlation reported below, indicates that the association extends beyond crisis observations.
Correlations in monthly log changes are also quite high at 0.984, so the relationship is not an artifact of common trends or shared persistence in levels.
The comovement is stable over time and across business-cycle phases.
Recession months are more highly correlated 0.995 than expansions 0.988, so the two measures are more strongly
coupled when uncertainty is high.
Rolling 24-month correlations never fall below 0.950 in levels or 0.921 in log changes, with medians of 0.991 and 0.987.
The one visibly lower reading is the 12-month minimum of 0.716 in levels.

Panel~B regresses the DPP in basis points of exposure on the VIX, which yields a slope of 10.33 (HAC standard error 0.12) in levels and 10.13 (0.14) in first differences, with $R^2$ of 0.985 and 0.980.
Thus, a one-point increase in the VIX is associated with an increase in the cost of one month of protection of roughly 10 basis points of exposure, and the near-identical slopes support an approximately linear relationship in levels and changes.
The small negative intercept in levels ($-13.2$ basis points) and zero counterpart in changes ($-0.01$) are consistent with an approximately linear relationship.

\begin{table}[htbp]
	\centering
	\normalsize
	\renewcommand{\arraystretch}{0.95}
	\setlength{\tabcolsep}{4pt}
	\caption{The 30-Day DPP: Summary Statistics and Relationship with the VIX}
	\label{Tab:DPPVIXCorrelations}

	\begin{tabular}[t]{@{}lrrrrrr@{}}
		\toprule
		\multicolumn{7}{@{}l}{\textbf{A. Summary statistics}}\\
		\addlinespace[2pt]
		Mean & SD & Min & Med. & Max & Skew. & AR(1)\\
		\midrule
		1.88 & 0.77 & 0.81 & 1.68 & 6.35 & 2.04 & 0.85\\
		\addlinespace[2pt]
		\bottomrule
	\end{tabular}
	\hfil
	\begin{tabular}[t]{@{}lrrrr@{}}
		\toprule
		\multicolumn{5}{@{}l}{\textbf{B. Regression analysis}}\\
		\addlinespace[2pt]
		Series & Const. & Slope & $R^2$ & RMSE\\
		\midrule
		Levels  & $-13.2$ & 10.33 & 0.985 & 9.45\\
		& (2.21)  & (0.12) &      &      \\
		Changes & $-0.01$ & 10.13 & 0.980 & 5.89\\
		& (0.15)  & (0.14) &      &      \\
		\bottomrule
	\end{tabular}

	\vspace{8pt}

	\begin{tabular}[t]{@{}lrrr@{}}
		\toprule
		\multicolumn{4}{@{}l}{\textbf{C. Correlations with VIX}}\\
		\addlinespace[2pt]
		Comparison & Obs. & Pearson & Spear.\\
		\midrule
		Daily levels & 8,822 & 0.990 & 0.984\\
		Monthly levels & 439 & 0.992 & 0.988\\
		Monthly log chg. & 438 & 0.984 & 0.979\\
		\addlinespace[2pt]
		Recessions & 36 & 0.995 & 0.971\\
		Expansions & 403 & 0.988 & 0.986\\
		\bottomrule
	\end{tabular}
	\hfil
	\begin{tabular}[t]{@{}lrrrr@{}}
		\toprule
		\multicolumn{5}{@{}l}{\textbf{D. Rolling correlations}}\\
		\addlinespace[2pt]
		Series & Window & Min & Med. & Max\\
		\midrule
		Levels   & 12 & 0.716 & 0.991 & 1.000\\
		Log chg. & 12 & 0.856 & 0.987 & 0.999\\
		Levels   & 24 & 0.950 & 0.991 & 0.999\\
		Log chg. & 24 & 0.921 & 0.987 & 0.996\\
		\bottomrule
	\end{tabular}

	\begin{flushleft}\footnotesize
			\textit{Note:} Sample 1990-01-02 to 2026-07-31; 8,825 daily
			observations averaged into 439 calendar months. Panel~A reports monthly
			summary statistics for $100\chi_t^{30}$, the 30-day put premium as a percent of exposure;
			skewness is bias-adjusted and AR(1) is the adjacent-month
			autocorrelation. Panels~B--D align DPP and the VIX on 8,822 common
			daily dates before averaging (3 DPP dates lack VIX quotes). Panel~B
			estimates $10{,}000\chi_t^{30}=\alpha+\beta\,\mathrm{VIX}_t+u_t$ and its
			first difference. Intercepts and RMSEs are in basis points of exposure,
			and slopes are in basis points of exposure per VIX point; parentheses give Bartlett HAC standard errors
			(12 lags). Panel~C reports unitless correlations and splits months by NBER phase, with recessions running
			from the month after a peak through the trough. Panel~D uses
			overlapping windows of consecutive months.
	\end{flushleft}
\end{table}

Table \ref{Tab:DPPVIXCorrelations} shows that the DPP and VIX share most
of their observed time-series variation.
Appendix~\ref{App:MacroReplication} examines the implications of substituting
the DPP for the VIX/VXO in three influential papers,
\citet{Bloom2009,LeducLiu2016,BasuBundick2017}.
Swapping the VIX/VXO for the DPP preserves the principal macroeconomic responses to
uncertainty shocks in these papers, although event dates and some inference
classifications based on subjective metrics differ (see, Appendix A.2).
The aggregate time-series information in the DPP thus closely matches that of the VIX/VXO.

\subsection{Intuition}
\label{sec:dpp_vix_intuition}

Why is the DPP so strongly temporally correlated with the VIX?
The close correspondence has a natural historical antecedent.
The original VIX, developed by \citet{Whaley1993}, combined the implied
volatilities of eight near-the-money S\&P~100 options to represent a
30-day AtM option.
As noted above, the DPP construction uses four near-the-money put prices
and shares this emphasis on near-the-money options.\footnote{The original VIX,
	subsequently designated VXO, used American OEX options, defined AtM
	relative to the current index, and interpolated implied volatilities.
	The DPP uses European SPX options, defines AtM relative to the forward,
	and interpolates prices.}
However, the VIX was redesigned in 2003 to extend to a broader range of strikes
to capture more tail behavior \citep{Whaley2009}.
The question is how much this broader strike coverage changes the
fluctuations already captured by an AtM price.

Black--Scholes formulas offer intuition.
For an exact forward-AtM put, its implied volatility
$\sigma_{\mathrm{AtM},t}(\tau)$ satisfies
\begin{equation}
	\chi_t^\tau
	=
	e^{-d_t\tau}
	\left[
	2\Phi\!\left(
	\frac{\sigma_{\mathrm{AtM},t}(\tau)\sqrt{\tau}}{2}
	\right)-1
	\right]
	\approx
	\frac{e^{-d_t\tau}}{\sqrt{2\pi}}\,
	\sigma_{\mathrm{AtM},t}(\tau)\sqrt{\tau}
	\notag
\end{equation}
where $\Phi$ is the standard normal distribution function and $d_t$ is
the dividend yield. This is an implied-volatility representation of the
option price, not an assumption that the actual risk-neutral distribution is lognormal.
At $\tau=30/365$ and $d_t=0.02$, this gives
$100 \cdot \sigma_{\mathrm{AtM},t}\approx876\chi_t^{30}$.
Thus, the DPP is approximately proportional to AtM implied volatility,
and one can assess the empirical departure from proportionality through the relative
gap,
$
	R_t
	=
	\mathrm{VIX}_t / (876\chi_t^{30})$,
and
$
	\Delta\log\mathrm{VIX}_t
	=
	\Delta\log\chi_t^{30}
	+
	\Delta\log R_t
	\notag
$.
Using matched monthly levels, the median ratio is 1.187.
The standard deviation of $\Delta\log R_t$ is 0.0312, compared with
0.1614 for $\Delta\log\mathrm{VIX}_t$.
The relative gap thus changes much less than the VIX itself, consistent
with a dominant common component.
This evidence, along with Figure \ref{fig:atm_time_series} (which plots the approximation $876\chi_t^{30}$), supports approximate proportionality.\footnote{Related evidence appears
	in \citet{Dotsis2024}, whose option-implied absolute-deviation index has a monthly level correlation of 0.990 with VIX over 1996--2021.}

\begin{figure}[t]\centering
\includegraphics[width=\linewidth]{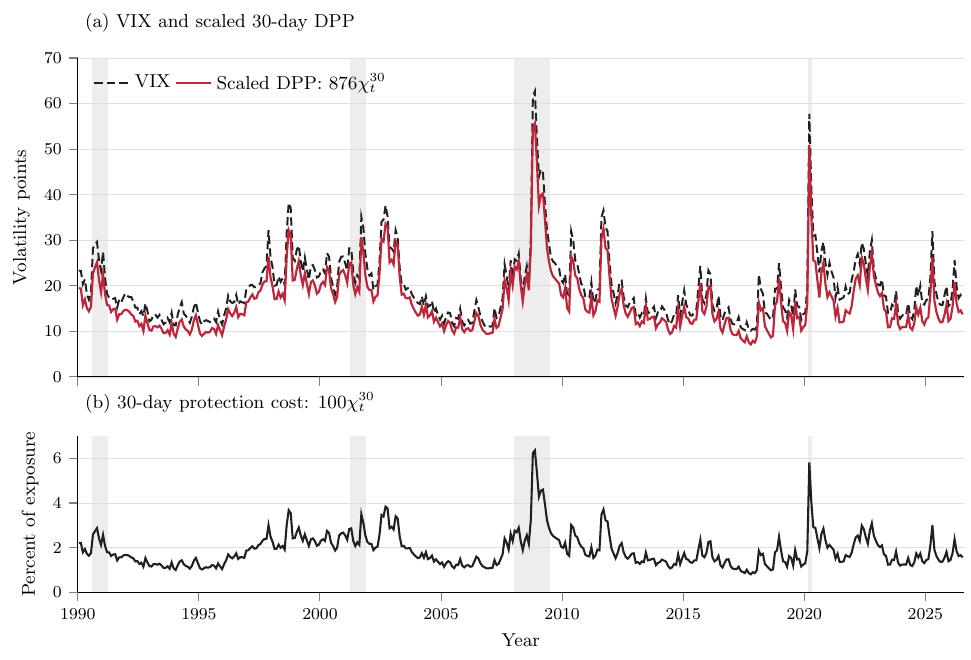}
\caption{Monthly 30-day downside protection premium and VIX. Panel (a) compares the VIX with the fixed benchmark $876\chi_t^{30}$ in volatility points. Panel (b) reports the same DPP in percent of current equity exposure, $100\chi_t^{30}$. Both series are arithmetic averages over identical daily dates. Shading marks NBER recession months, from the month after the peak through the trough. The scale 876 is fixed, not fitted to the VIX. The daily sample is 1990-01-02 through 2026-07-31, with 8,822 matched dates and 439 observed months. }\label{fig:atm_time_series}
\end{figure}

Beyond the Black-Scholes algebra, there is a common-scale interpretation that will
be revisited throughout the rest of the paper.
Let $X = S_{t+\tau}/F_{t,t+\tau}$ denote the forward-normalized terminal index value, so
that $E^\mathbb Q_t[X] = 1$ under the maintained carrying-cost assumptions.
The DPP and an idealized single-maturity log-contract volatility measure
are then determined by expectations of functions of the single variable $X$
\begin{equation}
	\chi_t^\tau
	=
	\frac{e^{-d_t\tau}}{2}\,
	E_t^{\mathbb Q}\!\left[\lvert X-1\rvert\right]
	\qquad
	v_t(\tau)^2
	=
	-\frac{2}{\tau}\,
	E_t^{\mathbb Q}\!\left[\log X\right]
	\notag
\end{equation}
where $v_t(\tau)$ denotes log-contract volatility,
distinct from the observed VIX with its discrete strikes and maturity interpolation.
The first identity expresses the DPP as one half of the discounted
risk-neutral mean absolute deviation of $X$.
Equivalently, by put--call parity, twice the DPP is the normalized
at-the-money-forward straddle price.
The second is the log-contract representation, where $v_t(\tau)^2$ equals
$2/\tau$ times the risk-neutral entropy
$\log E_t^{\mathbb Q}[X]-E_t^{\mathbb Q}[\log X]$.
The latter is recovered from a continuum of strikes at a single maturity
\citep{CarrWu2009Variance}.
To isolate the role of scale, hold carrying costs fixed and write
$X=1+s_tZ$,
where $Z$ has a fixed risk-neutral distribution with bounded support,
mean zero, and unit variance, and $s_t>0$ is small enough that $X>0$
almost surely.
The DPP is then \emph{exactly} linear in $s_t$
$
	\chi_t^\tau
	=
	0.5 \;  e^{-d_t\tau}
	s_t\,E^{\mathbb Q}\!\left[\lvert Z\rvert\right].
$
The log-contract volatility measure is linear only to first order.
Expanding the log contract as $s_t$ approaches $0$,
\begin{equation}
	v_t(\tau)
	=
	\frac{s_t}{\sqrt{\tau}}
	\left(
	1-\frac{\mu_3}{3}\,s_t+O\!\left(s_t^2\right)
	\right)
	\qquad
	\mu_3\equiv E^{\mathbb Q}\!\left[Z^3\right]
	\notag
\end{equation}
Several points are noteworthy.
First, a pure change in scale moves both measures proportionally to
first order, whatever the shape of $Z$.
Second, skewness enters the first correction to proportionality,
but this correction depends jointly on $\mu_3$ and $s_t$.
The ratio of DPP to log-contract volatility therefore varies with scale even when skewness is
fixed, and higher moments enter at subsequent orders.
This is the sense in which the two measures share a first-order loading
on scale while remaining distinct.\footnote{This mechanism is consistent
	with the prominent level factor in implied-volatility surface variation
	documented by \citet{ContDaFonseca2002}.}
Finally, this is a static comparison, at a single maturity, under the
maintained risk-neutral pricing assumptions.
While it does not specify dynamics for $s_t$, there is a conditional dynamic
implication.
Because the two measures are approximately linear in $s_t$, persistent
variation in dispersion can generate persistent and highly correlated
time series when the higher-order corrections are small.
A common scale component is therefore one explanation for the dominant
shared variation in the VIX and DPP.
Their high correlation does not establish that the shape of the priced
distribution is unchanged; distributional shape and valuation can also vary.
The next section uses the local strike slope of put prices to distinguish
the risk-neutral probability of a shortfall from its conditional severity.
Put--call parity also gives the premium an upside representation, so this
interpretation concerns conditional movements on both sides of the forward.
The price and probability contain the same local information as the
AtM price and implied-volatility slope, while the decomposition expresses
that information in probability and conditional-magnitude units.

\section{Decomposing the DPP}
\label{Sec:DecomposeDPP}

The preceding section establishes that the DPP closely tracks the VIX
and preserves the principal macroeconomic responses in the specifications examined.
I now recover its probability and severity components and establish two results:
conditional severity accounts for nearly all monthly premium comovement
and most of the recession premium, but a \emph{common} expansion of the priced
distribution does not generally account for quotations across strikes.

\subsection{The Probability--Severity Decomposition}

Let $\tilde R_t^\tau=S_{t+\tau}/F_{t,t+\tau}$ denote the terminal index
value relative to its forward, and let $D_t^\tau=e^{-d_t\tau}$ denote
the carry factor.
Under the maintained pricing assumptions,
$\mathbb E_t^{\mathbb Q}[\tilde R_t^\tau]=1$, and the AtM-forward put price over spot satisfies
\begin{equation}
	\chi_t^\tau
	=D_t^\tau\,
	\underbrace{\mathbb Q_t(\tilde R_t^\tau<1)}_{q_t^\tau}
	\underbrace{\mathbb E_t^{\mathbb Q}[1-\tilde R_t^\tau
		\mid\tilde R_t^\tau<1]}_{m_t^\tau}.
	\label{eq:chi_summary_factor}
\end{equation}
Here $q_t^\tau$ is shortfall probability and $m_t^\tau$ is conditional
shortfall severity.
A shortfall means finishing below the forward, rather than crossing a
fixed disaster threshold.
Both components are risk-neutral quantities.
Removing carry gives the adjusted premium
$a_t^\tau\equiv\chi_t^\tau/D_t^\tau=q_t^\tau m_t^\tau$.
Its price alone identifies the product; a local option-price slope will
provide the additional information needed to recover the components.

The product identity gives a direct interpretation of premium changes.
Along a differentiable distributional path with a changing premium,
\begin{equation}
	\eta_q^\tau\equiv\frac{d\log q^\tau}{d\log a^\tau}\qquad
	\eta_m^\tau\equiv\frac{d\log m^\tau}{d\log a^\tau}\qquad
	\eta_q^\tau+\eta_m^\tau=1
	\label{eq:dpp_decomposition_elasticities}
\end{equation}
A unit severity elasticity means that conditional losses rise
proportionately with the premium while shortfall probability is unchanged.
The forward-mean restriction also equates expected losses and gains
\begin{equation}
	a_t^\tau=q_t^\tau m_t^\tau=(1-q_t^\tau)m_t^{+,\tau}\qquad
	\frac{m_t^\tau}{m_t^{+,\tau}}=\frac{1-q_t^\tau}{q_t^\tau}
	\label{eq:dpp_upside_identity}
\end{equation}
where $m_t^{+,\tau}$ is the expected proportional gain conditional on
finishing at or above the forward.
Thus $q_t^\tau<1/2$ implies larger conditional priced losses than gains.
This asymmetry expresses the same probability information in
conditional-loss units; it is distinct from third-moment skewness.

\subsection{Severity Dominance in Scale Benchmarks}

A distribution that expands around the forward without changing shape
provides the simplest explanation of severity dominance.

\begin{proposition}[Scale and conditional severity]
	\label{prop:dpp_scale_severity}
	Suppose $\tilde R_t^\tau=1+s_tZ$, where $s_t>0$, $Z$ has a fixed
	risk-neutral distribution with mean zero and unit variance, and
	$1+s_tZ>0$ almost surely at each date considered. Then
	\begin{equation}
		q_t^\tau=\Pr(Z<0),\qquad
		m_t^\tau=s_t\mathbb E[-Z\mid Z<0],\qquad
		a_t^\tau=s_t\mathbb E[(-Z)^+].
		\label{eq:dpp_scale_components}
	\end{equation}
	Shortfall probability is constant, and all proportional premium
	variation is attributable to conditional severity.
\end{proposition}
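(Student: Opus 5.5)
The argument substitutes the scale representation directly into the definitions of $q_t^\tau$, $m_t^\tau$ and $a_t^\tau$ in \eqref{eq:chi_summary_factor}. The key observation is that for $s_t>0$ the events $\{\tilde R_t^\tau<1\}$ and $\{Z<0\}$ coincide, since $1+s_tZ<1$ if and only if $s_tZ<0$, i.e.\ $Z<0$. The positivity requirement $1+s_tZ>0$ is only needed so that $\tilde R_t^\tau$ is an admissible gross return. Because $Z$ has a fixed risk-neutral law, $q_t^\tau=\mathbb Q_t(Z<0)=\Pr(Z<0)$, which does not depend on $t$.

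Next I compute the conditional severity. On the event $\{Z<0\}$ we have $1-\tilde R_t^\tau=-s_tZ$, so linearity of conditional expectation gives
\[
m_t^\tau=\mathbb E[-s_tZ\mid Z<0]=s_t\,\mathbb E[-Z\mid Z<0].
\]
This requires $\Pr(Z<0)>0$, which holds: a nondegenerate mean-zero, unit-variance $Z$ must put positive mass below zero. The same fact makes the conditional expectation finite, since $\mathbb E|Z|\le 1$ by Jensen's inequality. For the premium, write $a_t^\tau=q_t^\tau m_t^\tau=s_t\,\Pr(Z<0)\,\mathbb E[-Z\mid Z<0]=s_t\,\mathbb E[(-Z)\mathbf 1\{Z<0\}]=s_t\,\mathbb E[(-Z)^+]$. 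As a cross-check, one can compute $a_t^\tau=\mathbb E^{\mathbb Q}_t[(1-\tilde R_t^\tau)^+]$ directly from the put payoff, which is consistent with the half-mean-absolute-deviation identity of Section~\ref{sec:dpp_vix_intuition} because $\mathbb E[Z]=0$.

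For the concluding claim, I take logs: $\log a_t^\tau=\log s_t+\text{const}$ and $\log m_t^\tau=\log s_t+\text{const}$, while $\log q_t^\tau$ is constant. Along any path on which $s_t$, and hence $a_t^\tau$, changes, the elasticities in \eqref{eq:dpp_decomposition_elasticities} are therefore $\eta_q^\tau=0$ and $\eta_m^\tau=1$. More generally, for any two dates the log premium difference equals the log severity difference exactly, with no differentiability needed.

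There is no real obstacle here. The only points needing care are the strict inequality in the definition of a shortfall, which matches $\{Z<0\}$ exactly, and stating that $\Pr(Z<0)>0$ so that the conditional expectation is well defined.
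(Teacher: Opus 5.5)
Your proof is correct and follows the paper's own argument. Like the paper, you use $s_t>0$ to identify $\{\tilde R_t^\tau<1\}$ with $\{Z<0\}$, substitute $1-\tilde R_t^\tau=-s_tZ$, and read off the probability, conditional expectation, and expected put payoff. Your extra remarks make explicit details the paper leaves implicit: $\Pr(Z<0)>0$ and $\mathbb E|Z|\le 1$ ensure $m_t^\tau$ is well defined, and the log computation gives $\eta_q^\tau=0$ and $\eta_m^\tau=1$.
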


\begin{proof}
	Since $s_t>0$, the events $\tilde R_t^\tau<1$ and $Z<0$ coincide,
	and $1-\tilde R_t^\tau=-s_tZ$.
	Taking the probability, conditional expectation, and expected put
	payoff gives the three equalities.
\end{proof}

The distribution of $Z$ may be asymmetric, so the constant probability
need not equal one-half.
The restriction is fixed shape as scale changes.
Severity can also dominate when probability changes with scale.
In the mean-one lognormal benchmark,
$\tilde R_t^\tau=\exp(-v_t^2/2+v_t\varepsilon)$ with
$\varepsilon\sim N(0,1)$, probability and severity satisfy
\begin{equation}
	q_t^\tau=\frac{1+a_t^\tau}{2}\qquad
	m_t^\tau=\frac{2a_t^\tau}{1+a_t^\tau}\qquad
	\eta_q^\tau=\frac{a_t^\tau}{1+a_t^\tau}\qquad
	\eta_m^\tau=\frac{1}{1+a_t^\tau}
	\label{eq:dpp_lognormal_components}
\end{equation}
At premiums of four percent or less, the severity elasticity exceeds
96 percent.
Doubling the premium from two to four percent raises probability from
51 to 52 percent and severity from 3.92 to 7.69 percent.
Severity accounts for 97.2 percent of that log premium increase.

These benchmarks explain why severity dominance alone cannot distinguish
dispersion from distributional shape.
They also give different probability restrictions---constant $q$ under
common scale and $q>1/2$ under lognormality.
A fixed nonlinear payoff map can produce still another pattern.
The fixed-curvature example in Appendix~\ref{App:DPPFixedCurvature}
shows that negative curvature can place $q$ below one-half and make it
decline with scale locally.
The empirical exercise therefore recovers probability without imposing
these families, then tests common scale using the wider quotation cross section.

\subsection{Recovering the Components}
\label{Sec:IdentifyingDPPComponents}

Recovering the decomposition requires one value of the risk-neutral
distribution function, rather than its entire density.
The first strike derivative of the put-price curve supplies that value
\citep{BreedenLitzenberger1978}.

\begin{proposition}[Identifying probability and severity]
	\label{prop:dpp_identification}
	Fix a maturity $\tau$ and forward $F_{t,t+\tau}$.
	Under the maintained pricing assumptions, suppose there is no atom
	at the forward and the AtM-forward put has a positive price.
	Its price and local strike slope identify
	\begin{equation}
		q_t^\tau=e^{r_t\tau}
		\left.\frac{\partial P_t(K,\tau)}{\partial K}\right|_{K=F_{t,t+\tau}}
		\qquad m_t^\tau=\frac{a_t^\tau}{q_t^\tau}
		\label{eq:dpp_slope_identification}
	\end{equation}
\end{proposition}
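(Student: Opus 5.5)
The plan is the standard Breeden--Litzenberger argument applied at a single strike, followed by the product identity \eqref{eq:chi_summary_factor}. I begin by writing the put price under the maintained risk-neutral pricing assumptions as a discounted expected payoff,
\[
P_t(K,\tau)=e^{-r_t\tau}\,\mathbb E_t^{\mathbb Q}\bigl[(K-S_{t+\tau})^+\bigr].
\]
Differentiating in $K$ formally gives $\partial P_t/\partial K=e^{-r_t\tau}\,\mathbb Q_t(S_{t+\tau}<K)$. Evaluated at $K=F_{t,t+\tau}$, the event $\{S_{t+\tau}<F_{t,t+\tau}\}$ coincides with $\{\tilde R_t^\tau<1\}$. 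Multiplying by $e^{r_t\tau}$ therefore yields $q_t^\tau$.

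The step that needs care is justifying the derivative. I will avoid appealing to a density and work with difference quotients directly. For $h>0$,
\[
\frac{(K+h-S)^+-(K-S)^+}{h}
\]
lies in $[0,1]$. As $h\downarrow0$ it converges pointwise to $\mathbf 1\{S\le K\}$. As $h\uparrow0$ (writing the quotient over $h<0$) it converges to $\mathbf 1\{S<K\}$. Bounded convergence then shows the following:
\begin{itemize}
\item the right derivative of $P_t(\cdot,\tau)$ is $e^{-r_t\tau}\mathbb Q_t(S\le K)$;
\item the left derivative is $e^{-r_t\tau}\mathbb Q_t(S<K)$.
\end{itemize}
The no-atom assumption at the forward makes the two one-sided derivatives coincide at $K=F_{t,t+\tau}$. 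So the put-price curve is differentiable there, and the slope equals $e^{-r_t\tau}q_t^\tau$. This is exactly where the hypothesis enters, and it is the main (if modest) obstacle. Without it the strike slope is not defined, and only the bounds $\mathbb Q_t(\tilde R<1)\le e^{r_t\tau}\partial_-P$ and $e^{r_t\tau}\partial_+P\le \mathbb Q_t(\tilde R\le 1)$ would be available.

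For severity, I use that $a_t^\tau=\chi_t^\tau/D_t^\tau$ is observed from the AtM-forward price alone, via \eqref{Eqn:AtMPut}. By \eqref{eq:chi_summary_factor}, $a_t^\tau=q_t^\tau m_t^\tau$. Positivity of the put price implies $\mathbb E^{\mathbb Q}_t[(1-\tilde R_t^\tau)^+]>0$, which forces $q_t^\tau>0$. The division $m_t^\tau=a_t^\tau/q_t^\tau$ is therefore well defined, and it recovers the conditional expectation in \eqref{eq:chi_summary_factor}.

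It remains to check the carry normalizations. Dividing the expected payoff by $S_t$ and using $F_{t,t+\tau}=S_te^{(r_t-d_t)\tau}$ gives
\[
\chi_t^\tau=e^{-d_t\tau}\,\mathbb E^{\mathbb Q}_t\bigl[(1-\tilde R_t^\tau)^+\bigr]
= D_t^\tau\, q_t^\tau m_t^\tau .
\]
This confirms that the same $q_t^\tau$ obtained from the slope enters the factorization consistently.
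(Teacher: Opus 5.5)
Your proposal is correct and follows the paper's own argument. The paper writes the put as a discounted expected payoff, differentiates under the expectation by appealing to Lipschitz continuity of the payoff, and then divides the adjusted premium by $q_t^\tau>0$; your one-sided difference quotients with bounded convergence make the differentiation step explicit and reproduce the paper's remark that the left derivative prices $\{S<K\}$ and the right derivative prices $\{S\le K\}$. One small correction to your closing aside on atoms: the relations you display hold with equality, so the left derivative alone identifies $q_t^\tau$ exactly, even without the no-atom hypothesis.
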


Differentiating the discounted expected put payoff gives the discounted
probability of finishing below the strike; the product identity then
gives severity.
Appendix~\ref{App:DPPIdentification} provides the proof and the
convexity bounds for discretely quoted strikes.
The price--probability pair also has a familiar smile interpretation:
at a fixed maturity, the gap $q_t^\tau-(1+a_t^\tau)/2$ equals the local
implied-volatility slope times a known positive factor.
The decomposition expresses this local option information in
probability and conditional-loss units
(Appendix~\ref{App:DPPLocalSlope}).

For the baseline recovery, I retain the original quotation selection,
parity-implied forwards, and selected maturities.
At each maturity $u$, let $K_t^-<F_{t,t+u}<K_t^+$ be the nearest
quoted put strikes bracketing the forward.
Using bid--ask midpoints,
\begin{equation}
	\widehat q_t^u=e^{r_tu}
	\frac{P_t(K_t^+,u)-P_t(K_t^-,u)}{K_t^+-K_t^-}\qquad
	\widehat m_t^u=\frac{a_t^u}{\widehat q_t^u} \notag
\end{equation}
The put spread averages the cumulative probability across the bracket
and approximates its value at the forward.
It imposes neither lognormality nor fixed shape.
Because the same bracket constructs the AtM premium, the midpoint
interpolation is a recovery choice, rather than exact identification
from finitely many quotations.

Carry-adjusted maturity weights preserve the original daily DPP.
Monthly probability averages daily probabilities with carry weights;
monthly severity additionally weights by shortfall probability.
With these weights, the monthly adjusted premium satisfies
\begin{equation}
	A_M\equiv\frac{\bar\chi_M}{\bar D_M}=q_Mm_M.
	\label{eq:dpp_monthly_product}
\end{equation}
These components summarize daily forward-looking distributions.
They are not components of a realized calendar-month return, nor does
maturity interpolation separately identify an exactly traded 30-day law.
Appendix~\ref{App:DPPAggregation} gives the weighting formulas and
sample checks.
The baseline contains 8,791 daily observations and 439 months from
January 1990 through July 2026.

\subsection{Severity Dominance in the Data}
\label{Sec:DPPComponentsData}

\paragraph{Monthly premium movements.}
The identity $\Delta\log A_M=\Delta\log q_M+\Delta\log m_M$
gives an exact covariance decomposition
\begin{equation}
	\beta_q\equiv
	\frac{\operatorname{Cov}(\Delta\log q_M,\Delta\log A_M)}
	{\operatorname{Var}(\Delta\log A_M)}\qquad
	\beta_m=1-\beta_q
	\label{eq:dpp_empirical_elasticities}
\end{equation}
These empirical elasticities are OLS slopes with an intercept on
identical adjacent monthly observations.
They measure comovement, rather than causal effects or local
derivatives along a specified distributional family.
Panel A of Table~\ref{Tab:DPPDecomposition} reports a severity
elasticity of 1.032, with a 95 percent interval of $[0.994,1.071]$.
The probability elasticity is $-0.032$, with an interval of
$[-0.071,0.006]$.
Thus conditional severity moves approximately one-for-one with the
premium, while probability contributes little to its monthly movements.
Probability also fluctuates substantially less: its log-change standard
deviation is 0.044, compared with 0.183 for severity.
Figure~\ref{fig:dpp_log_decomposition} displays this contrast on a
common proportional scale.

\begin{table}[tbp]
	\centering\small
	\setlength{\tabcolsep}{4pt}
	\renewcommand{\arraystretch}{1.12}
	\caption{Monthly Comovement and the Recession Premium}
	\label{Tab:DPPDecomposition}
	\label{Tab:DPPMonthlyDecomposition}
	\label{Tab:DPPRecessionDecomposition}
	\begin{tabular*}{\textwidth}{@{\extracolsep{\fill}}lrrrr@{}}
		\toprule
		\multicolumn{5}{@{}l}{\textit{Panel A: Monthly comovement}}\\
		\addlinespace[0.3em]
		Component & Elasticity & HAC s.e. & 95\% interval
		& $\operatorname{sd}(\Delta\log z_M)$\\
		\midrule
		Adjusted premium, $A_M$ & --- & --- & --- & $0.172$\\
		Probability, $q_M$ & $-0.032$ & $0.020$ & $[-0.071,0.006]$ & $0.044$\\
		Severity, $m_M$ & $1.032$ & $0.020$ & $[0.994,1.071]$ & $0.183$\\
		\bottomrule
	\end{tabular*}
	\vspace{0.8em}
	\begin{tabular*}{\textwidth}{@{\extracolsep{\fill}}lrrrrr@{}}
		\toprule
		\multicolumn{6}{@{}l}{\textit{Panel B: The recession premium}}\\
		\addlinespace[0.3em]
		Component & \shortstack[r]{Expansion\\mean (\%)}
		& \shortstack[r]{Recession\\mean (\%)} & $\mathcal G(z)$
		& HAC s.e. & \shortstack[r]{Contribution (\%)\\{}[95\% interval]}\\
		\midrule
		Adjusted premium, $A_M$ & $1.77$ & $3.17$ & $0.576$ & $0.109$ & $100.0$\\
		Probability, $q_M$ & $42.31$ & $44.41$ & $0.049$ & $0.014$
		& \shortstack[r]{$8.5$\\$[3.7,13.3]$}\\
		Severity, $m_M$ & $4.18$ & $7.16$ & $0.527$ & $0.105$
		& \shortstack[r]{$91.5$\\$[86.7,96.3]$}\\
		\bottomrule
	\end{tabular*}
	\par\smallskip
	\begin{minipage}{\textwidth}\footnotesize
		\textit{Notes:} January 1990--July 2026: 439 months, 438 adjacent
		changes, 36 recession months, and 403 expansion months.
		Recessions run from the month after an NBER peak through the trough.
		Phase means are arithmetic means; $\mathcal G(z)$ is the difference
		in mean log levels. Contributions equal $100\mathcal G(z)/\mathcal G(A)$;
		products of phase means need not equal the phase mean premium.
		Bartlett HAC uses six monthly lags and the correction $n/(n-2)$.
		Share intervals use the delta method and joint gap covariance.
		Intervals condition on recovered components.
	\end{minipage}
\end{table}

\begin{figure}[t!]
	\centering
	\input{dpp_log_decomposition.pgf}
	\caption{The DPP decomposition over time}
	\label{fig:dpp_log_decomposition}
	\begin{minipage}{\textwidth}
		\footnotesize
		\textit{Notes:}
		Each series is $100(\log z_M-\overline{\log z})$ for
		$z\in\{q,m\}$, where the bar denotes the full-sample mean.
		The sum of the two components equals the centered log adjusted
		premium at every month; the premium is not separately plotted.
		Gray shading denotes recession months, from the month after
		the business-cycle peak through the trough.
		The sample runs from January 1990 through July 2026.
		The lines show recovered component estimates; uncertainty
		about component recovery is not represented by bands.
	\end{minipage}
\end{figure}

\paragraph{The recession premium.}
Let $\mathcal G(z)$ denote the recession mean log level minus its
expansion counterpart.
On identical observations,
\begin{equation}
	\mathcal G(A)=\mathcal G(q)+\mathcal G(m)
	\label{eq:dpp_recession_decomposition}
\end{equation}
Panel B shows that severity accounts for 91.5 percent of this log
premium gap, with a 95 percent interval of $[86.7,96.3]$ percent.
Mean severity rises from 4.18 percent in expansions to 7.16 percent in
recessions, while mean probability rises from 42.31 to 44.41 percent.
Probability makes a smaller but positive contribution: its log gap is
0.049, with an interval of $[0.021,0.077]$.
The result is severity dominance, rather than an exactly constant
shortfall probability.
The phase probabilities below one-half also differ from the lognormal
benchmark, while the positive recession gap conflicts with a single
common shape spanning the full sample.

\paragraph{Recovery and stability.}
Table~\ref{Tab:DPPRecoveryCommon} compares five recoveries on common
daily dates: put spreads, call spreads, a smooth curve, and entropy
reconstructions with two reference distributions.
Severity elasticities range from 1.016 to 1.055, and recession severity
shares from 90.8 to 93.5 percent.
The smooth and entropy methods jointly recover the premium and
probability, so this agreement does not depend on fixing the original
premium or on the baseline's local interpolation.

\begin{table}[tbp]
	\centering\small
	\caption{Component Recovery on a Common Daily Sample}
	\label{Tab:DPPRecoveryCommon}
	\begin{tabular}{lcc}
		\toprule
		Method & Severity elasticity & Recession severity share (\%)\\
		\midrule
		Baseline put spread & \shortstack{$1.039$\\$[0.997,1.081]$}
		& \shortstack{$91.2$\\$[86.1,96.3]$}\\
		Call spread & \shortstack{$1.055$\\$[1.017,1.092]$}
		& \shortstack{$92.0$\\$[87.2,96.9]$}\\
		Smooth curve & \shortstack{$1.042$\\$[1.009,1.076]$}
		& \shortstack{$91.9$\\$[87.0,96.9]$}\\
		Entropy, lognormal reference & \shortstack{$1.016$\\$[0.972,1.060]$}
		& \shortstack{$90.8$\\$[86.0,95.6]$}\\
		Entropy, uniform reference & \shortstack{$1.027$\\$[0.985,1.069]$}
		& \shortstack{$93.5$\\$[89.1,97.9]$}\\
		\bottomrule
	\end{tabular}
	\par\smallskip
	\begin{minipage}{\textwidth}\footnotesize
		\textit{Notes:} All methods use the same 7,078 daily dates,
		408 months, and 398 adjacent monthly changes. Brackets contain
		95 percent intervals under the same six-lag HAC procedure.
		Smooth and entropy methods use their own jointly fitted premium.
		Inference conditions on recovered components.
	\end{minipage}
\end{table}

The conclusion also survives wider strike brackets, quotation-quality
filters, minimum monthly coverage, rolling windows, alternative
recession comparisons, and quarterly-horizon sampling.
For example, omitting individual recessions yields severity shares of
90.0--93.3 percent, while quarterly severity loadings range from
0.951 to 1.023 across the reported recovery and sampling conventions.
Appendices~\ref{App:DPPTimeStability} and~\ref{App:DPPRecoveryRobustness}
report these checks, coverage, and conditional inference.
The separate bid--ask quotation envelopes are too wide to establish
dominance without recovery assumptions; the robust result is the
agreement across the implemented recoveries, not identification of the
decomposition from quotation intervals alone.

\begin{figure}[t!]
	\centering
	\input{dpp_rolling_elasticities_120m.pgf}
	\caption{Probability and severity elasticities in rolling ten-year windows}
	\label{fig:dpp_rolling_elasticities}
	\begin{minipage}{\textwidth}
		\footnotesize
		\textit{Notes:}
		Each estimate uses 120 adjacent monthly log changes and an
		intercept; the horizontal axis gives the final month of the
		estimation window.
		Bands are pointwise 95 percent confidence intervals using
		Newey--West covariance with six monthly lags and the correction
		$n/(n-2)$.
		The two elasticities sum to one in every window.
		The panels use equal vertical ranges, centered on the fixed-shape
		benchmarks of one for severity and zero for probability,
		shown by dashed lines.
		Gray shading denotes recession months at the window endpoint.
		Inference conditions on the recovered components; the bands
		describe uncertainty in the regression elasticities, not in
		monthly component levels.
	\end{minipage}
\end{figure}

The evidence establishes a stable empirical composition of priced
downside risk: conditional severity accounts for nearly all monthly
premium comovement and roughly nine-tenths of the recession log premium.
Figure~\ref{fig:dpp_rolling_elasticities} shows that the severity elasticity
remains close to one across rolling ten-year windows, ranging from 0.987
to 1.097.
The complementary probability elasticity remains close to zero,
indicating that severity dominance is not confined to a single period.
Because the DPP closely tracks the VIX, this decomposition gives
economic content to the variation used in volatility-based macroeconomic studies.
The remaining question is whether that composition can be explained
by a common change in scale.

\subsection{Beyond a Common Change in Scale}
\label{Sec:CommonScaleInterpretation}

Severity dominance resembles the common-scale benchmark at the forward.
That benchmark also restricts option prices away from the forward:
the premium change determines how far the whole distribution expands.
I test this additional restriction against quotations across dates and
strikes, then relax it to distinguish probability changes from changes
in the shape of conditional losses.

Write $X_t=\tilde R_t^\tau$ and
$\mathcal P_t(k)=\mathbb E_t^{\mathbb Q}[(k-X_t)^+]$, where $k=K/F$.
Under Proposition~\ref{prop:dpp_scale_severity},
\begin{equation}
	\frac{\mathcal P_t(k)}{a_t}
	=C\!\left(\frac{k-1}{a_t}\right) \qquad C(0)=1
	\label{eq:dpp_scale_put_curve}
\end{equation}
with one convex function $C$ common to all dates in a comparison window.
I allow a different candidate shape in each calendar month and ask
whether any common shape within that month can fit the selected
quotations inside their original bid--ask bands.
The exercise retains the original forwards, carrying costs, component
AtM midpoint anchors, and maturity weights.
Each component law retains its quotation and moment restrictions;
common scale applies to the carry-adjusted daily mixture.
Thus the test conditions on the premium construction rather than
optimizing its anchors within quotation intervals.

Table~\ref{Tab:DPPShapeCompatibility} reports the 6,966-date sample
covering 405 months.
Full common scale is excluded in 352 months, or 86.9 percent.
Exclusions are certified in an enlarged distribution class that
contains the strict positive-terminal-value benchmark.
An exclusion therefore rules out that benchmark, whereas a verified
fit in the outer class need not establish an attainable strictly
positive distribution.
These are joint quotation-compatibility results, rather than sampling
rejection frequencies.

\begin{table}[tbp]
	\centering\small
	\caption{Compatibility of Scale Restrictions with Option Quotations}
	\label{Tab:DPPShapeCompatibility}
	\begin{tabular}{lrrrr}
		\toprule
		Restriction within each month & Months & Excluded & Outer fit & Unresolved\\
		\midrule
		Full distribution: common scale & 405 & 352 & 53 & 0\\
		Conditional downside shape: constant $q$ & 405 & 345 & 55 & 5\\
		Conditional downside shape: varying $q$ & 405 & 112 & 138 & 155\\
		\bottomrule
	\end{tabular}
	\par\smallskip
	\begin{minipage}{\textwidth}\footnotesize
		\textit{Notes:} The sample covers 405 months between January 1990
		and July 2026. All rows retain original quotation bands, fixed
		component anchors, and maturity construction. The last two rows
		allow the upside distribution to vary across dates. Exclusions
		are certified in an outer class; outer fits are verified numerical
		witnesses in that class. Unresolved outcomes establish neither
		exclusion nor a joint fit and remain in the denominators.
		Verified fits are carried into nested larger classes.
	\end{minipage}
\end{table}

To isolate conditional downside shape, let
\begin{equation}
	(1-X_t)\mid\{X_t<1\}=m_tU,\qquad
	\mathbb E^{\mathbb Q}[U]=1,\qquad m_t=\frac{a_t}{q_t},
	\label{eq:dpp_conditional_scale}
\end{equation}
where $U$ has a common conditional distribution within the month,
while the upside distribution can vary freely across dates.
For $k\leq1$, this implies
\begin{equation}
	\frac{\mathcal P_t(k)}{a_t}
	=H\!\left(\frac{(1-k)q_t}{a_t}\right),\qquad
	H(z)=\mathbb E^{\mathbb Q}[(U-z)^+].
	\label{eq:dpp_conditional_put_curve}
\end{equation}
I compare an unknown common $q$ with an unknown separate $q_t$ at
each date, retaining all component quotation and moment restrictions.

Most full-scale exclusions survive freeing the upside shape with
345 of the 352 excluded months remain excluded under constant-$q$
conditional scale.
Allowing probability to vary restores outer compatibility in 80 of
those months.
Nevertheless, 112 of all 405 months---27.7 percent---remain excluded
even when each date has its own probability.
In these months, probability adjustment alone cannot reconcile a
common conditional downside shape with the maintained quotations.
The varying-$q$ calculation also yields 138 outer fits and 155
unresolved months, so 27.7 percent is a lower bound on the frequency
of required conditional-shape departures under these restrictions.

For the full-scale exclusions, the median certified lower bound on
the additional price-band allowance is 1.39 percent of the daily
adjusted premium.
This is a minimum price discrepancy beyond the original bid--ask bands,
not an estimate of a macroeconomic effect or a sufficient allowance
to restore compatibility.
Appendix~\ref{App:DPPScaleRestrictions} provides the construction,
numerical criteria, price-distance diagnostics, and window comparisons.

The section therefore supplies two complementary empirical restrictions.
At the forward, movements in the price of protection are overwhelmingly
associated with conditional shortfall severity.
Across strikes, those movements frequently require more than a common
rescaling, including changes in conditional downside shape in a
substantial subset of months.
Together, these results sharpen what a model matching stock-market
volatility must explain: the composition of expected shortfall and its
relation to the wider option-price cross section.
The next section examines their macroeconomic interpretation.

	\section{Interpreting Uncertainty Shocks}
\label{Sec:InterpretingUncertainty}

Section~\ref{Sec:DecomposeDPP} establishes that conditional severity
accounts for most movements in the price of downside protection.
Does this common composition also imply a common macroeconomic response?
I address this question using the realized-volatility and
future-volatility-news innovations of
\citet{BergerDewBeckerGiglio2020SecondMomentNews}.
Both raise the DPP primarily through conditional severity, yet their
employment responses remain different at the point estimates after
equalizing the impact increase in severity.
I then explain how these observations restrict a structural model's
mapping from economic risk to option prices and macroeconomic activity.

\subsection{Severity Dominance and Macroeconomic Responses}
\label{sec:SevDominanceMacro}
\label{sec:bdg}

\citet{BergerDewBeckerGiglio2020SecondMomentNews} distinguish an
innovation in realized stock-market volatility from news about future
volatility orthogonal to the contemporaneous realized-volatility
innovation.
They find a larger employment decline following the realized-volatility
innovation, despite comparable cumulative effects on expected future
volatility.
This distinction connects uncertainty to the macroeconomic literature
on news: information about future conditions can affect current
decisions before those conditions materialize
\citep{BeaudryPortier2006StockPrices,BarskySims2011NewsShocks,
	BeaudryPortier2014NewsDriven}.
The DPP decomposition adds two questions: whether the innovations
reprice different shortfall components, and whether their employment
contrast survives expressing them in common impact-severity units.

\paragraph{Innovation construction and option responses.}
I reproduce the authors' innovation construction using their
February 1983--December 2014 sample.
Following their replication code, the realized-volatility innovation
is scaled to match the news innovation's cumulative effect on expected
future volatility over months 1--23.
On the overlapping 1990--2014 option sample, I estimate local projections
of $\log z_{t+h}-\log z_{t-1}$ on both innovations and lagged controls,
for $z\in\{A,q,m\}$.
The monthly option measures summarize the final three trading sessions,
rather than the full-month averages in Section~\ref{Sec:DecomposeDPP},
and preserve $A=qm$.
Identical observations and regressors therefore imply
\begin{equation}
	\operatorname{IRF}_{\log A}(h)
	=\operatorname{IRF}_{\log q}(h)
	+\operatorname{IRF}_{\log m}(h)
	\label{eq:interpretation_component_responses}
\end{equation}
The identity holds in every joint bootstrap draw.
Macroeconomic projections use the same observations, innovations, and
controls, and inference re-estimates the first stage and volatility
normalization in each draw.
Appendix~\ref{app:bdg_recovery} gives the estimation and sampling details.

Table~\ref{tab:bdg_components} shows that both impact premium responses
are dominated by severity.
The probability-to-premium response ratios are $-0.031$ for realized
volatility and $-0.082$ for news.
The positive severity responses rule out an impact premium increase
operating entirely through shortfall probability under this
identification and recovery.
The imprecise ratio difference does not establish equal compositions,
and the probability intervals do not establish an exactly zero response.
These innovation-specific ratios are distinct from the aggregate
monthly covariance loading in Section~\ref{Sec:DecomposeDPP}.

\begin{table}[!htbp]
	\centering
	\caption{Impact responses of the protection premium and its components}
	\label{tab:bdg_components}
	\small
	\begin{tabular*}{\textwidth}{@{\extracolsep{\fill}}lccc@{}}
		\toprule
		\multicolumn{4}{@{}l}{
			\textit{A. Cumulative log responses, multiplied by 100}}\\[3pt]
		Innovation & Premium $A$ & Probability $q$ & Severity $m$ \\
		\midrule
		Realized volatility & $8.76$ & $-0.27$ & $9.03$ \\
		& $[3.87,\,17.42]$ & $[-1.54,\,0.52]$
		& $[3.93,\,18.56]$ \\[4pt]
		Future-volatility news & $11.44$ & $-0.94$ & $12.38$ \\
		& $[10.26,\,12.14]$ & $[-2.03,\,0.22]$
		& $[10.84,\,13.33]$ \\
		\midrule
		\multicolumn{4}{@{}l}{
			\textit{B. Probability-to-premium impact response ratios}}\\[3pt]
		& Realized volatility & News & RV minus news \\
		\midrule
		Ratio or difference & $-0.031$ & $-0.082$ & $0.051$ \\
		& $[-0.120,\,0.060]$ & $[-0.187,\,0.019]$
		& $[-0.093,\,0.202]$ \\
		\bottomrule
	\end{tabular*}
	\par\smallskip
	\begin{minipage}{\textwidth}\footnotesize
		\emph{Notes:} There are 291 impact origins in 1990--2014.
		Brackets report 95\% pointwise percentile intervals from 999 joint
		circular-block bootstrap draws with 24-month blocks.
		The first-stage VAR, innovations, and realized-volatility
		normalization are re-estimated in each draw.
		Ratios and their difference are formed within each draw.
		Intervals condition on the recovered daily option components;
		recovery methods are compared separately in
		Appendix~\ref{app:bdg_recovery}.
	\end{minipage}
\end{table}

\paragraph{The employment contrast.}
Table~\ref{tab:bdg_macro} reports the twelve-month activity responses
under the equal-future-volatility normalization.
Manufacturing employment falls by 0.477 after realized volatility,
while its estimated response to news is 0.151, in units of 100 times
cumulative log changes.
The direct contrast is $-0.628$, with a 95 percent simultaneous interval
of $[-1.159,-0.097]$ covering both activity contrast paths over horizons
0--24.
Industrial production has a negative pointwise contrast, but its
simultaneous interval includes zero.
Thus employment supplies the stronger evidence of a difference,
reproducing the qualitative pattern of
\citet{BergerDewBeckerGiglio2020SecondMomentNews} on the option sample.
The individual news interval includes modest contractions, so the
comparison does not establish that news has no contractionary effect.

\begin{table}[!htbp]
	\centering
	\caption{Twelve-month macroeconomic responses}
	\label{tab:bdg_macro}
	\small
	\setlength{\tabcolsep}{4pt}
	\begin{tabular*}{\textwidth}{@{\extracolsep{\fill}}lcccc@{}}
		\toprule
		& & & \multicolumn{2}{c}{RV minus news} \\
		\cmidrule(l){4-5}
		& RV & News & Pointwise & Simultaneous \\
		\midrule
		\multicolumn{5}{@{}l}{
			\textit{Manufacturing employment}}\\[2pt]
		Estimate & $-0.477$ & $0.151$ & $-0.628$ & $-0.628$ \\
		95\% interval & $[-0.758,-0.117]$ & $[-0.056,0.362]$
		& $[-0.929,-0.224]$ & $[-1.159,-0.097]$ \\[5pt]
		\multicolumn{5}{@{}l}{
			\textit{Industrial production}}\\[2pt]
		Estimate & $-0.762$ & $0.421$ & $-1.183$ & $-1.183$ \\
		95\% interval & $[-1.428,-0.081]$ & $[-0.121,0.998]$
		& $[-1.966,-0.242]$ & $[-2.539,0.173]$ \\
		\bottomrule
	\end{tabular*}
	\par\smallskip
	\begin{minipage}{\textwidth}\footnotesize
		\emph{Notes:} Entries are 100 times cumulative log responses
		at $h=12$ (275 origins).
		Relative to the 291 impact origins, twelve origins in 2014
		are lost because $t+12$ exceeds December 2014, and four earlier
		origins are lost because option components are missing at $t+12$.
		Every column uses normalization re-estimated in each bootstrap draw.
		Individual-innovation intervals are pointwise.
		The final column uses a simultaneous band for both RV-minus-news
		contrast paths over horizons 0--24.
		Contrasts are formed within joint draws, rather than by
		subtracting interval endpoints.
	\end{minipage}
\end{table}

\paragraph{Comparing responses at equal impact severity.}
Equal cumulative effects on future volatility need not produce equal
impact severity responses.
Figure~\ref{fig:bdg_equal_severity} therefore scales each innovation to
raise $\log m$ by 0.01 on impact.
Let $e_j(h)$ denote the cumulative employment log response and
$b_j=\operatorname{IRF}_{\log m,j}(0)$ impact severity, for
$j\in\{R,N\}$.
A common proportional mapping from impact severity to employment requires
\begin{equation}
	\frac{e_R(h)}{b_R}=\frac{e_N(h)}{b_N}
	\quad\Longleftrightarrow\quad
	e_R(h)b_N-e_N(h)b_R=0
	\label{eq:bdg_division_free_restriction}
\end{equation}
for nonzero impact severities.
At twelve months, the normalized employment contrast is $-0.065$
in units of 100 times log changes per 0.01 impact log-severity increase.
Its pointwise interval is $[-0.098,-0.020]$, and its simultaneous
interval is $[-0.124,-0.006]$ over both activity contrasts at horizons
0--24.

The cross-product in \eqref{eq:bdg_division_free_restriction} checks
the same economic restriction without a random denominator.
Using the fixed point-estimate product to express each draw in comparable
units gives a pointwise interval of $[-0.104,-0.023]$ and a simultaneous
interval of $[-0.1305,0.0005]$.
Both formulations therefore provide pointwise evidence \emph{against} the common proportional mapping,
but their simultaneous conclusions differ: the ratio interval excludes zero,
while the cross-product interval narrowly includes it.
The twelve-month comparison was among the reporting horizons selected
before computing the macroeconomic estimates, after examining the
component paths, rather than a sole confirmatory test.
Appendix~\ref{app:bdg_recovery} reports the joint inference and reporting
protocol.

\begin{figure}[!htbp]
	\centering
	\includegraphics[width=\textwidth]
	{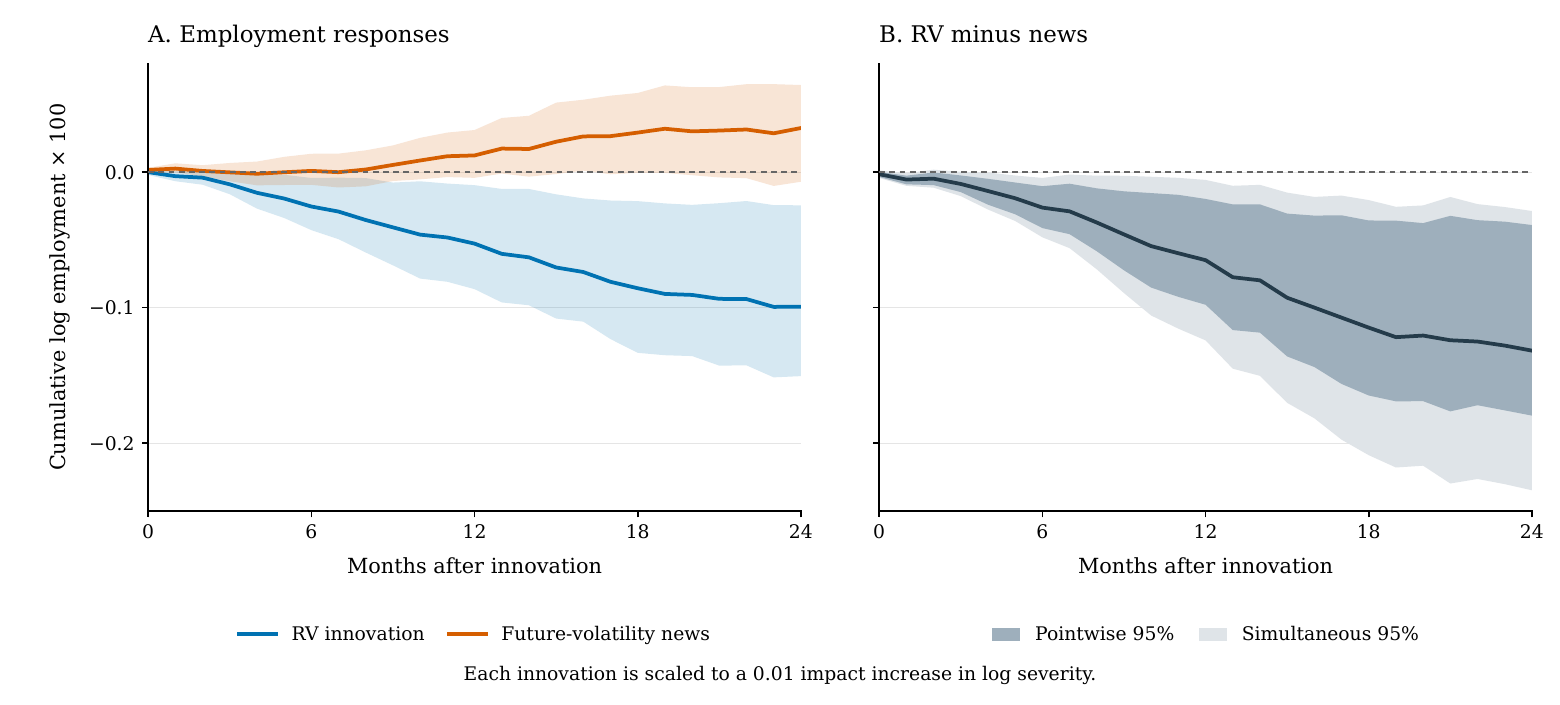}
	\caption{Manufacturing employment responses at equal impact severity.}
	\label{fig:bdg_equal_severity}
	\begin{minipage}{\textwidth}\footnotesize
		\emph{Notes:} Each innovation is scaled to increase $\log m$
		by 0.01 on impact.
		Employment responses are 100 times cumulative log changes.
		Colored shading in the left panel and darker shading in the
		right panel give 95\% pointwise intervals.
		Lighter shading in the right panel is the ratio-based 95\%
		simultaneous band covering both employment and
		industrial-production contrasts over horizons 0--24,
		although only employment is plotted.
		The twelve-month employment contrast is $-0.065$ in units
		of 100 times log changes, with a 95\% simultaneous interval
		of $[-0.124,-0.006]$.
		Each of the 999 joint draws uses its own impact severity
		response as the denominator.
		Under each draw's re-estimated volatility normalization,
		the smallest impact log-severity denominators are 0.01057
		for realized volatility and 0.09057 for news.
		The cross-product check is reported separately in the text.
		Later severity paths and other variables are not held fixed.
		Sample: 1990--2014.
	\end{minipage}
\end{figure}

This normalization compares responses in common impact units; it does
not hold subsequent severity paths or other equilibrium responses fixed.
It consequently evaluates the proportional restriction in
\eqref{eq:bdg_division_free_restriction}, rather than identifying a
causal effect of severity or testing whether the full priced-distribution
path explains employment.
An exploratory equal-impact-premium normalization gives a similar
twelve-month contrast of $-0.068$, with a ratio-based simultaneous
interval of $[-0.127,-0.008]$.
An unrestricted mapping from both impact components is not tested:
with two innovations, two freely chosen component weights can fit the
two employment responses.
The appendix reports that calculation and the direct comparison of
subsequent severity paths.

Call-spread and smooth-curve intersections preserve severity dominance
and the negative twelve-month employment contrast at equal impact
severity under the ratio-based simultaneous bands.
The entropy intersections retain similar employment point estimates
but have substantially less coverage, and their bands include zero.
Both innovations also increase conditional priced gain magnitudes,
so the finding does not describe an exclusively downside movement.
Appendix~\ref{app:bdg_recovery} reports the full recovery comparisons
and upside responses.

The decomposition therefore adds a substantive restriction to the
macroeconomic comparison: both innovations increase the price of
protection primarily through conditional shortfall severity, yet
their estimated employment responses differ even in common impact
severity units.
The evidence against a common proportional response has pointwise
support under both inference formulations, with stronger simultaneous
support under the ratio formulation.
Severity dominance identifies a shared dimension of repricing, but
that dimension alone does not give the two innovations a common
employment response.

\subsection{What a Structural Model Must Explain}
\label{Sec:CompositionRestriction}

The option restrictions concern the priced distribution generated by a
model's equilibrium equity payoffs and stochastic discount factor.
To make this mapping explicit, fix the horizon and write
$X_t=S_{t+\tau}/F_{t,t+\tau}$ and
$L_t=(1-X_t)^+$.
For the stochastic discount factor $M_{t,t+\tau}>0$, define
\begin{equation}
	w_t=\frac{M_{t,t+\tau}}
	{\mathbb E_t^{\mathbb P}[M_{t,t+\tau}]}
	\qquad
	F_{t,t+\tau}=\mathbb E_t^{\mathbb P}[w_tS_{t+\tau}]
	\notag
\end{equation}
Under the maintained deterministic carrying-cost assumptions, $w_t$
maps physical probabilities into the pricing probabilities used
above.\footnote{Normalizing by the zero-coupon bond price generally
	defines the horizon-specific forward pricing measure.
	It coincides with the risk-neutral measure used here under
	the maintained assumptions.
	Payoffs and the discount factor must use the same units.}
For the shortfall event $\mathcal D_t=\{X_t<1\}$, let
$p_t=\mathbb P_t(\mathcal D_t)$,
$\kappa_t=\mathbb E_t^{\mathbb P}[w_t\mid\mathcal D_t]$, and
$m_t^P=\mathbb E_t^{\mathbb P}[L_t\mid\mathcal D_t]$.
For $p_t>0$, the priced components satisfy
\begin{equation}
	q_t=p_t\kappa_t
	\qquad
	m_t=m_t^P+
	\frac{\operatorname{Cov}_t^{\mathbb P}
		(w_t,L_t\mid\mathcal D_t)}
	{\kappa_t}
	\notag
\end{equation}
Probability combines physical frequency with average valuation in
shortfall states; severity combines physical conditional losses with
their relative valuation within those states.
The forward threshold is itself endogenous, so a valuation change can
also change the measured physical components.
Appendix~\ref{App:DPPPhysicalMapping} derives the identities and
examines restricted valuation experiments.

The empirical results restrict this joint mapping without identifying
its physical and valuation terms separately.
Near-zero aggregate probability comovement does not require constant
$p_t$ or $\kappa_t$, or a zero probability response to every shock.
Likewise, a primitive disaster-probability shock need not appear mainly
in $q_t$: by placing more weight on very large losses, it can increase
severity within the broader event of any shortfall.
The empirical components therefore discipline a model's observables,
rather than assigning names to its primitive shocks.

A structural interpretation must account for two linked sets of
observations.
Section~\ref{Sec:DecomposeDPP} establishes the composition of premium
movements, probability levels, and departures from common scale across
strikes.
The present section establishes severity-dominated impact repricing
for two identified volatility innovations with different estimated
employment responses, including after impact-severity normalization.
Matching implied volatility or a near-unit severity loading alone
does not match this joint evidence.

For a model disciplined by implied stock-market volatility, such as
\citet{BasuBundick2017}, the corresponding comparison requires pricing
the model's equity protection claims and applying the empirical
identification to simulated observables at comparable horizons.
\citet{BergerDewBeckerGiglio2020SecondMomentNews} already apply their
identification to simulations of that model.
The additional question is whether its equilibrium option prices
reproduce the composition and quotation restrictions in Section 3
alongside the component and employment responses studied here.
These results give that comparison observable content; they do not
establish that a particular model fails it.
Severity dominance sharpens the interpretation of movements in the
uncertainty proxy, while their macroeconomic consequences depend on
the economic innovations and equilibrium propagation that generate them.
	\section{Conclusion}
\label{sec:conclusion}

The normalized at-the-money forward put tracks the VIX, preserves the
principal macroeconomic responses in the specifications examined, and
separates the priced probability of a shortfall from its conditional
severity.
Conditional severity accounts for almost all monthly premium comovement
and most of the recession premium, and conditional gains rise with it.
Because priced shortfall probabilities average below one-half,
conditional losses remain larger than gains.

These regularities do not reduce to a common change in scale.
For the recovered series, the probability level and its recession
increase rule out exact lognormality, a single shape spanning the
sample, and a constant local probability gap.
Conditional on the maintained forward and premium anchors, quotations
across strikes exclude common scale within 352 of 405 covered months
and require a change in conditional downside shape in at least 112.
Realized-volatility and future-volatility-news innovations both raise
protection costs mainly through severity, yet are followed by different
estimated employment responses at equal impact severity.

A model whose uncertainty process is disciplined by implied volatility
has matched one margin of option prices.
Its equilibrium quotations, constructed for the same payoff and horizon
and passed through the same component recovery and innovation
identification, provide a natural test of the remaining restrictions.

\bigskip
\section*{Declaration of Artificial Intelligence Usage}
The first version of this paper was drafted without any AI assistance. I used ChatGPT-6.1 to help produce the replication package available at XX. A draft of the manuscript was analyzed by ChatGPT-6.0 and suggestions were made to improve the readability of the paper. I implemented a subset of these recommendations and take full responsibility for the contents of this article.

\section*{Data Availability Statement}
A replication package for results in the paper and Appendices A1 and A2 is available at doi: \url{10.5281/zenodo.23106238}.

\pagebreak

\bibliographystyle{aer}
\bibliography{VixAltReferences}

\pagebreak

	\appendix

\counterwithin{equation}{section}
\counterwithin{table}{section}
\counterwithin{figure}{section}
\renewcommand{\theequation}{\thesection.\arabic{equation}}
\renewcommand{\thetable}{\thesection.\arabic{table}}
\renewcommand{\thefigure}{\thesection.\arabic{figure}}

\section{Supplementary Appendix (Online Appendix)}
\label{App:SupplementaryMaterial}

This appendix documents the data and construction in
Appendix~\ref{app:data}, the macroeconomic comparisons and
physical-to-priced mapping in Appendix~\ref{App:MacroReplication},
and the derivations and robustness of the DPP decomposition in
Appendix~\ref{App:DPPSection3}.
The main estimates and their supporting robustness comparisons are
documented together.

\subsection{Data and DPP Construction}
\label{app:data}

\subsubsection{S\&P 500 Index Options}
\label{App:OptionData}

The options data are daily end-of-day S\&P 500 Index (SPX) puts and calls on a
wide range of strikes obtained from DeltaNeutral (HistoricalOptionData.com) from
January 1990 through July 2026.
The data include traditional monthly SPX contracts (morning settlement) and SPXW contracts
(afternoon settlement).
This distinction matters for measuring time to maturity and matching put
and call contracts as an identical strike and calendar expiration date need
not imply identical settlement terms.
For each contract, the data report its full option identifier, put or call
type, strike, expiration date, bid and ask quotations, last transaction
price, trading volume, and open interest, together with the underlying
index level at the recorded snapshot.
Option prices and strikes are measured in index points, and the contract
multiplier is \$100 per index point.
Dividing a put price by the underlying index level therefore expresses its
premium per dollar of index exposure.
Trading volume and open interest are measured in numbers of contracts.

The distinction between quoted and traded prices is relevant for assessing
the quality of these observations.
The bid and ask describe the market quotation at the daily snapshot,
whereas the last transaction price may refer to a trade on an earlier date.
An option can therefore have an end-of-day quotation even when its reported
daily volume is zero.

\subsubsection{Quotation Selection and Premium Coverage}
\label{App:PremiumCoverage}

The baseline uses bid--ask midpoints, requiring a nonnegative bid, a
strictly positive ask, and an ask no lower than the bid. No minimum volume
or maximum relative spread is imposed. Puts and calls used in parity
calculations are matched by snapshot, strike, expiration, contract family,
and settlement terms.
For each selected maturity, the forward is the median parity-implied value
from up to three matched strikes closest to the contemporaneous index.
The daily three-month Treasury bill rate is converted to continuously
compounded units; missing rates are carried forward for at most three
calendar days.
Prices are interpolated first across the nearest strict put strikes
bracketing that maturity's forward, then across the eligible maturities
bracketing 30/365 years. The total strike-bracket width cannot exceed five
percent of the forward. There is no additional limit on the distance
between bracketing maturities. Neither strike nor maturity extrapolation
is used.

The premium is available on 8,825 of 9,212 regular trading sessions from
January 2, 1990 through July 31, 2026. Of the 387 missing sessions, two
have no source SPX quotations, 379 lack a usable maturity bracket, and
six fail the strike-width restriction. Of the maturity failures, 359
occur between the 14th and 21st of the month, when the next standard
monthly expiration is more than 30 days away; the remaining 20 occur
between September 18 and October 15, 1992. No missing date is filled.
Arithmetic averages of available daily premiums yield 439 monthly
observations. Component recovery retains 8,791 dates; its exclusions
and matched-sample effects are documented in
Appendix~\ref{App:DPPRecoveryRobustness}.

\subsubsection{Trading Volume and Coverage}
\label{App:TradingVolume}

Volume measures trading during the day; open interest measures outstanding
positions and is updated separately, before the trading session.
Figure~\ref{fig:volume_coverage} summarizes recorded SPX trading activity
over the available sample.
The volume calculations use observations with positive reported volume,
after removing repeated records for the same contract and date.
Panel~(a) reports 21-weekday moving averages of daily put and call volume,
summed across strikes and expirations.
Trading activity increases substantially during the 2000s, and put volume
generally exceeds call volume.

\begin{figure}[htbp]
	\centering
	\includegraphics[width=\linewidth]{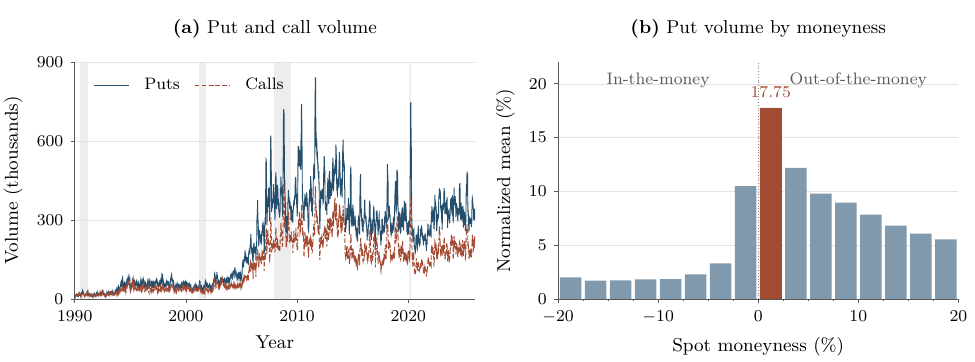}
	\caption{SPX trading activity.
		Panel~(a) shows 21-weekday moving averages of aggregate daily volume,
		in thousands of contracts.
		Weekdays without recorded trading enter the moving average as zeros;
		shaded intervals mark NBER business-cycle peaks to troughs.
		Panel~(b) shows mean volume per positive-volume put contract-date in
		2.5-percentage-point spot-moneyness bins, normalized to sum to 100
		over the displayed range, $[-20,20]$ percent.
		Positive moneyness denotes out-of-the-money puts.
		Both panels pool expirations.
		Source: DeltaNeutral option data; NBER business-cycle chronology.}
	\label{fig:volume_coverage}
\end{figure}

Panel~(b) describes trading activity across strikes using spot moneyness,
$m_{it}=100(S_t-K_i)/S_t$, where $S_t$ is the underlying index level and
$K_i$ is the strike.
This differs from the paper's at-the-money put, which is defined relative
to the forward price.
For each moneyness bin, I compute mean volume per active put contract-date
and normalize these means to sum to 100 across bins.
Average activity is greatest in the 0--2.5 percent out-of-the-money bin,
whose normalized mean is 17.75 percent, and is generally lower farther
from the money.

\subsection{Macroeconomic Replication and Structural Interpretation}
\label{App:MacroReplication}

This appendix examines whether replacing the VIX/VXO with the thirty-day
Downside Protection Premium (DPP) changes
the identification of uncertainty shocks or their estimated
macroeconomic effects.
Within each proxy comparison, I replace only the uncertainty proxy; the sample, macroeconomic controls,
lag order, deterministic terms, and recursive ordering are all held
fixed.
These comparisons assess the preservation of the principal macroeconomic
responses, rather than exact agreement in event dates or inference
classifications.
The remaining material derives the physical-to-priced mapping and documents
the volatility-innovation comparisons in
Section~\ref{Sec:InterpretingUncertainty}.
Derivations and robustness supporting the DPP decomposition, including
the quotation-based scale restrictions, appear in
Appendix~\ref{App:DPPSection3}.

\subsubsection{Bloom-Style Event Identification}
\label{App:BloomReplication}

Following the threshold approach of \citet{Bloom2009}, I apply
the HP filter ($\lambda=129{,}600$) to monthly DPP and VIX
levels over January 1990--July 2026.
Both series are averaged over identical daily dates.
I standardize each cyclical component by its sample standard
deviation and identify the peak month within each contiguous
episode exceeding 1.65 standard deviations consistent with \cite{Bloom2009}.

Figure~\ref{fig:bloom_shock_comparison} shows closely aligned
cycles, with correlations of 0.9926 in levels and 0.9885 in
logarithms.
At the 1.65 cutoff, each level-based series identifies ten
events, eight with identical peak months.
The VIX additionally identifies October 1990 and November 1997,
while the DPP additionally identifies May and October 2022.
October 1990 narrowly misses the DPP cutoff, with a standardized
cycle of 1.6499 compared with 1.6539 for the VIX.
The logarithmic specification identifies fourteen events for
each proxy, eleven with identical peak months.

The largest events coincide.
At a common cutoff of 2.02 standard deviations, both level-based
series identify the same six peak months: September 1998,
September 2002, November 2008, September 2011, March 2020,
and April 2025.
The largest events therefore coincide, although the baseline threshold
selects some different episodes. The October 1990 discrepancy is particularly
sensitive to the cutoff; the other differences are not merely rounding errors.

\begin{figure}[h!]
	\centering
\begingroup%
\normalfont\linespread{1}\fontsize{9}{11}\selectfont%
\definecolor{dppred}{HTML}{BE263B}%
\definecolor{vixblack}{HTML}{202020}%
\definecolor{dppblue}{HTML}{2166AC}%
\definecolor{vixblue}{HTML}{053061}%
\begin{tikzpicture}
\begin{groupplot}[
group style={group size=1 by 1,horizontal sep=1.25cm,vertical sep=1.2cm},
width=\dimexpr\linewidth-1.6cm\relax,height=4.4cm,scale only axis,
axis lines*=left,axis line style={line width=0.5pt},tick align=outside,tick style={line width=0.5pt},
label style={font=\fontsize{9}{11}\selectfont},tick label style={font=\fontsize{8}{10}\selectfont},
title style={at={(0,1.02)},anchor=south west,yshift=0pt,font=\fontsize{9}{11}\selectfont},
scaled ticks=false,yticklabel style={/pgf/number format/fixed,/pgf/number format/precision=4},
xticklabel style={/pgf/number format/1000 sep={}},
ymajorgrids=true,grid style={gray!25,line width=0.3pt},unbounded coords=jump,set layers=standard]
\nextgroupplot[xmin=1990,xmax=2026.58333333,ymin=-2.05915140122,ymax=7.03903142487,xtick={1990,1995,2000,2005,2010,2015,2020,2025},ytick={0,2.5,5},title={},ylabel={Cycle/Std Dev},xlabel={Year},legend style={at={(0.5,1.025)},anchor=south,draw=none,fill=none,font=\fontsize{9}{11}\selectfont},legend columns=3]
\begin{pgfonlayer}{axis background}
\fill[black!7] (axis cs:1990.58333333,-2.05915140122) rectangle (axis cs:1991.25,7.03903142487);
\fill[black!7] (axis cs:2001.25,-2.05915140122) rectangle (axis cs:2001.91666667,7.03903142487);
\fill[black!7] (axis cs:2008,-2.05915140122) rectangle (axis cs:2009.5,7.03903142487);
\fill[black!7] (axis cs:2020.16666667,-2.05915140122) rectangle (axis cs:2020.33333333,7.03903142487);
\end{pgfonlayer}
\addplot[black!50,line width=0.4pt,forget plot] coordinates {(1990,0) (2026.5833333333333,0)};
\addplot[vixblue,line width=0.95pt,dash pattern=on 4pt off 2pt,forget plot] table[x=year,y=levels_vix,col sep=comma] {bloom_shock_comparison_plot_data.csv};
\addplot[only marks,mark=+,mark size=2.5pt,vixblue,line width=0.8pt,forget plot] table[x=year,y=levels_vix_peak,col sep=comma] {bloom_shock_comparison_plot_data.csv};
\addplot[dppblue,line width=1.05pt,forget plot] table[x=year,y=levels_dpp,col sep=comma] {bloom_shock_comparison_plot_data.csv};
\addplot[only marks,mark=x,mark size=2.5pt,dppblue,line width=0.8pt,forget plot] table[x=year,y=levels_dpp_peak,col sep=comma] {bloom_shock_comparison_plot_data.csv};
\addplot[black!50,densely dotted,line width=0.75pt,forget plot] coordinates {(1990,1.65) (2026.5833333333333,1.65)};
\addlegendimage{dppblue,line width=1.05pt}\addlegendentry{DPP}
\addlegendimage{vixblue,line width=0.95pt,dash pattern=on 4pt off 2pt}\addlegendentry{VIX}
\addlegendimage{black!50,densely dotted,line width=0.75pt}\addlegendentry{1.65 SD threshold}
\end{groupplot}
\end{tikzpicture}%
\endgroup%
	\caption{Bloom-style uncertainty events, January 1990--July 2026.
		Standardized HP cycles of monthly DPP (solid) and VIX
		(dashed), using levels in Panel~(a) and logarithms in
		Panel~(b), with $\lambda=129{,}600$.
		The dotted line marks 1.65 standard deviations;
		symbols identify peak months within contiguous
		above-threshold episodes.
		Grey shading denotes NBER recessions, from the month
		after the peak through the trough.}
	\label{fig:bloom_shock_comparison}
\end{figure}

\subsubsection{Monthly VAR of Leduc and Liu (2016)}
\label{App:MonthlyVAR}

The monthly comparison follows the recursive identification
of \citet{LeducLiu2016}, using a frequentist log-proxy VAR, as opposed
to their Bayesian approach.
For uncertainty proxy $j\in\{\mathrm{DPP},\mathrm{VIX}\}$, define
\begin{equation}
	X_m^{(j)}
	=
	\begin{pmatrix}
		\log \bar j_m \\
		U_m \\
		\pi_m \\
		i_m
	\end{pmatrix}
	\qquad
	X_m^{(j)}
	=
	a^{(j)}
	+
	\sum_{\ell=1}^{p} A_\ell^{(j)}X_{m-\ell}^{(j)}
	+
	u_m^{(j)}
	\notag
\end{equation}
Here $\bar j_m$ is the matched monthly average of the chosen
uncertainty proxy, $U_m$ is the unemployment rate,
$i_m$ is the three-month Treasury bill rate, and
$
	\pi_m
	=
	100\left[
	\log(\mathrm{CPI}_m)
	-
	\log(\mathrm{CPI}_{m-12})
	\right].
$
Data are the FRED series UNRATE, CPIAUCSL, and TB3MS.
Following \cite{LeducLiu2016}, the baseline specification includes a constant and twelve lags,
and identification uses a lower-triangular Cholesky factor of the
residual covariance matrix, with uncertainty ordered first.

I report two shock normalizations.
The first uses each model's own one-standard-deviation uncertainty
innovation.
The second scales both responses to an initial increase of
0.1 log unit in the uncertainty proxy, corresponding to a
10.52 percent proportional increase.
This comparison separates differences in propagation from
differences in the estimated innovation scale.
Uncertainty bands use 1,000 paired residual-bootstrap replications.
The standardized uncertainty innovations have a correlation
of 0.9836.

\begin{figure}[h]
	\centering
\begingroup%
\normalfont\linespread{1}\fontsize{9}{11}\selectfont%
\definecolor{dppred}{HTML}{BE263B}%
\definecolor{vixblack}{HTML}{202020}%
\begin{tikzpicture}
\begin{groupplot}[
group style={group size=2 by 2,horizontal sep=1.7cm,vertical sep=1.1cm},
width=\dimexpr.5\linewidth-1.95cm\relax,height=3.2cm,scale only axis,
axis lines*=left,axis line style={line width=0.5pt},tick align=outside,tick style={line width=0.5pt},
label style={font=\fontsize{9}{11}\selectfont},tick label style={font=\fontsize{8}{10}\selectfont},
title style={at={(0,1.02)},anchor=south west,yshift=0pt,font=\fontsize{9}{11}\selectfont},
scaled ticks=false,yticklabel style={/pgf/number format/fixed,/pgf/number format/precision=4},
xticklabel style={/pgf/number format/1000 sep={}},
ymajorgrids=true,grid style={gray!25,line width=0.3pt},unbounded coords=jump,set layers=standard]
\nextgroupplot[xmin=0,xmax=48,ymin=-0.04104467451,ymax=0.200799510812,xtick={0,12,24,36,48},ytick={0,0.08,0.16},title={(a) Log uncertainty proxy},ylabel={Log units},xlabel={},xticklabels={},legend style={at={([xshift=0.85cm]axis description cs:1,1.02)},anchor=south,yshift=18pt,draw=none,fill=none,font=\fontsize{9}{11}\selectfont},legend columns=4]
\addplot[draw=none,name path=band0vix90lo,forget plot] table[x=horizon,y=log_uncertainty_vix_lower90,col sep=comma] {logdpp30_var_irfs_vs_benchmark_dual_plot_data.csv};
\addplot[draw=none,name path=band0vix90hi,forget plot] table[x=horizon,y=log_uncertainty_vix_upper90,col sep=comma] {logdpp30_var_irfs_vs_benchmark_dual_plot_data.csv};
\addplot[draw=none,fill=black!8,forget plot] fill between[of=band0vix90lo and band0vix90hi];
\addplot[draw=none,name path=band0dpp90lo,forget plot] table[x=horizon,y=log_uncertainty_dpp_lower90,col sep=comma] {logdpp30_var_irfs_vs_benchmark_dual_plot_data.csv};
\addplot[draw=none,name path=band0dpp90hi,forget plot] table[x=horizon,y=log_uncertainty_dpp_upper90,col sep=comma] {logdpp30_var_irfs_vs_benchmark_dual_plot_data.csv};
\addplot[draw=none,fill=dppred!15,forget plot] fill between[of=band0dpp90lo and band0dpp90hi];
\addplot[draw=none,name path=band0dpp68lo,forget plot] table[x=horizon,y=log_uncertainty_dpp_lower68,col sep=comma] {logdpp30_var_irfs_vs_benchmark_dual_plot_data.csv};
\addplot[draw=none,name path=band0dpp68hi,forget plot] table[x=horizon,y=log_uncertainty_dpp_upper68,col sep=comma] {logdpp30_var_irfs_vs_benchmark_dual_plot_data.csv};
\addplot[draw=none,fill=dppred!32,forget plot] fill between[of=band0dpp68lo and band0dpp68hi];
\addplot[black!40,line width=0.45pt,forget plot] coordinates {(0,0) (48,0)};
\addplot[vixblack,line width=0.95pt,dash pattern=on 4pt off 2pt] table[x=horizon,y=log_uncertainty_vix_point,col sep=comma] {logdpp30_var_irfs_vs_benchmark_dual_plot_data.csv};
\addlegendentry{VIX}
\addplot[dppred,line width=1.05pt] table[x=horizon,y=log_uncertainty_dpp_point,col sep=comma] {logdpp30_var_irfs_vs_benchmark_dual_plot_data.csv};
\addlegendentry{DPP}
\addlegendimage{area legend,fill=black!10,draw=none}\addlegendentry{90\% bands}
\addlegendimage{area legend,fill=dppred!32,draw=none}\addlegendentry{DPP 68\%}
\nextgroupplot[xmin=0,xmax=48,ymin=-0.0836200175366,ymax=0.369423295386,xtick={0,12,24,36,48},ytick={0,0.15,0.3},title={(b) Unemployment},ylabel={Percentage points},xlabel={},xticklabels={}]
\addplot[draw=none,name path=band1vix90lo,forget plot] table[x=horizon,y=unemployment_vix_lower90,col sep=comma] {logdpp30_var_irfs_vs_benchmark_dual_plot_data.csv};
\addplot[draw=none,name path=band1vix90hi,forget plot] table[x=horizon,y=unemployment_vix_upper90,col sep=comma] {logdpp30_var_irfs_vs_benchmark_dual_plot_data.csv};
\addplot[draw=none,fill=black!8,forget plot] fill between[of=band1vix90lo and band1vix90hi];
\addplot[draw=none,name path=band1dpp90lo,forget plot] table[x=horizon,y=unemployment_dpp_lower90,col sep=comma] {logdpp30_var_irfs_vs_benchmark_dual_plot_data.csv};
\addplot[draw=none,name path=band1dpp90hi,forget plot] table[x=horizon,y=unemployment_dpp_upper90,col sep=comma] {logdpp30_var_irfs_vs_benchmark_dual_plot_data.csv};
\addplot[draw=none,fill=dppred!15,forget plot] fill between[of=band1dpp90lo and band1dpp90hi];
\addplot[draw=none,name path=band1dpp68lo,forget plot] table[x=horizon,y=unemployment_dpp_lower68,col sep=comma] {logdpp30_var_irfs_vs_benchmark_dual_plot_data.csv};
\addplot[draw=none,name path=band1dpp68hi,forget plot] table[x=horizon,y=unemployment_dpp_upper68,col sep=comma] {logdpp30_var_irfs_vs_benchmark_dual_plot_data.csv};
\addplot[draw=none,fill=dppred!32,forget plot] fill between[of=band1dpp68lo and band1dpp68hi];
\addplot[black!40,line width=0.45pt,forget plot] coordinates {(0,0) (48,0)};
\addplot[vixblack,line width=0.95pt,dash pattern=on 4pt off 2pt,forget plot] table[x=horizon,y=unemployment_vix_point,col sep=comma] {logdpp30_var_irfs_vs_benchmark_dual_plot_data.csv};
\addplot[dppred,line width=1.05pt,forget plot] table[x=horizon,y=unemployment_dpp_point,col sep=comma] {logdpp30_var_irfs_vs_benchmark_dual_plot_data.csv};
\nextgroupplot[xmin=0,xmax=48,ymin=-0.331838706402,ymax=0.194961666763,xtick={0,12,24,36,48},ytick={-0.3,-0.15,0,0.15},title={(c) Year-on-year inflation},ylabel={Percentage points},xlabel={Months after shock}]
\addplot[draw=none,name path=band2vix90lo,forget plot] table[x=horizon,y=inflation_yoy_vix_lower90,col sep=comma] {logdpp30_var_irfs_vs_benchmark_dual_plot_data.csv};
\addplot[draw=none,name path=band2vix90hi,forget plot] table[x=horizon,y=inflation_yoy_vix_upper90,col sep=comma] {logdpp30_var_irfs_vs_benchmark_dual_plot_data.csv};
\addplot[draw=none,fill=black!8,forget plot] fill between[of=band2vix90lo and band2vix90hi];
\addplot[draw=none,name path=band2dpp90lo,forget plot] table[x=horizon,y=inflation_yoy_dpp_lower90,col sep=comma] {logdpp30_var_irfs_vs_benchmark_dual_plot_data.csv};
\addplot[draw=none,name path=band2dpp90hi,forget plot] table[x=horizon,y=inflation_yoy_dpp_upper90,col sep=comma] {logdpp30_var_irfs_vs_benchmark_dual_plot_data.csv};
\addplot[draw=none,fill=dppred!15,forget plot] fill between[of=band2dpp90lo and band2dpp90hi];
\addplot[draw=none,name path=band2dpp68lo,forget plot] table[x=horizon,y=inflation_yoy_dpp_lower68,col sep=comma] {logdpp30_var_irfs_vs_benchmark_dual_plot_data.csv};
\addplot[draw=none,name path=band2dpp68hi,forget plot] table[x=horizon,y=inflation_yoy_dpp_upper68,col sep=comma] {logdpp30_var_irfs_vs_benchmark_dual_plot_data.csv};
\addplot[draw=none,fill=dppred!32,forget plot] fill between[of=band2dpp68lo and band2dpp68hi];
\addplot[black!40,line width=0.45pt,forget plot] coordinates {(0,0) (48,0)};
\addplot[vixblack,line width=0.95pt,dash pattern=on 4pt off 2pt,forget plot] table[x=horizon,y=inflation_yoy_vix_point,col sep=comma] {logdpp30_var_irfs_vs_benchmark_dual_plot_data.csv};
\addplot[dppred,line width=1.05pt,forget plot] table[x=horizon,y=inflation_yoy_dpp_point,col sep=comma] {logdpp30_var_irfs_vs_benchmark_dual_plot_data.csv};
\nextgroupplot[xmin=0,xmax=48,ymin=-0.311583126579,ymax=0.0917591503261,xtick={0,12,24,36,48},ytick={-0.3,-0.15,0},title={(d) 3-month Treasury bill rate},ylabel={Percentage points},xlabel={Months after shock}]
\addplot[draw=none,name path=band3vix90lo,forget plot] table[x=horizon,y=tbill3m_vix_lower90,col sep=comma] {logdpp30_var_irfs_vs_benchmark_dual_plot_data.csv};
\addplot[draw=none,name path=band3vix90hi,forget plot] table[x=horizon,y=tbill3m_vix_upper90,col sep=comma] {logdpp30_var_irfs_vs_benchmark_dual_plot_data.csv};
\addplot[draw=none,fill=black!8,forget plot] fill between[of=band3vix90lo and band3vix90hi];
\addplot[draw=none,name path=band3dpp90lo,forget plot] table[x=horizon,y=tbill3m_dpp_lower90,col sep=comma] {logdpp30_var_irfs_vs_benchmark_dual_plot_data.csv};
\addplot[draw=none,name path=band3dpp90hi,forget plot] table[x=horizon,y=tbill3m_dpp_upper90,col sep=comma] {logdpp30_var_irfs_vs_benchmark_dual_plot_data.csv};
\addplot[draw=none,fill=dppred!15,forget plot] fill between[of=band3dpp90lo and band3dpp90hi];
\addplot[draw=none,name path=band3dpp68lo,forget plot] table[x=horizon,y=tbill3m_dpp_lower68,col sep=comma] {logdpp30_var_irfs_vs_benchmark_dual_plot_data.csv};
\addplot[draw=none,name path=band3dpp68hi,forget plot] table[x=horizon,y=tbill3m_dpp_upper68,col sep=comma] {logdpp30_var_irfs_vs_benchmark_dual_plot_data.csv};
\addplot[draw=none,fill=dppred!32,forget plot] fill between[of=band3dpp68lo and band3dpp68hi];
\addplot[black!40,line width=0.45pt,forget plot] coordinates {(0,0) (48,0)};
\addplot[vixblack,line width=0.95pt,dash pattern=on 4pt off 2pt,forget plot] table[x=horizon,y=tbill3m_vix_point,col sep=comma] {logdpp30_var_irfs_vs_benchmark_dual_plot_data.csv};
\addplot[dppred,line width=1.05pt,forget plot] table[x=horizon,y=tbill3m_dpp_point,col sep=comma] {logdpp30_var_irfs_vs_benchmark_dual_plot_data.csv};
\end{groupplot}
\end{tikzpicture}%
\endgroup%
	\caption{Monthly responses to uncertainty innovations measured
		using the DPP and VIX.
		The solid line denotes the DPP specification and the dashed
		line the VIX specification.
		Each shock is normalized to its own one-standard-deviation
		innovation.
		The specifications use January 1990--September 2025,
		twelve lags, a constant, and uncertainty-first recursive
		identification.
		The shaded regions show pointwise 90 percent bootstrap
		intervals for both specifications and a darker 68 percent
		interval for the DPP.
		The uncertainty response is in log units; the remaining
		responses are in percentage points.
		Intervals use 1,000 paired residual-bootstrap replications.}
	\label{fig:logdpp-irfs}
\end{figure}

Figure~\ref{fig:logdpp-irfs} shows closely aligned impulse responses,
with some differences in the inflation response.
For example, under the own-standard-deviation normalization,
the inflation-response difference is negative with its
90 percent interval excluding zero at months sixteen through
twenty.
The corresponding unemployment and interest-rate difference
intervals contain zero throughout the 48 months.

As a robustness, I consider a common AIC-selected lag order and differing
sample endpoints.
Table~\ref{Tab:MacroInnovationCorrelations} shows that the
monthly innovation correlation remains above 0.981 across
these specifications.
The close correspondence therefore survives changing the lag
order, using an endpoint close to the Leduc--Liu sample, and
excluding the pandemic period.

\begin{table}[h!]
	\centering
	\small
	\caption{Correlations between Identified DPP and
		Implied-Volatility Innovations}
	\label{Tab:MacroInnovationCorrelations}
	\begin{tabular}{lrrr}
		\toprule
		Specification and benchmark & Lags & Observations
		& Correlation \\
		\midrule
		Monthly: VIX, baseline             & 12 & 417 & 0.9836 \\
		Monthly: VIX, common AIC order     & 14 & 415 & 0.9834 \\
		Monthly: VIX, through October 2013 & 12 & 274 & 0.9839 \\
		Monthly: VIX, through December 2019& 12 & 348 & 0.9814 \\
		\addlinespace
		Quarterly: matched VIX            &  4 &  96 & 0.9845 \\
		Quarterly: author VIX             &  4 &  96 & 0.9775 \\
		Quarterly: author VXO             &  4 &  96 & 0.9867 \\
		\bottomrule
	\end{tabular}
	\begin{flushleft}
		\footnotesize\textit{Note:}
		Entries are Pearson correlations between standardized
		recursively identified uncertainty innovations.
		Observation counts refer to fitted observations after lags.
		The baseline monthly sample retains the September 2025 endpoint of
        the saved macroeconomic replication; extending the option sample
        through July 2026 does not re-estimate this VAR beyond that endpoint.
		All quarterly comparisons use 1990Q1--2014Q4.
		Matched VIX uses the same daily observations and monthly
		aggregation as the DPP; author VIX and VXO retain the
		published quarterly series.
	\end{flushleft}
\end{table}

\subsubsection{Quarterly VAR of Basu and Bundick (2017)}
\label{App:QuarterlyVAR}

\begin{figure}[h]
	\centering
	\input{basu_bundick_irfs.pgf}
	\caption{Quarterly DPP and VIX responses in the
		Basu--Bundick specification.
		The solid line denotes the DPP and the dashed line
		the matched VIX specification.
		Both use the same original macroeconomic controls,
		1990Q1--2014Q4, four lags, a constant, and
		uncertainty-first recursive identification.
		Shading denotes modelwise pointwise 95 percent
		posterior intervals based on 1,000 draws.
		Shocks are normalized to each model's own
		one-standard-deviation uncertainty innovation.
		Responses of logarithmic variables are multiplied
		by 100; the policy-rate response is in percentage points.
		The direct VXO comparison is reported separately in
		the text and numerical replication outputs.}
	\label{fig:basu_dpp_irfs}
\end{figure}

The quarterly exercise uses the eight-variable specification
of \citet{BasuBundick2017}.
The variables are ordered as follows: log uncertainty,
log real output per capita, log real consumption per capita,
log real investment per capita, log hours per capita,
log price level, log M2, and the shadow policy rate.
The original macroeconomic observations and transformations
are retained in every comparison.
Each specification includes a constant and four lags, with
uncertainty ordered first.
The common sample, 1990Q1--2014Q4, contains 100 quarters and
96 fitted observations.
I first reproduce the authors' VIX point estimates.
The maximum absolute discrepancy across the stored impulse
responses is below $10^{-10}$.
I then replace the uncertainty series with the DPP while holding the
seven macroeconomic controls and all other specification
choices fixed.
The comparisons include the authors' original VIX, a matched
VIX series, and the authors' original VXO.
For the matched VIX and DPP, quarterly levels are equal-weighted
averages of the three monthly arithmetic means, followed by
the logarithmic transformation.
Uncertainty band construction follows the authors' simulation
conventions; however, I use 1,000 posterior draws per model, compared with 5,000 in
the original simulation.

The DPP innovations correlate 0.9845 with the matched VIX
innovations, 0.9775 with the authors' original VIX innovations,
and 0.9867 with the authors' VXO innovations.
Figure~\ref{fig:basu_dpp_irfs} shows impulse responses between the DPP
and VIX are nearly identical.
At each horizon, I classify a macroeconomic response according
to whether its interval is entirely positive, entirely negative,
or contains zero.
At the 90 percent level, agreement is 140 of 140 for VIX and
136 of 140 for VXO.
Neither comparison switches a response from positive to negative, although
the VXO comparison differs in whether four intervals include zero.
The three uncertainty proxies produce nearly identical real-activity
troughs: output falls by 0.167--0.182 percent and consumption by
0.149--0.151 percent in quarter three, while investment falls by
0.445--0.464 percent and hours by 0.227--0.244 percent in quarter four.
These responses are measured as 100 times the log deviation following
each model's own one-standard-deviation uncertainty innovation, with
the sample, macroeconomic controls, lag order, and recursive
identification held fixed.

Taken together, the exercises show that the principal
macroeconomic conclusions are robust to replacing the
implied-volatility proxy with the DPP in the specifications
examined here.


\subsubsection{Physical Risk, Valuation, and Benchmark Restrictions}
\label{App:DPPPhysicalMapping}

\paragraph{Mapping physical risk into the priced components.}

For the normalized kernel,
$\mathbb E_t^{\mathbb P}[w_t]=1$ and
$\mathbb E_t^{\mathbb Q}[f(X_t)]
=\mathbb E_t^{\mathbb P}[w_tf(X_t)]$
for integrable payoffs.
With $D_t=\{X_t<1\}$ and $L_t=(1-X_t)^+$, define
\begin{equation}
	p_t=\mathbb P_t(D_t)
	\qquad
	\kappa_t=\mathbb E_t^{\mathbb P}[w_t\mid D_t]
	\qquad
	m_t^{\mathbb P}
	=\mathbb E_t^{\mathbb P}[L_t\mid D_t]
	\label{eq:interpretation_physical_components}
\end{equation}
Assuming $p_t>0$, conditioning on the shortfall event gives
\[
q_t
=p_t\mathbb E_t^{\mathbb P}[w_t\mid D_t]
=p_t\kappa_t
\qquad
a_t
=p_t\mathbb E_t^{\mathbb P}[w_tL_t\mid D_t]
\]
Dividing the second expression by the first and expanding the
conditional product expectation yields
\begin{equation}
	m_t
	=
	m_t^{\mathbb P}
	+
	\frac{
		\operatorname{Cov}_t^{\mathbb P}
		(w_t,L_t\mid D_t)
	}{
		\mathbb E_t^{\mathbb P}[w_t\mid D_t]
	}
	\label{eq:app_dpp_severity_mapping}
\end{equation}

The probability identity also gives
$\Delta\log q_t
=\Delta\log p_t+\Delta\log\kappa_t$.
A constant priced probability need not imply a constant physical
probability or a constant average downside valuation.
Any offset between them must follow from the model's equilibrium;
the kernel cannot be chosen afterward to reconcile an otherwise
unrestricted physical distribution with the option evidence.

All quantities refer to the same forward-normalized payoff.
When the forward changes, the event $D_t$ changes with it.
A change in $p_t$ therefore need not have the same sign as a change
in the probability of a fixed loss relative to the current equity
price or of an adverse macroeconomic outcome.

\paragraph{Variation entirely through priced shortfall frequency.}

If conditional priced severity remains fixed at $\bar m$, then
\begin{equation}
	a_t=q_t\bar m
	\quad\Longrightarrow\quad
	\Delta\log a_t=\Delta\log q_t
	\qquad
	\Delta\log m_t=0
	\label{eq:interpretation_frequency_composition}
\end{equation}
This composition is consistent with positive terminal equity and
the forward normalization.
For example, assign probability $\theta_t\in(0,1)$ to
$X_t=1-\ell$, with $0<\ell<1$, and probability $1-\theta_t$ to
$X_t=1+\ell\theta_t/(1-\theta_t)$.
The mean is one, $q_t=\theta_t$, and $m_t=\ell$.
If this is the sole source of premium variation, the probability
loading is one and the severity loading is zero.
This example restricts the equilibrium distribution.
A primitive disaster-frequency shock need not have this composition,
because it can also change equity payoffs, valuation, and the
forward threshold.

\paragraph{Restricted changes in valuation.}

A restricted valuation experiment clarifies the content of the
severity identity.
Hold the physical payoff distribution and the forward fixed, and
perturb the normalized kernel from $w$ to $w+\epsilon h$,
preserving positivity.
Normalization and the fixed forward require
$\mathbb E^{\mathbb P}[h]
=\mathbb E^{\mathbb P}[hX]=0$.
To raise the premium while leaving shortfall probability unchanged,
the perturbation must also satisfy
\begin{equation}
	\mathbb E^{\mathbb P}[h\mathbf 1_D]=0
	\qquad
	\mathbb E^{\mathbb P}[hL]>0
	\label{eq:interpretation_valuation_experiment}
\end{equation}
Valuation must increase the priced loss without increasing total
priced probability on the shortfall event.
A uniform increase in the normalized kernel throughout $D$ cannot
do this in the fixed-payoff experiment.
These are weighted-average conditions.
They do not require pointwise monotone reweighting, and the general
identities must be used when payoffs, physical probabilities, or the
forward also move.

One useful level comparison is available under an additional
restriction.
If the projected kernel
$g_t(x)=\mathbb E_t^{\mathbb P}[w_t\mid X_t=x]$
is nonincreasing below the forward, worse losses receive weakly
greater weight and
$\operatorname{Cov}_t^{\mathbb P}(w_t,L_t\mid D_t)\geq0$.
Then $m_t\geq m_t^{\mathbb P}$.
This condition must be checked in the model.
It neither follows from the decomposition nor implies that the
valuation contribution rises after a shock.

\paragraph{Joint option-price objects.}

The premium, shortfall probability, and idealized log-contract
volatility summarize different features of the same priced
distribution
\begin{equation}
	a_t=\mathbb E_t^{\mathbb Q}[(1-X_t)^+]
	\qquad
	q_t=\mathbb Q_t(X_t<1)
	\qquad
	v_t^2=-\frac{2}{\tau}
	\mathbb E_t^{\mathbb Q}[\log X_t]
	\label{eq:interpretation_joint_option_objects}
\end{equation}
Under common scale, the premium is exactly proportional to the
scale parameter.
For small dispersion and suitable moment and integrability
conditions, the log-contract expansion gives
$v_t\simeq s_t
\sqrt{\mathbb E^{\mathbb Q}[Z^2]/\tau}$
when $X_t=1+s_tZ$.
Neither relationship identifies the common shape.
Different distributions can share the same premium and probability,
and their shapes can change while preserving these moments and an
approximately proportional movement in $v_t$.

In a disaster mixture, changing disaster frequency and the
dispersion of other outcomes can affect all three objects.
The broad shortfall probability need not rise with disaster
frequency, because normalization to the forward also changes the
location of other outcomes relative to the threshold.

\paragraph{The local Gaussian benchmark.}
\label{App:DPPGaussianBenchmark}

Suppose
$X=1+\mu+\sigma\varepsilon$ under $\mathbb P$,
where $\sigma>0$ and $\varepsilon\sim N(0,1)$.
With $\theta=\mu/\sigma$, let
$w=\exp(-\theta\varepsilon-\theta^2/2)$.
The normal moment-generating function gives
$\mathbb E^{\mathbb P}[w]=1$.
Moreover,
\[
\phi(\varepsilon)
\exp(-\theta\varepsilon-\theta^2/2)
=
\phi(\varepsilon+\theta)
\]
Thus $\varepsilon$ has distribution $N(-\theta,1)$ under
$\mathbb Q$, and $X$ has distribution $N(1,\sigma^2)$.
It follows that
\[
q=\frac12
\qquad
a=\frac{\sigma}{\sqrt{2\pi}}
\qquad
m=m^+=\sigma\sqrt{\frac{2}{\pi}}
\]
Under $\mathbb P$, the shortfall event is
$\{\varepsilon<-\theta\}$.
The truncated-normal expectation
$\mathbb E[\varepsilon\mid\varepsilon<-\theta]
=-\phi(\theta)/\Phi(-\theta)$ gives
\[
p=\Phi(-\theta)
\qquad
m^{\mathbb P}
=
\sigma\left[
\frac{\phi(\theta)}{\Phi(-\theta)}-\theta
\right]
\qquad
\kappa=\frac{q}{p}
=\frac{1}{2\Phi(-\theta)}
\]
This example distinguishes physical from priced shortfall components
under an explicit change of measure.

The equality of the priced and physical variances depends on the
exponential-affine change of measure in this benchmark.
Gaussian primitive innovations alone do not establish Gaussian
equity payoffs or an exponential-affine projected pricing kernel.
Nonlinear policies and valuation can change the shape under either
measure.
The arithmetic-Gaussian construction is also only a local
approximation for equity, since it permits negative terminal prices.

\subsubsection{Volatility Innovations: Estimation, Recovery, and Sample Sensitivity}
\label{app:bdg_recovery}

\paragraph{Innovation construction and monthly option measures.}

The first stage uses the original restricted four-lag VAR of
\citet{BergerDewBeckerGiglio2020SecondMomentNews}.
The variable ordering is log realized volatility, log implied
volatility, the federal funds rate, log manufacturing employment,
and log industrial production.
The two volatility equations exclude lags of the three
macroeconomic variables.
The estimation sample is February 1983--December 2014.
I retain the authors' innovation construction and scale realized
volatility to match the cumulative future-volatility effect of
news over months 1--23, as in their replication code.
The component estimates extend these innovations with local
projections; they do not use the authors' VAR response estimator.

The option horizon is 30 days.
For the final three trading sessions $\mathcal T_M$ of month $M$,
let $D_d=e^{-d_d\tau}$ be the daily carry factor and $\chi_d$ the
DPP.
The monthly objects are
\[
A_M
=
\frac{\sum_{d\in\mathcal T_M}\chi_d}
{\sum_{d\in\mathcal T_M}D_d}
\qquad
q_M
=
\frac{\sum_{d\in\mathcal T_M}D_dq_d}
{\sum_{d\in\mathcal T_M}D_d}
\qquad
m_M=\frac{A_M}{q_M}
\]
All three sessions must have usable observations.
An earlier session never replaces a missing required observation.
The baseline retains 296 of the 300 months in 1990--2014.
September 1992, May 1993, August 1994, and October 1994 are
excluded.
Each daily shortfall is defined relative to that date's forward,
so the monthly responses incorporate changes in the forward
threshold.

\paragraph{Local projections and origin counts.}

At each horizon $h=0,\ldots,24$, the dependent variable is
$\log z_{t+h}-\log z_{t-1}$.
Regressors are both innovations, an intercept, and four lags of
log realized volatility, log implied volatility, the funds rate,
employment growth, and industrial-production growth.
Component observations must be valid at $t-1$, $t$, and $t+h$.
Macroeconomic projections use the same origin dates and regressors,
with cumulative log employment or industrial production as the
outcome.
Identical samples and regressors preserve the response identity
$R_A^j(h)=R_q^j(h)+R_m^j(h)$ for each innovation $j$,
including within every bootstrap draw.

The baseline has 291 impact origins, 275 at twelve months, and
264 at 24 months.
Relative to impact, the twelve-month projection loses the twelve
origins in January--December 2014 because their endpoints exceed
the December 2014 sample boundary.
It also loses September 1991, May 1992, August 1993, and October
1993 because the corresponding $t+12$ option components are
missing.
These four endpoints are September 1992, May 1993, August 1994,
and October 1994, respectively.
Reporting horizons were fixed after the component experiment and
before the macroeconomic estimates were computed.

\paragraph{Bootstrap inference.}

Inference uses 999 circular-block resamples with 24-month blocks.
Resampling supplies frequency weights to intact observations at
their original dates, so leads and lags never cross artificial
block joins.
Each draw re-estimates the first-stage VAR, reconstructs the
innovations, and refits all projections using the same weights.
The tables include uncertainty in the realized-volatility
normalization.
The saved fixed-normalization responses are multiplied by the
ratio of that draw's normalization to the original normalization.
This changes response intervals but leaves component response
ratios unchanged.

Pointwise intervals are bootstrap percentiles.
The simultaneous bands use the bootstrap maximum absolute
deviation from the original estimates, standardized by each
estimate's bootstrap standard deviation, over the stated family.
They are not separately studentized bootstrap-$t$ intervals.
Inference conditions on the recovered daily option components.
Comparisons across recovery methods assess sensitivity without
incorporating option-recovery uncertainty into these bands.

\paragraph{Upside component responses.}

Define $m_M^+=A_M/(1-q_M)$ on the same monthly observations.
The local projections add $\log(1-q_M)$ and $\log m_M^+$ as outcomes,
retaining the original innovations, regressors, masks, and joint bootstrap.
The identity
\[
 R_A^j(h)=R_{1-q}^j(h)+R_{m^+}^j(h)
\]
holds at every horizon and in every retained draw. Adding these outcomes
reproduces the original component and macroeconomic point estimates and
all 999 original joint draws to numerical precision.

\begin{table}[htbp]
\centering\small
\caption{Upside impact responses to volatility innovations}
\label{Tab:AppBDGUpside}
\begin{tabular}{lrr}
\toprule
Innovation & Gain probability $1-q$ & Gain severity $m^+$\\
\midrule
Realized volatility & 0.15 & 8.61\\
 & $[-0.45,1.01]$ & $[3.81,16.73]$\\
Future-volatility news & 0.66 & 10.78\\
 & $[-0.20,1.45]$ & $[9.13,11.88]$\\
\bottomrule
\end{tabular}
\par\smallskip
\begin{minipage}{\textwidth}\footnotesize
\emph{Notes:} Entries are 100 times cumulative log responses on impact,
using 291 origins. Brackets are 95 percent pointwise percentile intervals
from the same 999 joint 24-month circular-block draws. The first stage,
innovations, and volatility normalization are re-estimated in every draw.
Inference conditions on the recovered option series.
\end{minipage}
\end{table}

Both innovations increase conditional priced gains as well as losses.
The finding concerns two-sided conditional magnitudes; it does not
identify an exclusively downside shock or establish invariant shape.

\paragraph{Comparison at equal impact severity.}

Let $R_z^j(h)$ denote the estimated cumulative log response of
outcome $z$ to innovation
$j\in\{\mathrm{RV},\mathrm{news}\}$ under the
equal-future-volatility normalization.
Scaling innovation $j$ to produce a 0.01 increase in impact log
severity gives the macroeconomic response
\begin{equation}
	\widetilde R_z^j(h)
	=
	0.01\frac{R_z^j(h)}{R_m^j(0)}
	\qquad
	z\in\{E,Y^{\mathrm{IP}}\}
	\label{eq:bdg_equal_severity_normalization}
\end{equation}
Here $E$ is manufacturing employment and $Y^{\mathrm{IP}}$ is
industrial production.
Each bootstrap draw uses its own impact severity denominator,
which is positive for both innovations in all 999 draws.
With the volatility normalization re-estimated within each draw,
the minimum denominator is 0.01057 for realized volatility and
0.09057 for news, in log units.
Re-estimating the initial volatility normalization does not
change these ratios, because numerator and denominator receive
the same multiplier.

The reported contrast is
$100[\widetilde R_z^{\mathrm{RV}}(h)
-\widetilde R_z^{\mathrm{news}}(h)]$.
Its simultaneous band covers both macroeconomic contrasts over
horizons 0--24, using the same maximum-standardized-deviation
construction as the other macroeconomic bands.

A common proportional mapping from impact severity to employment
would require
\begin{equation}
	R_E^j(h)=\gamma_hR_m^j(0)
	\qquad
	\text{for both }
	j\in\{\mathrm{RV},\mathrm{news}\}
	\label{eq:bdg_common_proportional_mapping}
\end{equation}
It would therefore imply equal normalized employment responses.
At twelve months, their difference is $-0.065$ in units of 100 times log changes, with
a pointwise interval of $[-0.098,-0.020]$ and a simultaneous
interval of $[-0.124,-0.006]$.
This ratio-based result is qualified by the division-free check below.
Both formulations provide pointwise evidence against the proportional
restriction; their simultaneous conclusions differ at twelve months.
The comparison does not hold later severity responses fixed,
estimate a causal mediation effect, or test a general
sufficiency claim for both at-the-money components or their paths.

\paragraph{Division-free inference for the proportional restriction.}

Write $e_j(h)=R_E^j(h)$ and $b_j=R_m^j(0)$ for cumulative log responses,
with $j\in\{R,N\}$. For nonzero impact severities, the common-mapping null is
\[
 e_R(h)/b_R=e_N(h)/b_N
 \quad\Longleftrightarrow\quad e_R(h)b_N-e_N(h)b_R=0.
\]
For comparable reporting units, the point estimate and bootstrap statistic are
\begin{equation}
 \widehat T_h=\frac{\widehat e_R(h)\widehat b_N-
 \widehat e_N(h)\widehat b_R}{\widehat b_R\widehat b_N},\qquad
 T_h^*=\frac{e_R^*(h)b_N^*-e_N^*(h)b_R^*}
 {\widehat b_R\widehat b_N}.
 \label{eq:app_bdg_cross_product}
\end{equation}
Because $100\times0.01=1$, these statistics are in the same units as the
reported employment contrast for a 0.01 impact log-severity increase.
The denominator is fixed at its point-estimate value in every draw;
it sets reporting units and introduces no random division.

\begin{table}[htbp]
\centering\small
\caption{Twelve-month employment restriction: two inference formulations}
\label{Tab:AppBDGCrossProduct}
\begin{tabular}{lrrr}
\toprule
Formulation & Estimate & Pointwise 95\% interval & Simultaneous 95\% interval\\
\midrule
Ratio & $-0.065$ & $[-0.098,-0.020]$ & $[-0.124,-0.006]$\\
Cross-product, fixed scale & $-0.065$ & $[-0.104,-0.023]$ & $[-0.1305,0.0005]$\\
\bottomrule
\end{tabular}
\par\smallskip
\begin{minipage}{\textwidth}\footnotesize
\emph{Notes:} Both procedures use the same 999 joint draws and 275
origins at twelve months. Simultaneous bands cover employment and
industrial-production restrictions over horizons 0--24, using the
maximum standardized bootstrap deviation. Pointwise intervals are
percentile intervals. Inference conditions on recovered components.
\end{minipage}
\end{table}

The point estimates of $b_R$ and $b_N$ are 0.09033 and 0.12379 in log units.
Every retained draw has positive impact severities, so the cross-product
and ratio have identical signs within each draw. Their bootstrap magnitudes
and therefore simultaneous critical values differ. The cross-product
simultaneous interval narrowly includes zero at twelve months. The result
has pointwise support under both formulations, while simultaneous
significance depends on the formulation. The additional check does not
hold subsequent severity paths or other equilibrium responses fixed.

Let $s_i$ be the bootstrap standard deviation at coordinate $i$ and define
$M_b=\max_{i\in\mathcal F}|(T_{b,i}^*-\widehat T_i)/s_i|$ for the
50-coordinate family $\mathcal F$ (employment and industrial production,
each at horizons 0--24). The 95 percent critical value is the empirical
0.95 quantile of $M_b$. For coordinate $i$, the reported adjusted
$p$-value is
\[
 p_i^{\mathrm{adj}}=
 \frac{1+\sum_{b=1}^{999}\mathbf 1\{M_b\geq|\widehat T_i|/s_i\}}{1000}.
\]
The ratio critical value is 2.80964 and its twelve-month employment
statistic is 3.08809, giving $p^{\mathrm{adj}}=0.024$.
The cross-product values are 3.14906, 3.12350, and 0.053.
These are coordinate-wise family-adjusted $p$-values, not global tests
of the entire path. The original reporting protocol listed horizons
0, 6, 12, and 24 and a mean over horizons 0--6; it did not specify a
sole primary twelve-month employment hypothesis. No smaller outcome
family is selected after inspecting these results.

\paragraph{Two-component mapping and equal-premium comparison.}

With two innovations, an unrestricted mapping
$R_E^j(12)=\beta_qR_q^j(0)+\beta_mR_m^j(0)$ exactly fits the two
employment responses whenever the component-response matrix is nonsingular.
The point estimates are $\beta_q=-1.414$ and $\beta_m=-0.0953$, with
ratio 14.83. Across the joint draws, their central 95 percent ranges are
$[-9.77,12.76]$, $[-0.503,0.619]$, and $[-60.10,82.05]$, respectively.
The determinant changes sign in 276 of 999 draws; inversion becomes
unstable as the compositions approach one another. No draw is trimmed.
These descriptive ranges do not identify causal component effects,
and an exactly fitting two-weight mapping supplies no additional
restriction with which to test component sufficiency.

Replacing $R_m^j(0)$ in
\eqref{eq:bdg_equal_severity_normalization} with $R_A^j(0)$
gives the response to an innovation scaled to raise impact log
premium by 0.01.
The twelve-month employment contrast is $-0.06764$ in units of 100 times log changes,
with a 95\% pointwise interval of $[-0.09960,-0.02166]$ and a
simultaneous interval of $[-0.12718,-0.00810]$.
This exploratory band covers both macroeconomic contrasts over
horizons 0--24 under the premium normalization.
It is a separate family from the severity-normalized bands.
Impact premium denominators are positive in all 999 draws.
The same evidence therefore disfavors a common proportional
mapping from the impact premium response alone to twelve-month
employment.

\paragraph{Direct comparison of severity paths.}

Table~\ref{tab:bdg_severity_paths} reports selected horizons of
$100[R_m^{\mathrm{news}}(h)-R_m^{\mathrm{RV}}(h)]$ under the original
equal-future-volatility normalization, re-estimated in every draw.
The difference changes sign and no horizon from 0 through 24 excludes
zero even pointwise. The simultaneous band covers this one contrast
over all 25 horizons, separately from the macroeconomic bands.
This exploratory comparison was added after inspecting the original
results; it establishes neither a larger or more persistent severity
response to news nor equivalence of the paths. The reporting window
remains 0--24 and the initial volatility normalization uses months 1--23.

\begin{table}[!htbp]
	\centering
	\small
	\setlength{\tabcolsep}{3pt}
	\caption{Severity responses and their difference}
	\label{tab:bdg_severity_paths}
	\begin{tabular*}{\textwidth}{@{\extracolsep{\fill}}lrrrcc@{}}
		\toprule
		& & & \multicolumn{3}{c}{News minus RV}\\
		\cmidrule(l){4-6}
		Horizon & RV & News & Difference & Pointwise & Simultaneous\\
		\midrule
		0 & $9.03$ & $12.38$ & $3.35$
		& $[-6.41,7.96]$ & $[-8.80,15.49]$\\
		6 & $3.67$ & $2.86$ & $-0.82$
		& $[-5.14,4.24]$ & $[-8.29,6.66]$\\
		12 & $2.62$ & $1.85$ & $-0.78$
		& $[-6.39,3.95]$ & $[-9.48,7.93]$\\
		24 & $1.17$ & $2.10$ & $0.93$
		& $[-4.22,6.28]$ & $[-7.88,9.73]$\\
		\bottomrule
	\end{tabular*}
	\par\smallskip
	\begin{minipage}{\textwidth}\footnotesize
		\emph{Notes:} Entries are 100 times cumulative log-severity
		responses.
		Both innovations retain the equal-future-volatility
		normalization, with the RV multiplier re-estimated in each
		of the 999 joint draws.
		Intervals refer to the news-minus-RV difference and are 95\%.
		The simultaneous band covers all horizons 0--24, including
		those not displayed.
		Origins number 291, 281, 275, and 264 at the four displayed
		horizons.
		Differences are formed before rounding.
		This is an exploratory comparison; an insignificant
		difference does not establish equivalence.
		Recovery uncertainty is not included in these bands.
	\end{minipage}
\end{table}

\paragraph{Recovery methods and sample sensitivity.}

The recovery exercise uses the put-spread baseline, call spreads,
a smooth curve, and entropy recovery with lognormal and uniform
reference distributions.
Each method retains its own fitted premium.
I estimate responses on each baseline--alternative intersection
and on the common intersection of all five methods.
Across these comparisons, the impact severity intervals remain
strictly positive for both innovations.

Table~\ref{tab:bdg_pairwise} shows how the associated employment
contrast changes with the available dates under the
equal-future-volatility normalization.
The call and smooth samples preserve coverage, and the
simultaneous contrast remains negative.
The entropy and five-method intersections lose substantial
coverage, especially in 1992--1997, and yield wider bands.
Because the macroeconomic projections do not use recovered
components as regressors, this table measures sample sensitivity
of the macroeconomic contrast.

\begin{table}[!htbp]
	\centering
	\small
	\setlength{\tabcolsep}{3pt}
	\caption{Twelve-month employment contrast across recovery samples}
	\label{tab:bdg_pairwise}
	\begin{tabular*}{\textwidth}{@{\extracolsep{\fill}}lrrccc@{}}
		\toprule
		Sample & Months & Origins & Contrast & Pointwise & Simultaneous\\
		\midrule
		Original baseline & 296 & 275 & $-0.628$
		& $[-0.929,-0.224]$ & $[-1.159,-0.097]$\\
		Baseline--call & 292 & 263 & $-0.645$
		& $[-0.953,-0.209]$ & $[-1.201,-0.090]$\\
		Baseline--smooth & 295 & 273 & $-0.627$
		& $[-0.922,-0.208]$ & $[-1.159,-0.094]$\\
		Entropy: lognormal & 180 & 123 & $-0.687$
		& $[-1.226,0.010]$ & $[-1.992,0.618]$\\
		Entropy: uniform & 181 & 125 & $-0.713$
		& $[-1.387,0.025]$ & $[-2.050,0.624]$\\
		All five methods & 178 & 118 & $-0.681$
		& $[-1.255,0.083]$ & $[-2.059,0.698]$\\
		\bottomrule
	\end{tabular*}
	\par\smallskip
	\begin{minipage}{\textwidth}\footnotesize
		\emph{Notes:} RV-minus-news manufacturing-employment
		responses, multiplied by 100, with normalization
		re-estimated in every draw.
		Origins refer to $h=12$.
		Each entropy row uses its intersection with the baseline.
		Intervals are 95\%.
		Each simultaneous band covers both macroeconomic contrast
		paths over horizons 0--24 within that sample; coverage is
		not joint across the six samples.
		Pairings were fixed before estimation and all are reported.
		Baseline, call, and smooth samples retain 999 draws;
		entropy pairings retain 994 and the five-method sample
		retains 993.
		Rank-deficient resamples are recorded in the replication
		output and excluded jointly within each comparison.
	\end{minipage}
\end{table}

The equal-impact-severity employment contrast also remains
negative on the high-coverage intersections.
Using baseline severity on the call-spread intersection gives
$-0.067$ in units of 100 times log changes at twelve months, with a 95\% simultaneous
interval of $[-0.127,-0.006]$.
Using call-spread severity gives $-0.069$ and
$[-0.133,-0.005]$.
On the smooth-curve intersection, baseline severity gives
$-0.064$ and $[-0.123,-0.006]$; smooth-curve severity gives
$-0.063$ and $[-0.120,-0.007]$.
Each band covers both macroeconomic contrasts over horizons
0--24 within its sample and severity normalization.
Coverage is not joint across samples, methods, or normalizations.
All four comparisons use 999 retained joint draws, with positive
impact severity denominators in every draw.
These added comparisons are exploratory.
Equal-severity comparisons on the lower-coverage entropy
intersections are not reported here.

An additional comparison examines the RV-minus-news difference
in impact probability-to-premium ratios on the identical
five-method sample.
The differences are $0.004$ for put spreads, $0.071$ for call
spreads, $0.084$ for the smooth curve, $0.127$ for entropy with a
lognormal reference, and $0.111$ for entropy with a uniform
reference.
Only the lognormal-reference contrast excludes zero under a
simultaneous band covering all five method-specific differences.
Its interval is $[0.016,0.237]$, compared with
$[-0.002,0.223]$ for the uniform reference.
This comparison was added after inspecting earlier results.
Adjustment across methods does not make it a prespecified test,
and the evidence does not establish a recovery-robust distinction
between the two innovation classes' composition ratios.\footnote{
	Differences in fitted premium responses contribute modestly
	to the entropy ratios.
	Dividing the entropy news-probability responses by the baseline
	premium response gives $-0.114$ and $-0.116$, compared with
	their own-premium ratios of $-0.117$ and $-0.120$.
	This arithmetic comparison does not replace each method's
	internally consistent decomposition.}

\subsection{Supplement to the DPP Decomposition}
\label{App:DPPSection3}

This appendix supplies the derivations, component-recovery details,
stability comparisons, and quotation calculations supporting
Section~\ref{Sec:DecomposeDPP}.
It separates sampling inference conditional on recovered components
from the range of decompositions allowed by quotation intervals.

\subsubsection{Decomposition and Conditional Loss--Gain Asymmetry}

Let $X=S_{t+\tau}/F_{t,t+\tau}$, so that
$\mathbb E_t^{\mathbb Q}[X]=1$.
The adjusted premium is
$a=\mathbb E_t^{\mathbb Q}[(1-X)^+]=qm$.
The premium alone does not identify its components, even after imposing
the forward-mean restriction and positive terminal values.
For any $a\in(0,1)$ and $q\in(a,1)$, assign probability $q$ to
$X=1-a/q$ and probability $1-q$ to $X=1+a/(1-q)$.
Both values are positive, the mean is one, and the expected shortfall
is $a$, while the shortfall probability can take any such $q$.
For a strictly positive terminal value, $0<m<1$ whenever a shortfall
has positive probability, so $a<q$.

\paragraph{Probability levels and severity asymmetry.}
\label{App:DPPSeverityAsymmetry}

Since $\mathbb E_t^{\mathbb Q}[X_t-1]=0$,
\[
\mathbb E_t^{\mathbb Q}[(1-X_t)^+]
=
\mathbb E_t^{\mathbb Q}[(X_t-1)^+]
\]
Define
$m_t^+
=\mathbb E_t^{\mathbb Q}[X_t-1\mid X_t\geq1]$.
For $0<q_t<1$, conditioning on the two events gives
\[
q_tm_t=(1-q_t)m_t^+
\qquad
\frac{m_t}{m_t^+}=\frac{1-q_t}{q_t}
\]
These give the two identities in \eqref{eq:dpp_upside_identity}.
The identity remains valid with an atom at the forward when that
atom is included in the conditioning event defining $m_t^+$.
For a continuous distribution, $q_t=1/2$ places the forward at a
median.
It does not imply symmetry.
Conversely, symmetry about the forward implies $q_t=1/2$ when
there is no atom there.

The ratios evaluated at $q=0.423$ and $q=0.444$ are approximately
1.364 and 1.252.
These calculations transform the reported group mean probabilities.
They are not group means of the date-specific severity ratios.
For any common set of averaging weights,
\begin{equation}
	\mathbb E\left[\frac{m_t}{m_t^+}\right]
	=
	\mathbb E\left[\frac{1}{q_t}\right]-1
	\geq
	\frac{1}{\mathbb E[q_t]}-1
	\label{eq:app_dpp_ratio_aggregation}
\end{equation}
provided the expectation exists.
The inequality follows from convexity of $1/q$.
Differences in the dispersion of $q_t$ across groups can affect the
mean ratio, so an ordering of group mean probabilities does not by
itself establish the reverse ordering of group mean severity ratios.

\subsubsection{Scale Benchmarks and Local Option Information}

Proposition~\ref{prop:dpp_scale_severity} supplies an exact additive
common-scale benchmark. The following derivations distinguish it from
lognormal scale variation and a fixed nonlinear payoff map.

\paragraph{The lognormal benchmark.}
\label{App:DPPLognormalBenchmark}

Let
$X=\exp(-s^2/2+s\varepsilon^{\mathbb Q})$,
with $s>0$ and $\varepsilon^{\mathbb Q}\sim N(0,1)$.
Then $\mathbb E^{\mathbb Q}[X]=1$ and
$X<1$ exactly when $\varepsilon^{\mathbb Q}<s/2$.
Therefore
\[
q=\Phi(s/2)
\qquad
\mathbb E^{\mathbb Q}
[X\mathbf 1_{\{X<1\}}]
=
\Phi(-s/2)
\]
Subtracting the second expression from the first gives
$a=2\Phi(s/2)-1$.
Consequently,
\begin{equation}
	q=\frac{1+a}{2}
	\qquad
	m=\frac{2a}{1+a}
	\qquad
	m^+=\frac{2a}{1-a}
	\qquad
	\frac{m}{m^+}=\frac{1-a}{1+a}
	\label{eq:app_dpp_lognormal_proof}
\end{equation}
The associated within-family elasticities satisfy
\begin{equation}
	\frac{d\log q}{d\log a}=\frac{a}{1+a}
	\qquad
	\frac{d\log m}{d\log a}=\frac{1}{1+a}
	\label{eq:interpretation_lognormal_comparison}
\end{equation}
These are analytical elasticities within the lognormal family.
They are not estimates of the aggregate covariance loading in
Section~\ref{Sec:DecomposeDPP}.

For small $a$, severity accounts for nearly all premium variation.
Nevertheless, $q>1/2$ and $m/m^+<1$ for every $s>0$.
A probability below one half cannot be obtained by changing the
volatility parameter within this family.
The lognormal family is not an exact additive common-scale family
around the forward: both $q$ and standardized shape change with
$s$.
\paragraph{The local implied-volatility slope.}
\label{App:DPPLocalSlope}

At a fixed maturity, write normalized strike as $k=K/F$ and total
implied standard deviation as $\omega(\ell)$, where $\ell=\log k$.
The normalized Black--Scholes representation of a put price is
\[
 \mathcal P(k)=k\Phi(-d_2)-\Phi(-d_1),\qquad
 d_1=\frac{-\log k}{\omega}+\frac{\omega}{2},\qquad
 d_2=d_1-\omega.
\]
This represents implied volatility, without assuming that the actual
priced distribution is lognormal. At $k=1$, the price is
$a=2\Phi(\omega(0)/2)-1$. For a differentiable arbitrage-free price curve,
with no atom at the forward, total differentiation gives
\[
 q=\mathcal P'(1)=\Phi(\omega(0)/2)
 +\phi(\omega(0)/2)\omega'(0).
\]
Define the premium-adjusted local probability gap by
\begin{equation}
 \Delta_t^\tau\equiv q_t^\tau-\frac{1+a_t^\tau}{2}.
 \label{eq:dpp_local_gap}
\end{equation}
The preceding derivative yields
\begin{equation}
 \Delta_t^\tau=
 \phi\!\left(\frac{\omega_t(0)}{2}\right)
 \left.\frac{\partial\omega_t(\ell)}{\partial\ell}\right|_{\ell=0}.
 \label{eq:dpp_local_smile_slope}
\end{equation}
This is a single-maturity identity.
The finite-strike estimator approximates the local derivative; maturity
and temporal mixtures preserve the premium decomposition but need not
have the implied-volatility slope of an exactly traded target-horizon claim.
The gap is distinct from global third-moment skewness.
\paragraph{Fixed payoff curvature.}
\label{App:DPPFixedCurvature}

\begin{proposition}[Fixed curvature and the probability response]
\label{prop:app_fixed_curvature}
Let $\varepsilon$ have a symmetric, bounded-support distribution under $\mathbb Q$, with $\mathbb E[\varepsilon^2]=1$ and a twice continuously differentiable density $f$ near zero, where $f(0)>0$. Write $\mu=\mathbb E[(-\varepsilon)^+]$. For a fixed coefficient $c$ and sufficiently small $\sigma>0$, define
\begin{equation}
X_\sigma=1+\sigma\varepsilon+\frac{c}{2}\sigma^2(\varepsilon^2-1).
\label{eq:app_curved_payoff}
\end{equation}
Then $X_\sigma$ is positive and has priced mean one. Its probability, premium, and local probability elasticity satisfy
\begin{align}
q(\sigma)&=\frac12+\frac{c f(0)}2\sigma+O(\sigma^3),
& a(\sigma)&=\mu\sigma+O(\sigma^3),
\label{eq:app_curvature_moments}\\
\eta_q(\sigma)&\equiv\frac{d\log q(\sigma)}{d\log a(\sigma)}
=\frac{q(\sigma)-1/2}{q(\sigma)}+O(\sigma^3).
\label{eq:app_curvature_elasticity}
\end{align}
For $c<0$, sufficiently small increases in scale both deepen conditional losses and lower the shortfall probability.
\end{proposition}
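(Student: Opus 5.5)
The plan is to reduce everything to the location of the single shortfall threshold in $\varepsilon$-space, and then to Taylor-expand the distribution function and the put payoff around zero.

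\emph{Basic properties.} Because $\varepsilon$ has bounded support, say $|\varepsilon|\le B$, we have $|X_\sigma-1|\le \sigma B+\tfrac{|c|}{2}\sigma^2(B^2+1)$. So $X_\sigma>0$ for small $\sigma$. Since $\mathbb E[\varepsilon]=0$ and $\mathbb E[\varepsilon^2]=1$, the priced mean is one.

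\emph{The threshold.} Write $g_\sigma(\varepsilon)=\varepsilon+\tfrac{c\sigma}{2}(\varepsilon^2-1)$. For $c\neq0$ the quadratic $g_\sigma=0$ has one root $r(\sigma)$ near zero and a second root of order $-2/(c\sigma)$. For small $\sigma$ the second root lies outside $[-B,B]$. Moreover $g_\sigma'=1+c\sigma\varepsilon>0$ on the support, so $\{X_\sigma<1\}=\{\varepsilon<r(\sigma)\}$. For $c=0$ this holds trivially with $r=0$. The fixed-point relation $r=\tfrac{c\sigma}{2}(1-r^2)$ gives
\[
r(\sigma)=\tfrac{c\sigma}{2}+O(\sigma^3),\qquad r'(\sigma)=\tfrac c2+O(\sigma^2),
\]
with $r$ smooth in $\sigma$ by the implicit function theorem.

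\emph{Expansion of $q$.} Write $q(\sigma)=F(r(\sigma))$. Because $f$ is $C^2$ near zero and symmetry gives $f'(0)=0$,
\[
F(r)=\tfrac12+f(0)r+O(r^3).
\]
Substituting the expansion of $r$ gives the stated expansion of $q$. Likewise $q'(\sigma)=f(r)r'=\tfrac{cf(0)}{2}+O(\sigma^2)$.

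\emph{Expansion of $a$.} Use the threshold to write
\[
a=-\sigma\,\mathbb E[\varepsilon\mathbf 1_{\{\varepsilon<r\}}]-\tfrac{c}{2}\sigma^2\,\mathbb E[(\varepsilon^2-1)\mathbf 1_{\{\varepsilon<r\}}].
\]
For the first term, $\mathbb E[\varepsilon\mathbf 1_{\{\varepsilon<r\}}]=-\mu+\int_0^r\varepsilon f=-\mu+O(r^2)$. For the second term, symmetry and unit variance give $\mathbb E[(\varepsilon^2-1)\mathbf 1_{\{\varepsilon<0\}}]=0$, so the truncated expectation is $O(r)=O(\sigma)$. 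Together these give $a=\mu\sigma+O(\sigma^3)$.

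For the derivative, differentiate under the integral. The boundary term vanishes because $1-X_\sigma=0$ at $\varepsilon=r$, leaving
\[
a'(\sigma)=-\mathbb E[(\varepsilon+c\sigma(\varepsilon^2-1))\mathbf 1_{\{\varepsilon<r\}}]=\mu+O(\sigma^2),
\]
by the same two estimates.

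\emph{Elasticity.} Write $\eta_q=(\sigma q'/q)/(\sigma a'/a)$. The denominator is $\sigma a'/a=1+O(\sigma^2)$. The numerator is
\[
\frac{\sigma q'}{q}=\frac{cf(0)\sigma/2}{q}+O(\sigma^3)=\frac{q-1/2}{q}+O(\sigma^3),
\]
using the expansion of $q$ and $q$ bounded away from zero. Combining the two yields \eqref{eq:app_curvature_elasticity}.

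\emph{Final claim.} For $c<0$ we have $q'(\sigma)<0$ for small $\sigma$, so the shortfall probability falls as scale rises. The severity elasticity is $1-\eta_q$, and $\eta_q<0$, so $d\log m/d\log a>1$. Since $a'>0$, the conditional severity $m=a/q$ rises, i.e.\ losses deepen.

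The main obstacle is controlling the error orders in the derivatives, not just the levels. In particular, showing $q'$ and $a'$ have $O(\sigma^2)$ remainders is what delivers the $O(\sigma^3)$ error in $\eta_q$. I would handle this through the explicit formulas $q'=f(r)r'$ and the boundary-free expression for $a'$, rather than by differentiating the asymptotic expansions term by term.
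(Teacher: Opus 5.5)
Your proposal is correct and follows essentially the same route as the paper. You solve for the threshold $r(\sigma)=c\sigma/2+O(\sigma^3)$, use symmetry for $f'(0)=0$ and $\mathbb E[(\varepsilon^2-1)\mathbf 1\{\varepsilon<0\}]=0$, move the integration boundary from zero to the threshold at $O(\sigma^3)$ cost, and differentiate to get $q'=cf(0)/2+O(\sigma^2)$ and $a'=\mu+O(\sigma^2)$. Your explicit formulas $q'=f(r)r'$ and the boundary-free Leibniz expression for $a'$ make rigorous the remainder control that the paper compresses into ``differentiation gives.'' Your elasticity step is valid because the $O(\sigma^2)$ correction in $\sigma a'/a$ multiplies a numerator that is itself $O(\sigma)$.
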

\begin{proof}
Bounded support ensures positivity and monotonicity in $\varepsilon$ for sufficiently small $\sigma$. Symmetry and unit variance give $\mathbb E[X_\sigma]=1$. The threshold $e_\sigma$ satisfying $X_\sigma=1$ solves $e_\sigma+(c\sigma/2)(e_\sigma^2-1)=0$, hence $e_\sigma=c\sigma/2+O(\sigma^3)$. Symmetry gives $f'(0)=0$, so $q(\sigma)=\Pr(\varepsilon<e_\sigma)=1/2+c f(0)\sigma/2+O(\sigma^3)$. In the shortfall integral, the coefficient of $\sigma^2$ over $\varepsilon<0$ vanishes because $\mathbb E[(\varepsilon^2-1)\mathbf 1\{\varepsilon<0\}]=0$. Moving the integration boundary from zero to $e_\sigma$ adds $O(\sigma^3)$, yielding $a(\sigma)=\mu\sigma+O(\sigma^3)$. Differentiation gives $q'(\sigma)=c f(0)/2+O(\sigma^2)$ and $a'(\sigma)=\mu+O(\sigma^2)$, which imply~\eqref{eq:app_curvature_elasticity}. For $c<0$, $q'<0$ and $a'>0$ locally, so $m=a/q$ rises.
\end{proof}

The proposition is local and concerns a fixed distribution under the pricing
measure. It does not assert that the conditional law generated by a general
second-order equilibrium solution belongs to this family. Nor does a local
elasticity equal the regression coefficient over a finite sample unless
additional conditions make the relationship log-linear over the relevant range.

For the exact monthly mappings in Table~\ref{Tab:DPPBenchmarkComparison},
I retain all 439 observed $A_M$ values, define each counterfactual $q_M$
observation by observation, and re-estimate the same loading on 438 adjacent
changes and the same recession gap on 36 recession and 403 expansion months.
The constant-gap calibration uses
$\delta=439^{-1}\sum_M[q_M-(1+A_M)/2]$.
No distribution for quotation errors is imposed. These counterfactual
statistics are diagnostics conditional on observed premiums; the table does
not provide a joint sampling test for calibrated model parameters.
\paragraph{Finite-change benchmark diagnostics.}

The probability level and recession comparison distinguish restrictions
that a near-unit severity loading alone cannot.
A single common shape spanning the full sample implies constant $q$ and
therefore $\mathcal G(q)=0$, conflicting with the positive estimated gap.
An exactly lognormal priced distribution implies $q>1/2$ on every date,
conflicting with the recovered phase means below one-half.
Holding the premium-adjusted local gap $\Delta=q-(1+A)/2$ constant also
imposes a testable restriction. Its estimated monthly recession increase
is 1.391 percentage points, with a 95 percent HAC interval of
$[0.258,2.524]$, inconsistent with a constant gap.
These comparisons condition on component recovery.

For a finite-change comparison, I apply each exact probability mapping to
the observed monthly adjusted premiums and use the same regressions and
recession dates. The resulting probability loadings and log gaps are
reported in Table~\ref{Tab:DPPBenchmarkComparison}.
This calculation avoids replacing a sample regression coefficient by a
local derivative evaluated at a sample mean.

\begin{table}[htbp]
\centering\small
\caption{Probability restrictions and finite-change benchmark diagnostics}
\label{Tab:DPPBenchmarkComparison}
\begin{tabular}{lrr}
\toprule
Mapping evaluated on monthly premiums & $\beta_q$ & $\mathcal G(q)$\\
\midrule
Observed components & $-0.032$ & $0.049$\\
Common scale: constant $q$ & $0.000$ & $0.000$\\
Lognormal: $q=(1+A)/2$ & $0.022$ & $0.014$\\
Constant gap: $q=(1+A)/2+\delta$ & $0.026$ & $0.016$\\
\bottomrule
\end{tabular}
\par\smallskip
\begin{minipage}{\textwidth}\footnotesize
\emph{Notes:} Mappings are evaluated on the original 439 monthly adjusted
premiums, with 438 adjacent changes and 36 recession months.
The constant gap $\delta$ equals the sample mean of $q-(1+A)/2$.
The table reports descriptive model-implied statistics, not confidence
intervals or tests treating calibrated parameters as known. The exact
level and phase restrictions discussed in the text supply the corresponding
benchmark comparisons.
\end{minipage}
\end{table}

Fixed curvature gives a different local restriction. At $q=0.425$,
its leading-order elasticity is approximately $-0.176$; at $q=0.409$ it
is approximately $-0.222$. These are local illustrations, not the implied
monthly or quarterly regression coefficients. Even with one varying
parameter, a nonlinear probability mapping generally makes the covariance
loading depend on the entire path of premiums. A comparison with fixed
curvature therefore requires finite-scale predictions and control of
approximation error. The present calculation does not reject perturbation
methods or all one-parameter priced-distribution families.
The global fixed-shape comparison also differs from the within-month
quotation test in Section~\ref{Sec:CommonScaleInterpretation}, which permits a different candidate shape in every month.

\subsubsection{Component Recovery and Aggregation}
\label{App:DPPIdentification}

\paragraph{Identification and convexity bounds.}
For a fixed maturity, write
$P(K)=e^{-r\tau}\mathbb E^{\mathbb Q}[(K-S_{t+\tau})^+]$.
The put payoff is Lipschitz in $K$, so differentiating under the
expectation at a strike without an atom gives
$P'(K)=e^{-r\tau}\mathbb Q(S_{t+\tau}<K)$.
Evaluating at the forward proves
Proposition~\ref{prop:dpp_identification}.
If there is an atom, the left derivative prices the strict shortfall
event, while the right derivative also includes the atom.

Convexity gives, for $K_-<F<K_+$,
\begin{equation}
 e^{r\tau}\frac{P(F)-P(K_-)}{F-K_-}
 \leq q\leq
 e^{r\tau}\frac{P(K_+)-P(F)}{K_+-F}.
 \label{eq:dpp_probability_secants}
\end{equation}
The two expressions are averages of the cumulative probability over
the left and right intervals, respectively.
For positive probability bounds $\underline q$ and $\overline q$,
the corresponding severity bounds are $a/\overline q$ and
$a/\underline q$.
If the AtM price is linearly interpolated from these same two quoted
prices, the two secants coincide by construction.
That coincidence does not establish that the underlying derivative is
known. The quotation calculations instead permit quoted prices to
vary within their bid--ask intervals, conditional on the constructed
AtM premium and original forward.

\paragraph{Maturity and monthly aggregation.}
\label{App:DPPAggregation}
Let $w$ be the original weight on the longer component maturity, with
carry factors $D_s,D_\ell$ and adjusted premiums $a_s,a_\ell$.
Define
\begin{equation}
 D=(1-w)D_s+wD_\ell,\qquad
 q=\frac{(1-w)D_sq_s+wD_\ell q_\ell}{D},\qquad
 m=\frac{\chi}{Dq}.
 \label{eq:app_dpp_maturity_aggregation}
\end{equation}
Then $\chi=Dqm$ exactly, and
$a=\chi/D$ is the carry-adjusted mixture of the two component premiums.
These weights preserve the synthetic daily premium and its
decomposition; they do not identify a separately traded target-maturity law.

Let $\mathcal D_M$ contain the common daily observations in month $M$,
with $n_M$ dates. Since $\chi_j=D_jq_jm_j$, define
\begin{equation}
 \bar D_M=\frac{1}{n_M}\sum_{j\in\mathcal D_M}D_j,\qquad
 q_M=\frac{\sum_{j\in\mathcal D_M}D_jq_j}{\sum_{j\in\mathcal D_M}D_j},\qquad
 m_M=\frac{\sum_{j\in\mathcal D_M}D_jq_jm_j}
            {\sum_{j\in\mathcal D_M}D_jq_j}.
 \label{eq:app_dpp_monthly_aggregation}
\end{equation}
Thus $\bar\chi_M=\bar D_Mq_Mm_M$ and $A_M=q_Mm_M$.
With constant carry, $q_M$ is an arithmetic daily average, while
$m_M$ remains a probability-weighted severity average.
The same construction preserves $A_M=(1-q_M)m_M^+$ on identical dates.
The monthly components describe a mixture of daily forward-looking
distributions, not the distribution of a realized calendar-month return.

\paragraph{Sample and admissibility.}
The original maturity and quotation selection is retained before
checking whether estimates are admissible.
A selected maturity must have $0<\widehat q<1$, $a>0$, and
$a\leq\widehat q$, with a numerical tolerance of $10^{-12}$ on
the last inequality.
Missing brackets or failures of these conditions exclude 34 of the
8,825 premium dates (0.39 percent), including one recession date.
No truncation is applied and the retained daily premiums are unchanged.
All 439 months remain, with 8,791 daily component observations and
438 adjacent monthly changes.
Restricting monthly averages to component dates gives correlations
with the original DPP of 0.99998 in levels and 0.99984 in log changes;
the carry-adjusted correlations are 0.99998 and 0.99982.

\paragraph{Empirical attribution and inference.}
On identical observations, the log-product identity implies
$\beta_q+\beta_m=1$ for regressions with intercepts, and
$\mathcal G(A)=\mathcal G(q)+\mathcal G(m)$ for phase log gaps.
The two component regressions have opposite residuals and equal
slope standard errors. These covariance contributions need not lie
between zero and one.
Baseline intervals use Bartlett Newey--West covariance with six
monthly lags and the finite-sample adjustment $n/(n-2)$.
Recession shares use the delta method with the joint covariance of
the component and premium log gaps.
The severity-minus-probability recession gap is 0.477, with a
95 percent interval of $[0.276,0.679]$.
All intervals condition on recovered components; quotation uncertainty
is assessed separately below.

\subsubsection{Time Stability, Two-Sided Evidence, and Option Horizons}
\label{App:DPPTimeStability}

This subsection reports the time-stability and recession comparisons for
the baseline put-spread decomposition.
The sample contains 439 months from January 1990 through July 2026 and
438 adjacent monthly changes.
The adjusted premium and its components satisfy $A_M=q_Mm_M$ in every
month, with identical daily observations used to construct all three
monthly series.
Severity is the risk-neutral expected proportional shortfall below the
forward, conditional on that event.

\paragraph{Rolling elasticities}

Table~\ref{Tab:DPPRollingStability} reports rolling estimates of the
severity elasticity from regressions of $\Delta\log m_M$ on an
intercept and $\Delta\log A_M$.
The corresponding probability regression uses the same observations,
so the two slope estimates sum to one in each window.
Severity elasticities range from 0.951 to 1.159 in five-year windows,
from 0.987 to 1.097 in ten-year windows, and from 1.005 to 1.064 in
fifteen-year windows.
Figures~\ref{fig:dpp_rolling_60m} and~\ref{fig:dpp_rolling_180m} show
the five- and fifteen-year estimates; the ten-year estimates appear
in Figure~\ref{fig:dpp_rolling_elasticities} in the main text and
are summarized in Table~\ref{Tab:DPPRollingStability}.

Confidence intervals use a Bartlett Newey--West covariance estimator
with six monthly lags and the finite-sample adjustment $n/(n-2)$,
where $n$ is the number of observations in the regression.
They are conditional on the recovered component series.
The windows overlap, and the intervals are pointwise rather than
simultaneous bands.
The lowest lower confidence limit is therefore a descriptive summary
of these intervals, not a simultaneous confidence bound or a formal
test that the coefficients are constant through time.

\begin{table}[!htbp]
	\centering
	\small
	\caption{Severity Elasticities Across Rolling Windows}
	\label{Tab:DPPRollingStability}
	\begin{tabular}{@{}lrrrr@{}}
		\toprule
		Window & Windows & Min.\ $\widehat\beta_m$ & Max.\ $\widehat\beta_m$
		& \shortstack[r]{Lowest pointwise\\95\% lower limit} \\
		\midrule
		5 years & 379 & 0.951 & 1.159 & 0.869 \\
		10 years & 319 & 0.987 & 1.097 & 0.935 \\
		15 years & 259 & 1.005 & 1.064 & 0.957 \\
		\bottomrule
	\end{tabular}
	\par\smallskip
	\begin{minipage}{\textwidth}
		\footnotesize
		\textit{Notes:} Windows contain 60, 120, or 180 adjacent monthly
		changes. Their first endpoints are January 1995, January 2000,
		and January 2005, respectively; all end by July 2026.
		Each regression includes an intercept.
		The last column reports the minimum of the pointwise 95 percent
		lower confidence limits across windows, using six monthly HAC
		lags. These limits do not provide simultaneous coverage.
	\end{minipage}
\end{table}

\paragraph{Sensitivity to the HAC lag length}

Table~\ref{Tab:DPPHACSensitivity} compares three, six, and twelve
monthly HAC lags in the full sample.
Changing the lag length affects the covariance estimates, leaving
the point estimates unchanged.
The severity elasticity remains close to one, the probability
elasticity remains close to zero, and the confidence interval for
the recession contribution contrast $\mathcal G(m)-\mathcal G(q)$
remains strictly positive.
Here $\mathcal G(z)$ is the mean log level of $z_M$ in recession
months minus its mean log level in expansion months, and the severity
share is $100\mathcal G(m)/\mathcal G(A)$.

\begin{table}[!htbp]
	\centering
	\small
	\caption{Sensitivity of Conditional Inference to the HAC Lag Length}
	\label{Tab:DPPHACSensitivity}
	\setlength{\tabcolsep}{4pt}
	\begin{tabular}{@{}lrrrr@{}}
		\toprule
		& & \multicolumn{3}{c}{95\% confidence interval} \\
		\cmidrule(l){3-5}
		Statistic & Estimate & 3 lags & 6 lags & 12 lags \\
		\midrule
		Severity elasticity $\widehat\beta_m$ & $1.032$ & $[0.995,1.070]$ & $[0.994,1.071]$ & $[0.993,1.071]$ \\
		Probability elasticity $\widehat\beta_q$ & $-0.032$ & $[-0.070,0.005]$ & $[-0.071,0.006]$ & $[-0.071,0.007]$ \\
		Probability gap $\mathcal G(q)$ & $0.049$ & $[0.023,0.075]$ & $[0.021,0.077]$ & $[0.021,0.077]$ \\
		Contribution contrast $\mathcal G(m)-\mathcal G(q)$ & $0.477$ & $[0.292,0.662]$ & $[0.276,0.679]$ & $[0.276,0.679]$ \\
		Severity share (\%) & $91.5$ & $[86.7,96.3]$ & $[86.7,96.3]$ & $[86.8,96.1]$ \\
		\bottomrule
	\end{tabular}
	\par\smallskip
	\begin{minipage}{\textwidth}
		\footnotesize
		\textit{Notes:} Elasticities use 438 adjacent monthly changes;
		recession gaps and shares use 439 monthly levels, including
		36 recession and 403 expansion months.
		All regressions include an intercept and use Bartlett
		Newey--West covariance estimates with the $n/(n-2)$ adjustment.
		Contrast intervals retain the covariance between component
		estimates. Share intervals use the delta method and the joint
		covariance of the severity and premium gaps.
		All intervals use normal critical values and are conditional
		on the recovered component series.
	\end{minipage}
\end{table}

\paragraph{Recession dating and omitted episodes}

Table~\ref{Tab:DPPRecessionStability} compares the baseline recession
definition with peak-inclusive dating and with the omission of each
recession in turn.
The baseline assigns months after each business-cycle peak through
the trough to recession; peak months belong to expansions.
Peak-inclusive dating transfers the four peak months to recessions,
raising the severity share from 91.5 to 91.7 percent.
The omission comparisons remove only the recession months of the
specified episode and retain all 403 baseline expansion months.
Severity shares range from 90.0 to 93.3 percent, and every reported
95 percent confidence interval for $\mathcal G(m)-\mathcal G(q)$
lies above zero.
The pooled comparison is therefore not driven by any one of the four
recessions in this sample.

\begin{table}[!htbp]
	\centering
	\small
	\caption{Recession Definitions and Leave-One-Recession-Out Comparisons}
	\label{Tab:DPPRecessionStability}
	\setlength{\tabcolsep}{4pt}
	\begin{tabular}{@{}lrrrrrr@{}}
		\toprule
		Comparison & $N_R/N_E$ & $\mathcal G(A)$ & $\mathcal G(q)$ & $\mathcal G(m)$
		& \shortstack[r]{Severity\\share (\%)}
		& \shortstack[r]{95\% CI for\\$\mathcal G(m)-\mathcal G(q)$} \\
		\midrule
		Baseline & 36/403 & $0.576$ & $0.049$ & $0.527$ & $91.5$ & $[0.276,0.679]$ \\
		Peak inclusive & 40/399 & $0.545$ & $0.045$ & $0.500$ & $91.7$ & $[0.268,0.641]$ \\
		Omit 1990--91 & 28/403 & $0.640$ & $0.055$ & $0.585$ & $91.3$ & $[0.295,0.763]$ \\
		Omit 2001 & 28/403 & $0.620$ & $0.041$ & $0.578$ & $93.3$ & $[0.310,0.764]$ \\
		Omit 2007--09 & 18/403 & $0.461$ & $0.040$ & $0.420$ & $91.3$ & $[0.155,0.605]$ \\
		Omit 2020 & 34/403 & $0.547$ & $0.055$ & $0.492$ & $90.0$ & $[0.236,0.638]$ \\
		\bottomrule
	\end{tabular}
	\par\smallskip
	\begin{minipage}{\textwidth}
		\footnotesize
		\textit{Notes:} $N_R$ and $N_E$ are the numbers of recession
		and expansion months in each comparison.
		$\mathcal G(z)$ is the recession-minus-expansion mean log gap
		on the indicated sample. Shares are computed from unrounded
		gaps as $100\mathcal G(m)/\mathcal G(A)$.
		The omission rows retain baseline recession dating and remove
		the indicated recession months from the sample, without
		reclassifying them as expansion months.
		Confidence intervals use joint six-lag HAC covariance estimates,
		the $n/(n-2)$ adjustment, and normal critical values.
		Calendar-month lag distances are preserved after omission;
		observations on opposite sides of an omitted episode are not
		treated as adjacent months.
	\end{minipage}
\end{table}

\paragraph{Local expansion benchmarks}

Table~\ref{Tab:DPPLocalRecessions} replaces the full expansion sample
with the twelve or twenty-four months immediately preceding each
recession, including the peak month.
For each component, the local gap is its mean log level during the
recession minus its mean log level in the available expansion
reference months.
The table reports these gaps as well as the severity share, so that
the magnitude of the premium gap remains visible.

Severity accounts for 72.5--98.9 percent of these local premium gaps.
The 2001 comparison is sensitive to the reference period: the severity
share is 98.8 percent using twelve preceding months and 72.5 percent
using twenty-four, with premium log gaps of only 0.058 and 0.082,
respectively.
Only seven preceding months are available for the 1990--91 recession,
so both requested reference windows use those same seven months.
These episode comparisons are descriptive and do not include
episode-specific confidence intervals.

\begin{table}[!htbp]
	\centering
	\small
	\caption{Recession Log Premiums Relative to the Preceding Expansion}
	\label{Tab:DPPLocalRecessions}
	\begin{tabular}{@{}lrrrrr@{}}
		\toprule
		Recession & $N_R/N_E$ & $\mathcal G_{\mathrm{local}}(A)$
		& $\mathcal G_{\mathrm{local}}(q)$ & $\mathcal G_{\mathrm{local}}(m)$
		& \shortstack[r]{Severity\\share (\%)} \\
		\midrule
		\multicolumn{6}{@{}l}{\textit{Panel A: Preceding twelve months}} \\
		\addlinespace[0.35em]
		1990--91$^{*}$ & 8/7 & $0.229$ & $0.008$ & $0.221$ & $96.6$ \\
		2001 & 8/12 & $0.058$ & $0.001$ & $0.058$ & $98.8$ \\
		2007--09 & 18/12 & $0.702$ & $0.097$ & $0.606$ & $86.3$ \\
		2020 & 2/12 & $1.228$ & $0.036$ & $1.192$ & $97.1$ \\
		\addlinespace[0.75em]
		\multicolumn{6}{@{}l}{\textit{Panel B: Preceding twenty-four months}} \\
		\addlinespace[0.35em]
		1990--91$^{*}$ & 8/7 & $0.229$ & $0.008$ & $0.221$ & $96.6$ \\
		2001 & 8/24 & $0.082$ & $0.023$ & $0.060$ & $72.5$ \\
		2007--09 & 18/24 & $0.846$ & $0.070$ & $0.776$ & $91.7$ \\
		2020 & 2/24 & $1.211$ & $0.013$ & $1.198$ & $98.9$ \\
		\bottomrule
	\end{tabular}
	\par\smallskip
	\begin{minipage}{\textwidth}
		\footnotesize
		\textit{Notes:} $N_R$ is the number of recession months and
		$N_E$ is the number of available expansion reference months.
		Recession intervals are August 1990--March 1991,
		April--November 2001, January 2008--June 2009, and
		March--April 2020. Peak months belong to the preceding expansion.
		$^{*}$The January 1990 sample start leaves only seven preceding
		months for the first recession, so neither requested reference
		window is complete.
		Shares equal $100\mathcal G_{\mathrm{local}}(m)/
		\mathcal G_{\mathrm{local}}(A)$, calculated from unrounded mean
		log differences. No episode-specific confidence intervals are
		reported.
	\end{minipage}
\end{table}

\paragraph{Monthly Upside Components and Local Asymmetry}
\label{App:DPPUpsideEvidence}

For each of the original 439 monthly observations, define
$m_M^+=A_M/(1-q_M)$ and $\Delta_M=q_M-(1+A_M)/2$.
All transformations precede averaging and estimation. The upside
identity $A_M=(1-q_M)m_M^+$ therefore holds observation by observation.
The local gap is the mixture counterpart of the single-maturity object;
it need not be the slope of an exactly traded 30-day implied-volatility curve.

\begin{table}[htbp]
\centering\small
\caption{Monthly upside and relative-severity evidence}
\label{Tab:AppMonthlyUpside}
\begin{tabular}{lrr}
\toprule
Object & Estimate & 95\% interval\\
\midrule
Gain-probability loading & 0.023 & $[-0.003,0.050]$\\
Gain-severity loading & 0.977 & $[0.950,1.003]$\\
Gain-probability recession log gap & $-0.037$ & $[-0.058,-0.015]$\\
Gain-severity recession log gap & 0.612 & $[0.392,0.833]$\\
Loss-to-gain ratio recession log gap & $-0.086$ & $[-0.135,-0.037]$\\
\bottomrule
\end{tabular}
\par\smallskip
\begin{minipage}{\textwidth}\footnotesize
\emph{Notes:} Loadings use 438 adjacent monthly changes. Recession gaps
use 439 months, including 36 recession months. Regressions, observations,
six-lag Bartlett HAC inference, and the correction $n/(n-2)$ match the
baseline downside analysis. Intervals condition on recovered components.
\end{minipage}
\end{table}

Gain severity averages 3.07 percent in expansions and 5.72 percent in
recessions. The observation-level loss-to-gain ratios average 1.373 and
1.258, respectively. Conditional gains rise more proportionately, while
conditional losses remain larger in phase averages.
The identity
$\mathcal G(m)-\mathcal G(m^+)=\mathcal G((1-q)/q)$ separates relative
conditional magnitudes from the premium's overall movement.

The 30-day monthly gap averages $-0.08569$ in expansions and $-0.07178$
in recessions. Its increase is 1.391 percentage points, with a six-lag
HAC interval of $[0.258,2.524]$ percentage points. The premium term adds
0.704 percentage points to the 2.095-percentage-point probability increase.
This local asymmetry calculation is distinct from global third-moment skewness.

\paragraph{Quarterly-horizon evidence}
\label{App:DPPQuarterlyEvidence}
The quarterly construction uses 91.25 ACT/365 days, exactly one quarter of a year. It retains the 30-day construction's quotation filters, parity forward, and strike and maturity interpolation rules. Put-spread, call-spread, and smooth-curve estimates use the same selected chains, with maturities bracketing the target.

Each observation combines the final three exchange sessions of a completed calendar quarter. All three must have valid components; invalid sessions are left missing. Common dates are selected across methods before carry-weighted aggregation, preserving $a=qm=(1-q)m^+$. The resulting sample has 136 quarter-end observations: 124 in expansions and 12 in recessions, spanning four recession episodes. Completed-quarter comparisons end in 2026Q2.

The mean put-spread premium is 3.55 percent of the forward, with a 95 percent interval of $[3.15,3.97]$ percent. Expansion and recession means are 3.36 and 5.51 percent. These levels provide a horizon-matched target separately from the monthly 30-day series.

\begin{table}[tbp]
\centering\small
\caption{Quarterly-horizon probability and conditional severity asymmetry}
\label{tab:quarterly}
\begin{tabular}{lcccc}
\toprule
 & \multicolumn{2}{c}{Probability $q$} & \multicolumn{2}{c}{Loss-to-gain severity $m/m^+$}\\
\cmidrule(lr){2-3}\cmidrule(lr){4-5}
Method & Expansion & Recession & Expansion & Recession\\
\midrule
Put spread & 0.408 & 0.422 & 1.471 & 1.385\\
 & $[0.396,0.420]$ & $[0.400,0.435]$ & $[1.401,1.545]$ & $[1.304,1.530]$\\
Call spread & 0.410 & 0.430 & 1.458 & 1.338\\
 & $[0.398,0.421]$ & $[0.399,0.443]$ & $[1.390,1.528]$ & $[1.259,1.518]$\\
Smooth curve & 0.408 & 0.428 & 1.461 & 1.353\\
 & $[0.397,0.420]$ & $[0.398,0.444]$ & $[1.391,1.534]$ & $[1.252,1.537]$\\
\bottomrule
\end{tabular}
\par\smallskip\begin{minipage}{\textwidth}\footnotesize\emph{Notes:} Option horizon: 91.25 ACT/365 days. Entries are phase means on 136 common quarter-end observations (124 expansion and 12 recession). Each observation combines the final three exchange sessions, all valid under all methods. The smooth curve has a free AtM anchor. Severity-ratio entries average observation-level $(1-q)/q$; they are not ratios evaluated at average probabilities. Brackets give 95 percent pointwise percentile intervals from 1,999 circular calendar-block draws with eight-quarter blocks, conditional on recovery. Missing quarters remain in the sampling frame. Quarter-end phase uses the final month's NBER status, with peaks in expansions and troughs in recessions. Sample: 1990Q1--2026Q2.\end{minipage}
\end{table}

Expansion probabilities range from 0.408 to 0.410 across recoveries (Table~\ref{tab:quarterly}). Their observation-level conditional loss-to-gain ratios average 1.458--1.471: conditional priced losses are 46--47 percent larger than gains. Recession probabilities range from 0.422 to 0.430, with mean ratios of 1.338--1.385. The probability and ratio columns report the same underlying local information in probability and conditional-loss units.

The level is more precisely established than the phase change. The put-spread recession-minus-expansion probability difference is 0.014, with an interval of $[-0.009,0.030]$; the ratio difference is $-0.086$, with an interval of $[-0.186,0.060]$. All three recoveries have phase-difference intervals containing zero. Alternative sampling conventions give similar probability levels: means of 0.412--0.413 on 9,080 common daily dates and 0.410--0.413 on each method's own quarter-end sample.

The lognormal gap separates the probability level from the premium contribution. The mean quarterly put-spread gap is $-0.109$, with an interval of $[-0.120,-0.098]$; call-spread and smooth-curve means are $-0.106$ and $-0.108$. A lognormal volatility chosen to match each observed premium leaves this probability discrepancy by construction.

The quarterly local-gap change is 0.318 percentage points, with an eight-quarter calendar-block interval of $[-1.819,1.758]$ percentage points. Table~\ref{tab:local_gap_phases} compares it with the monthly 30-day result, then aligns both horizons on identical quarter-end dates. The aligned construction uses the same final three exchange sessions for each horizon and retains 134 quarters, including 12 recession quarters. Its recession gap changes are 2.393 percentage points at 30 days and 0.250 at 91 days. Their directly estimated difference is 2.144 percentage points, with a joint interval of $[0.598,4.057]$. The covariance retained by the paired comparison makes this difference more precise than either phase change alone.

\begin{table}[tbp]
\centering\small
\caption{Local asymmetry across phases and option horizons}
\label{tab:local_gap_phases}
\begin{tabular}{lrrrr}
\toprule
 & Expansion & Recession & Change & 95\% interval\\
Construction & $100\overline\Delta_E$ & $100\overline\Delta_R$ & R minus E & for change\\
\midrule
\multicolumn{5}{l}{A. Original observation rules}\\
30 days, monthly averages & $-8.569$ & $-7.178$ & 1.391 & $[0.258,2.524]$\\
91 days, quarter-end mixture & $-10.892$ & $-10.574$ & 0.318 & $[-1.819,1.758]$\\
\midrule
\multicolumn{5}{l}{B. Identical quarter-end dates at both horizons}\\
30 days & $-8.969$ & $-6.576$ & 2.393 & $[-0.495,4.832]$\\
91 days & $-10.824$ & $-10.574$ & 0.250 & $[-1.881,1.695]$\\
30 minus 91 days & 1.855 & 3.998 & 2.144 & $[0.598,4.057]$\\
\bottomrule
\end{tabular}
\par\smallskip\begin{minipage}{\textwidth}\footnotesize\emph{Notes:} All entries are percentage points of $\Delta=q-(1+a)/2$. Panel A uses 439 monthly observations at 30 days and 136 quarter-end observations at 91.25 ACT/365 days. Its monthly interval uses six-lag HAC; quarterly intervals use eight-quarter circular calendar blocks. Panel B uses baseline put spreads on 134 common quarters, with both horizons combining exactly the same final three exchange sessions; 12 quarters are in recessions. Differences are formed jointly in 1,999 attempted calendar-block draws. Five draws contain no recession observations and are omitted jointly, leaving 1,994 valid draws. Missing calendar quarters remain in the resampling frame. Intervals are pointwise and conditional on recovery.\end{minipage}
\end{table}

The local asymmetry comparison is therefore horizon dependent: the recession change is larger at the shorter horizon on the aligned sample. This evidence concerns a vega-weighted local implied-volatility slope. It does not equate the local gap to a standardized third moment or interpret an imprecise quarterly change as exact acyclicality.

The same construction permits a matched-frequency comovement comparison. Table~\ref{tab:quarterly_loadings} reports the probability loading using either the exact final exchange session or the final-three-session mixture. Only changes between adjacent calendar quarters enter. The exact-date baseline loading is 0.0064, with a complementary severity loading of 0.9936. The three-session baseline gives 0.0358 and 0.9642. The six probability intervals include zero, and the point estimates imply severity loadings between 0.9514 and 1.0226. These results extend severity dominance to quarterly changes; they do not establish equality of the monthly and quarterly coefficients or an exactly constant probability.

\begin{table}[tbp]
\centering\small
\caption{Quarterly probability loadings on protection-premium changes}
\label{tab:quarterly_loadings}
\begin{tabular}{lrrrr}
\toprule
Recovery & Quarters & Changes & $\beta_q$ & 95\% HAC interval\\
\midrule
\multicolumn{5}{l}{Exact final exchange session}\\
Put spread & 142 & 137 & 0.0064 & $[-0.0798,0.0926]$\\
Call spread & 142 & 137 & 0.0182 & $[-0.0707,0.1072]$\\
Smooth curve & 142 & 137 & 0.0486 & $[-0.0005,0.0976]$\\
\addlinespace
\multicolumn{5}{l}{Final-three-session mixture}\\
Put spread & 136 & 128 & 0.0358 & $[-0.0292,0.1008]$\\
Call spread & 136 & 128 & -0.0226 & $[-0.1139,0.0688]$\\
Smooth curve & 136 & 128 & 0.0438 & $[-0.0027,0.0903]$\\
\addlinespace
\bottomrule
\end{tabular}
\par\smallskip\begin{minipage}{\textwidth}\footnotesize\emph{Notes:} All options have a 91.25 ACT/365-day target. Recovery methods use common valid observations within each sampling convention; exact-date and three-session panels have their own eligible quarters. OLS includes an intercept. Bartlett HAC uses four actual calendar-quarter lags and the finite-sample adjustment $n/(n-2)$. Missing quarters are not bridged. Intervals condition on recovered components. Severity loadings equal $1-\beta_q$. On the identical 136-quarter comparison sample, the exact-date smooth estimate is 0.0517 with interval $[0.0034,0.1000]$; this sensitivity is retained in the accompanying results.\end{minipage}
\end{table}

Exact-date observations are the direct empirical counterpart of a quarterly model's date-specific conditional distribution. Three-session observations retain the paper's averaging convention. The exact-date baseline has mean premium 3.5167 percent, mean probability 0.41123, and mean local gap $-0.10636$, compared with 3.5485 percent, 0.40911, and $-0.10864$ in the mixture. A model comparison should choose its corresponding observation rule before assessing a discrepancy.

\paragraph{Component levels and alternative rolling windows}

Figure~\ref{fig:dpp_component_levels} reports the monthly adjusted
premium, shortfall probability, and conditional severity in their
natural units.
Its vertical scales differ across panels; proportional variation
is compared on a common scale in the main text.
The rolling figures use equal vertical ranges centered on the
severity and probability benchmarks of one and zero, respectively.
They provide a visual account of the rolling estimates summarized in
Table~\ref{Tab:DPPRollingStability}.

\begin{figure}[!htbp]
	\centering
	\begingroup%
	\providecommand{\DPPFigureWidth}{\textwidth}%
	\pgfmathsetlengthmacro{\dppW}{\DPPFigureWidth}%
	\pgfmathsetlengthmacro{\dppH}{0.60*\dppW}%
	\definecolor{dppBlue}{HTML}{236B8E}%
	\definecolor{dppBrown}{HTML}{9B5B24}%
	\pgfplotsset{dpp publication/.style={%
			scale only axis, axis lines=box, axis line style={black!55,line width=0.45pt},%
			tick align=outside, tick style={black!55,line width=0.4pt},%
			tick label style={font=\fontsize{8}{9.6}\selectfont},%
			label style={font=\fontsize{9}{10.8}\selectfont},%
			title style={font=\fontsize{9}{10.8}\selectfont,at={(0,1.045)},anchor=south west},%
			legend style={font=\fontsize{8}{9.6}\selectfont,draw=none,fill=none},%
			grid=major, xmajorgrids=false, grid style={black!12,line width=0.3pt},%
			scaled ticks=false, /pgf/number format/1000 sep={},%
			clip=true, unbounded coords=jump%
	}}%
%
	\endgroup%
	\caption{Monthly Premium, Probability, and Severity in Natural Units}
	\label{fig:dpp_component_levels}
	\begin{minipage}{\textwidth}
		\footnotesize
		\textit{Notes:} The panels show $100A_M$, $100q_M$, and $100m_M$,
		respectively, from January 1990 through July 2026.
		Severity is expressed as a percentage of the forward value.
		Vertical scales differ across panels.
		Gray shading denotes recession months, from the month after
		the business-cycle peak through the trough.
		Lines are baseline recovered estimates without component-level
		uncertainty bands.
	\end{minipage}
\end{figure}

\begin{figure}[!htbp]
	\centering
	\begingroup%
	\providecommand{\DPPFigureWidth}{\textwidth}%
	\pgfmathsetlengthmacro{\dppW}{\DPPFigureWidth}%
	\pgfmathsetlengthmacro{\dppH}{0.60*\dppW}%
	\definecolor{dppBlue}{HTML}{236B8E}%
	\definecolor{dppBrown}{HTML}{9B5B24}%
	\pgfplotsset{dpp publication/.style={%
			scale only axis, axis lines=box, axis line style={black!55,line width=0.45pt},%
			tick align=outside, tick style={black!55,line width=0.4pt},%
			tick label style={font=\fontsize{8}{9.6}\selectfont},%
			label style={font=\fontsize{9}{10.8}\selectfont},%
			title style={font=\fontsize{9}{10.8}\selectfont,at={(0,1.045)},anchor=south west},%
			legend style={font=\fontsize{8}{9.6}\selectfont,draw=none,fill=none},%
			grid=major, xmajorgrids=false, grid style={black!12,line width=0.3pt},%
			scaled ticks=false, /pgf/number format/1000 sep={},%
			clip=true, unbounded coords=jump%
	}}%
%
	\endgroup%
	\caption{Probability and Severity Elasticities in Rolling Five-Year Windows}
	\label{fig:dpp_rolling_60m}
	\begin{minipage}{\textwidth}
		\footnotesize
		\textit{Notes:} Each window contains 60 adjacent monthly changes,
		with endpoints from January 1995 through July 2026.
		Lines are slope estimates from regressions with an intercept;
		shaded bands are pointwise 95 percent intervals using six-lag
		Bartlett Newey--West covariance estimates and the $n/(n-2)$
		adjustment, conditional on the recovered series.
		The two panels use equal vertical ranges centered on one and
		zero, with dashed lines marking these benchmarks.
		Gray recession shading refers to the window endpoint, not to
		the composition of the entire estimation window.
		Overlapping windows do not provide simultaneous coverage.
	\end{minipage}
\end{figure}

\begin{figure}[!htbp]
	\centering
	\begingroup%
	\providecommand{\DPPFigureWidth}{\textwidth}%
	\pgfmathsetlengthmacro{\dppW}{\DPPFigureWidth}%
	\pgfmathsetlengthmacro{\dppH}{0.60*\dppW}%
	\definecolor{dppBlue}{HTML}{236B8E}%
	\definecolor{dppBrown}{HTML}{9B5B24}%
	\pgfplotsset{dpp publication/.style={%
			scale only axis, axis lines=box, axis line style={black!55,line width=0.45pt},%
			tick align=outside, tick style={black!55,line width=0.4pt},%
			tick label style={font=\fontsize{8}{9.6}\selectfont},%
			label style={font=\fontsize{9}{10.8}\selectfont},%
			title style={font=\fontsize{9}{10.8}\selectfont,at={(0,1.045)},anchor=south west},%
			legend style={font=\fontsize{8}{9.6}\selectfont,draw=none,fill=none},%
			grid=major, xmajorgrids=false, grid style={black!12,line width=0.3pt},%
			scaled ticks=false, /pgf/number format/1000 sep={},%
			clip=true, unbounded coords=jump%
	}}%
%
	\endgroup%
	\caption{Probability and Severity Elasticities in Rolling Fifteen-Year Windows}
	\label{fig:dpp_rolling_180m}
	\begin{minipage}{\textwidth}
		\footnotesize
		\textit{Notes:} Each window contains 180 adjacent monthly changes,
		with endpoints from January 2005 through July 2026.
		Regression construction, pointwise 95 percent intervals,
		benchmark lines, equal vertical ranges, and endpoint recession
		shading follow Figure~\ref{fig:dpp_rolling_60m}.
	\end{minipage}
\end{figure}

\clearpage
\subsubsection{Robustness of the Probability--Severity Decomposition}
\label{App:DPPRecoveryRobustness}

The analysis distinguishes uncertainty about the recovered components
from sampling uncertainty conditional on those components.
Alternative recovery methods and quotation bounds assess sensitivity to the former;
HAC inference and the joint block bootstrap address the latter.
Every recovery comparison retains the original forward, snapshot,
settlement classification, and selected maturities.
Each method is compared with the baseline on identical usable daily
dates before monthly aggregation, with coverage and failures reported.
Agreement across recoveries does not provide independent protection against
quotation errors shared by the methods. Each recovery retains assumptions
about interpolation, support, or regularization, and all comparisons
condition on the original forward and maturity selection.

The full baseline contains 8,791 daily component observations and
439 months from January 1990 through July 2026.
For each comparison, I first intersect usable daily dates and then
rebuild all monthly series using the carry-adjusted aggregation in
Appendix~\ref{App:DPPAggregation}.
The resulting monthly series satisfy
$A_M^{(j)}=q_M^{(j)}m_M^{(j)}$ for method $j$.
Monthly log changes use adjacent calendar months only; missing months
are not bridged.
The severity elasticity is the OLS coefficient from regressing
$\Delta\log m_M^{(j)}$ on $\Delta\log A_M^{(j)}$ with an intercept.
The recession severity share is
$\mathcal G(m^{(j)})/\mathcal G(A^{(j)})$, where each gap is the
recession mean log level minus the expansion mean log level.
Recessions run from the month after an NBER peak through the trough.

Unless otherwise indicated, confidence intervals use Newey--West
covariance with a Bartlett kernel, six calendar-month lags, and the
finite-sample correction $n/(n-2)$.
Missing calendar months contribute zero regression scores to the
HAC calculation; lag distances are not compressed.
Share intervals use the delta method and the joint covariance matrix
of the component and premium log gaps.
All such intervals are conditional on the recovered component series.

\paragraph{Alternative recovery methods}

I compare the nearest put spread with wider strike brackets,
quote-quality restrictions, and call-side spreads.
The quality checks require positive bids, a relative bid--ask spread
at most 25 percent, or volume of at least five; a separate check
requires a strike bracket no wider than one percent of the forward.
These restrictions operate within the original eligible quotation set.
Call-side spreads provide a quotation-consistency check because
call and put slopes are linked by put--call parity.
For a call bracket $K^-<F<K^+$, the corresponding estimate is
\begin{equation}
    \widehat q^{\mathrm{call}}
    =1+e^{r_tu}
    \frac{C(K^+,u)-C(K^-,u)}{K^+-K^-}.
    \label{eq:app_dpp_call_recovery}
\end{equation}
The nearest call bracket and calls at the baseline put strikes are
examined separately.
Their agreement is a consistency check on quotations linked by
put--call parity; it does not provide an independent source of
pricing information.

For the curve and entropy fits, write normalized strike as $k=K/F$
and normalized put price as $p(k)=e^{r_tu}P(Fk,u)/F$.
The selected quotations are valid, non-crossed puts within
$|k-1|\leq0.20$, retaining the nearest strict strike on each side
of one and requiring at least five strikes.
Duplicate strikes must be resolved before recovery.

The smooth-curve estimator represents the risk-neutral distribution
using nonnegative masses on return intervals, with a uniform density
within each interval.
The implied put-price curve is convex and continuously differentiable.
Mass and mean are constrained to one, and the curve is fitted to
midpoint prices using quotation-spread weights and a curvature penalty.
The default fit uses at most 21 nearby put strikes within 20 percent
of the forward; 11 and 41 strikes provide window sensitivities.
The curvature penalty is also varied by a factor of ten in either
direction.
Because the fit does not require every fitted price to lie inside its
quotation interval, pricing residuals and band violations are reported.

More precisely, let $w_j$ be the mass on interval $[b_{j-1},b_j]$,
let $d_j=b_j-b_{j-1}$, and define its average put payoff by
\begin{equation}
    H_j(k)=\frac{1}{d_j}\int_{b_{j-1}}^{b_j}(k-x)_+\,dx,
    \qquad p_w(k)=\sum_jw_jH_j(k).
    \label{eq:app_dpp_density_prices}
\end{equation}
The restrictions are $w_j\geq0$, $\sum_jw_j=1$, and
$\sum_jw_j(b_{j-1}+b_j)/2=1$.
Given $N$ quotations with midpoint prices $p_i^{\mathrm{mid}}$,
the smooth fit minimizes
\begin{equation}
    \frac{1}{2N}\sum_{i=1}^N
    \left[
       \frac{\widetilde h}{h_i}
       \frac{p_w(k_i)-p_i^{\mathrm{mid}}}{s}
    \right]^2
    +\frac{\lambda}{2}\sum_j\frac{w_j^2}{d_j},
    \label{eq:app_dpp_smooth_objective}
\end{equation}
where $h_i$ is the normalized quotation half-spread with a
$10^{-6}$ numerical floor, $\widetilde h$ is its cross-quote median,
and $s=\max\{a^{\mathrm{baseline}},10^{-5}\}$.
The default penalty is $\lambda=10^{-5}$; the sensitivities use
$10^{-6}$ and $10^{-4}$.
The penalty is the integrated squared density, equivalently the
integrated squared curvature of the implied put-price curve.

The entropy estimator instead minimizes relative entropy over the
interval masses subject to probability normalization, unit mean,
and fitted prices inside every selected bid--ask interval.
Lognormal and uniform reference distributions provide prior sensitivity;
the fitted probabilities remain free to change at every date.
The main support is $[10^{-6},3]$, and a wider, refined specification
uses $[10^{-8},10]$.
Bounds obtained by optimizing over these intervals are conditional
on the specified density family and support.
With positive reference masses $\pi_j$, the entropy objective is
$\sum_jw_j\log(w_j/\pi_j)$, subject to the mass and mean constraints
and $p_i^{\mathrm{bid}}\leq p_w(k_i)\leq p_i^{\mathrm{ask}}$ for
every selected quote.
The reference is normalized over the chosen support, with a
$10^{-14}$ floor on reference masses before normalization; fitted
masses and shortfall probabilities are not floored.
The interval endpoints contain the selected strikes, one, fixed tail
knots, and 33 lognormal-spaced interior knots.
The lognormal scale $v=2\Phi^{-1}((1+a^{\mathrm{baseline}})/2)$
locates the interior knots and defines the lognormal reference.
It does not restrict the fitted distribution to be lognormal.
The wider specification uses 65 interior knots, includes the main
grid and its interval midpoints, and extends the support as stated above.

For both approaches, I distinguish fits that hold the original
adjusted premium fixed from fits that estimate it jointly with
probability.
This distinction matters because the original premium is constructed
by linear interpolation: imposing that premium and both surrounding
midpoints on a convex curve can mechanically reproduce the original
local slope.
The joint fits produce their own coherent premium, probability,
and severity, and the difference between the fitted and original
premium is reported.
The fitted interval masses and shortfall probability are allowed to
change at every date, within the stated support and density family.
All fits are date-specific; no temporal smoothing is imposed.
Successful recovery requires normalized equality residuals at most
$2\times10^{-8}$ and mass negativity no greater than $10^{-10}$;
entropy fits additionally satisfy quotation bands to within
$2\times10^{-8}$.
The smooth solver checks a global convex optimality-gap bound against
$10^{-7}$; entropy fits require a primal--dual gap in
$[-10^{-8},10^{-6}]$.
Infeasible problems and failed numerical certificates are excluded
and recorded rather than assigned component estimates.

\paragraph{Results on common dates}

Table~\ref{Tab:DPPRecoveryCommon2} compares the baseline put spread,
the call spread, the smooth curve, and the two entropy reference distributions
on exactly the same 7,078 daily dates.
This sample contains 408 usable months, 398 adjacent monthly changes,
and all 36 recession months.
The smooth and entropy estimates in this table jointly fit the premium
and probability, so agreement does not require holding the original
premium fixed.

\begin{table}[p!]
\centering
\small
\setlength{\tabcolsep}{3pt}
\renewcommand{\arraystretch}{1.12}
\caption{Component Recovery on Identical Daily Samples}
\label{Tab:DPPRecoveryCommon2}
\begin{tabular*}{\textwidth}{@{\extracolsep{\fill}}lrrrr@{}}
\toprule
Method & \shortstack[r]{Severity\\elasticity}
& \shortstack[r]{Recession severity\\share (\%)}
& \shortstack[r]{$\operatorname{sd}(\Delta\log q_M)$}
& \shortstack[r]{$\operatorname{sd}(\Delta\log m_M)$} \\
\midrule
\multicolumn{5}{@{}l}{\textit{Panel A: All common usable dates}} \\
\addlinespace[0.3em]
Baseline put spread & \shortstack[r]{$1.039$\\$[0.997,\,1.081]$} & \shortstack[r]{$91.2$\\$[86.1,\,96.3]$} & $0.056$ & $0.195$ \\
\addlinespace[0.25em]
Call spread & \shortstack[r]{$1.055$\\$[1.017,\,1.092]$} & \shortstack[r]{$92.0$\\$[87.2,\,96.9]$} & $0.054$ & $0.197$ \\
\addlinespace[0.25em]
Smooth curve, joint premium & \shortstack[r]{$1.042$\\$[1.009,\,1.076]$} & \shortstack[r]{$91.9$\\$[87.0,\,96.9]$} & $0.034$ & $0.191$ \\
\addlinespace[0.25em]
Entropy, lognormal reference & \shortstack[r]{$1.016$\\$[0.972,\,1.060]$} & \shortstack[r]{$90.8$\\$[86.0,\,95.6]$} & $0.032$ & $0.184$ \\
\addlinespace[0.25em]
Entropy, uniform reference & \shortstack[r]{$1.027$\\$[0.985,\,1.069]$} & \shortstack[r]{$93.5$\\$[89.1,\,97.9]$} & $0.033$ & $0.186$ \\
\addlinespace[0.25em]
\midrule
\multicolumn{5}{@{}l}{\textit{Panel B: At least ten common dates per month}} \\
\addlinespace[0.3em]
Baseline put spread & \shortstack[r]{$1.038$\\$[0.997,\,1.080]$} & \shortstack[r]{$90.5$\\$[85.2,\,95.7]$} & $0.041$ & $0.196$ \\
\addlinespace[0.25em]
Call spread & \shortstack[r]{$1.036$\\$[1.006,\,1.065]$} & \shortstack[r]{$91.4$\\$[86.1,\,96.6]$} & $0.039$ & $0.195$ \\
\addlinespace[0.25em]
Smooth curve, joint premium & \shortstack[r]{$1.038$\\$[1.004,\,1.071]$} & \shortstack[r]{$91.9$\\$[86.9,\,96.8]$} & $0.031$ & $0.194$ \\
\addlinespace[0.25em]
Entropy, lognormal reference & \shortstack[r]{$1.011$\\$[0.966,\,1.056]$} & \shortstack[r]{$90.4$\\$[85.4,\,95.4]$} & $0.030$ & $0.187$ \\
\addlinespace[0.25em]
Entropy, uniform reference & \shortstack[r]{$1.022$\\$[0.979,\,1.065]$} & \shortstack[r]{$93.4$\\$[88.9,\,97.8]$} & $0.030$ & $0.189$ \\
\addlinespace[0.25em]
\bottomrule
\end{tabular*}
\par\vspace{0.5em}
\begin{minipage}{\textwidth}\footnotesize\raggedright
\textit{Notes:} Panel A uses 7,078 daily dates, 408 months, 398 adjacent
monthly changes, and 36 recession months. Panel B uses 6,895 daily dates,
361 months, 355 adjacent changes, and 32 recession months.
The dates are identical across all five methods within each panel.
Brackets contain 95 percent HAC confidence intervals, using a Bartlett
kernel, six calendar-month lags, and the correction $n/(n-2)$.
Recession shares are $100\mathcal G(m)/\mathcal G(A)$; their intervals
use the delta method and the joint covariance of the log gaps.
Each smooth or entropy method uses its own jointly fitted premium.
Standard deviations refer to the same adjacent changes used to estimate
elasticities. Inference is conditional on the recovered component series.
\end{minipage}
\end{table}

The severity elasticities range from 1.016 to 1.055, and severity
accounts for 90.8--93.5 percent of the recession log premium.
The smooth estimate is 1.042, with a 95 percent HAC interval of
$[1.009,1.076]$; its recession severity share is 91.9 percent,
with an interval of $[87.0,96.9]$ percent.
Probability's monthly log-change standard deviation ranges from
0.032 to 0.056 across the five methods, compared with 0.184--0.197
for severity.
Thus, the result concerns both comovement with the premium and
comparatively limited probability variation.
Several probability elasticities are slightly negative, so the
findings support near-unit severity elasticity without establishing
that probability is constant or that distributional shape is fixed.

Panel~B of the table requires at least ten common dates in every
retained month, leaving
6,895 dates, 361 months, and 355 adjacent changes, including 32
recession months.
Severity elasticities remain between 1.011 and 1.038, and recession
severity shares remain between 90.4 and 93.4 percent.
The wider-bracket, quotation-quality, smoothing-window, curvature-penalty,
and entropy-support comparisons also retain severity dominance.
Each of these specifications is evaluated against a baseline rebuilt
on its own common daily sample; those samples need not coincide with
the five-method sample in Table~\ref{Tab:DPPRecoveryCommon2}.

Tables~\ref{Tab:DPPRecoverySpreadChecks} and
\ref{Tab:DPPRecoveryDensityChecks} report the additional specifications
and the baseline recalculated on each specification's matched sample.
The positive-bid and one-percent-bracket checks can operate primarily
through sample selection, so their agreement with the matched baseline
should not be read as an independent change in the recovery method.

\begin{table}[p!]\centering\footnotesize
\setlength{\tabcolsep}{3pt}
\renewcommand{\arraystretch}{1.15}
\caption{Spread and Quotation-Quality Comparisons}
\label{Tab:DPPRecoverySpreadChecks}
\begin{tabular*}{\textwidth}{@{\extracolsep{\fill}}p{0.35\textwidth}rrrr@{}}
\toprule
Specification & \shortstack[r]{Daily\\dates} & \shortstack[r]{Severity\\elasticity}
& \shortstack[r]{Severity share\\(\%)}
& \shortstack[r]{Matched baseline\\elasticity;\\share (\%)} \\
\midrule
Second-nearest put bracket & $8,753$ & \shortstack[r]{$1.045$\\$[1.013,\,1.078]$} & \shortstack[r]{$93.0$\\$[88.7,\,97.3]$} & $1.033;\ 91.4$ \\
\addlinespace[0.35em]
Third-nearest put bracket & $8,708$ & \shortstack[r]{$1.060$\\$[1.025,\,1.094]$} & \shortstack[r]{$93.3$\\$[88.6,\,98.1]$} & $1.034;\ 91.5$ \\
\addlinespace[0.35em]
Positive put bids & $8,788$ & \shortstack[r]{$1.033$\\$[0.994,\,1.071]$} & \shortstack[r]{$91.5$\\$[86.7,\,96.3]$} & $1.033;\ 91.5$ \\
\addlinespace[0.35em]
Relative spread $\leq25\%$ & $8,747$ & \shortstack[r]{$1.030$\\$[0.992,\,1.069]$} & \shortstack[r]{$91.5$\\$[86.9,\,96.2]$} & $1.030;\ 91.5$ \\
\addlinespace[0.35em]
Put volume $\geq5$ & $8,418$ & \shortstack[r]{$1.046$\\$[1.009,\,1.082]$} & \shortstack[r]{$92.7$\\$[88.0,\,97.4]$} & $1.035;\ 91.6$ \\
\addlinespace[0.35em]
Bracket width $\leq1\%$ of forward & $6,347$ & \shortstack[r]{$1.023$\\$[0.983,\,1.064]$} & \shortstack[r]{$89.3$\\$[83.8,\,94.8]$} & $1.023;\ 89.3$ \\
\addlinespace[0.35em]
Nearest call bracket & $8,760$ & \shortstack[r]{$1.035$\\$[1.006,\,1.064]$} & \shortstack[r]{$92.5$\\$[88.0,\,97.0]$} & $1.031;\ 91.4$ \\
\addlinespace[0.35em]
Calls on the baseline put bracket & $8,730$ & \shortstack[r]{$1.035$\\$[1.006,\,1.064]$} & \shortstack[r]{$92.6$\\$[88.1,\,97.1]$} & $1.031;\ 91.5$ \\
\addlinespace[0.35em]
\bottomrule
\end{tabular*}
\par\vspace{0.5em}
\begin{minipage}{\textwidth}\footnotesize\raggedright
\textit{Notes:} Each row uses the intersection of that specification's
valid daily dates with the baseline, before monthly aggregation.
The final column gives the baseline severity elasticity and recession
severity share recalculated on those same dates.
Samples can differ across rows and from Table~\ref{Tab:DPPRecoveryCommon2}.
Brackets contain conditional 95 percent HAC intervals with six
calendar-month lags; share intervals use the joint delta method.
The spread methods retain the original premium. Joint smooth and entropy
fits estimate their own premium; fixed-premium fits impose the original
value. No missing calendar month is bridged when forming log changes.
\end{minipage}
\end{table}

\begin{table}[p!]\centering\footnotesize
\setlength{\tabcolsep}{3pt}
\renewcommand{\arraystretch}{1.15}
\caption{Smooth-Curve and Entropy Specifications}
\label{Tab:DPPRecoveryDensityChecks}
\begin{tabular*}{\textwidth}{@{\extracolsep{\fill}}p{0.35\textwidth}rrrr@{}}
\toprule
Specification & \shortstack[r]{Daily\\dates} & \shortstack[r]{Severity\\elasticity}
& \shortstack[r]{Severity share\\(\%)}
& \shortstack[r]{Matched baseline\\elasticity;\\share (\%)} \\
\midrule
Smooth, joint premium & $8,779$ & \shortstack[r]{$1.039$\\$[1.008,\,1.070]$} & \shortstack[r]{$92.7$\\$[88.0,\,97.3]$} & $1.033;\ 91.4$ \\
\addlinespace[0.35em]
Smooth, fixed premium & $8,779$ & \shortstack[r]{$1.041$\\$[1.010,\,1.072]$} & \shortstack[r]{$92.7$\\$[87.9,\,97.5]$} & $1.033;\ 91.5$ \\
\addlinespace[0.35em]
Smooth, 11 quotes, joint premium & $8,780$ & \shortstack[r]{$1.037$\\$[1.006,\,1.069]$} & \shortstack[r]{$92.6$\\$[88.0,\,97.2]$} & $1.033;\ 91.5$ \\
\addlinespace[0.35em]
Smooth, 41 quotes, joint premium & $8,779$ & \shortstack[r]{$1.038$\\$[1.008,\,1.069]$} & \shortstack[r]{$92.9$\\$[88.4,\,97.5]$} & $1.033;\ 91.4$ \\
\addlinespace[0.35em]
Smooth, penalty $10^{-6}$, joint & $8,780$ & \shortstack[r]{$1.035$\\$[1.004,\,1.066]$} & \shortstack[r]{$92.2$\\$[87.5,\,96.9]$} & $1.033;\ 91.4$ \\
\addlinespace[0.35em]
Smooth, penalty $10^{-4}$, joint & $8,781$ & \shortstack[r]{$1.046$\\$[1.013,\,1.078]$} & \shortstack[r]{$93.9$\\$[89.6,\,98.1]$} & $1.033;\ 91.4$ \\
\addlinespace[0.35em]
Entropy, lognormal, joint premium & $7,094$ & \shortstack[r]{$1.017$\\$[0.976,\,1.058]$} & \shortstack[r]{$90.8$\\$[86.0,\,95.6]$} & $1.043;\ 91.1$ \\
\addlinespace[0.35em]
Entropy, uniform, joint premium & $7,096$ & \shortstack[r]{$1.030$\\$[0.989,\,1.070]$} & \shortstack[r]{$93.5$\\$[89.1,\,97.9]$} & $1.043;\ 91.2$ \\
\addlinespace[0.35em]
Entropy, lognormal, fixed premium & $7,026$ & \shortstack[r]{$1.024$\\$[0.984,\,1.064]$} & \shortstack[r]{$91.3$\\$[86.4,\,96.3]$} & $1.046;\ 91.1$ \\
\addlinespace[0.35em]
Entropy, uniform, fixed premium & $7,033$ & \shortstack[r]{$1.036$\\$[0.994,\,1.077]$} & \shortstack[r]{$93.6$\\$[89.1,\,98.1]$} & $1.045;\ 91.1$ \\
\addlinespace[0.35em]
Entropy, wide support, fixed premium & $7,038$ & \shortstack[r]{$1.026$\\$[0.983,\,1.068]$} & \shortstack[r]{$91.4$\\$[86.5,\,96.2]$} & $1.049;\ 91.1$ \\
\addlinespace[0.35em]
\bottomrule
\end{tabular*}
\par\vspace{0.5em}
\begin{minipage}{\textwidth}\footnotesize\raggedright
\textit{Notes:} Each row uses the intersection of that specification's
valid daily dates with the baseline, before monthly aggregation.
The final column gives the baseline severity elasticity and recession
severity share recalculated on those same dates.
Samples can differ across rows and from Table~\ref{Tab:DPPRecoveryCommon2}.
Brackets contain conditional 95 percent HAC intervals with six
calendar-month lags; share intervals use the joint delta method.
The spread methods retain the original premium. Joint smooth and entropy
fits estimate their own premium; fixed-premium fits impose the original
value. No missing calendar month is bridged when forming log changes.
\end{minipage}
\end{table}

\paragraph{Coverage and pricing diagnostics}

Table~\ref{Tab:DPPRecoveryCoverageFit} reports coverage and fitted-premium
differences.
These diagnostics qualify the method comparison.
The jointly fitted smooth estimator recovers components on 8,812
of the 8,825 premium dates, including 8,779 with baseline components.
On the latter sample its premium differs from the baseline by
$-0.28$ percent on average, and the 95th percentile of the absolute
relative difference is 1.75 percent.
Thirty-seven dates have absolute relative differences above five
percent, with a maximum of 40.1 percent.
As an ex-post influence diagnostic, excluding these dates and
rebuilding both series on the remaining dates leaves the smooth
severity elasticity at 1.040 and its recession severity share at
92.7 percent, compared with 1.039 and 92.7 percent before exclusion.
This diagnostic does not replace the main estimates or establish
that all fitted prices lie within their quotation bands.

The median smooth-fit root mean squared pricing residual is 0.116
quotation half-spreads, and its 95th percentile is 1.439 half-spreads,
calculated across valid maturity-level fits.
The largest recorded smooth optimality-gap bound is
$8.3\times10^{-9}$, below the acceptance threshold, and the largest
equality residual is $2.9\times10^{-15}$.
These numerical certificates concern optimization and the moment
restrictions; the smooth objective does not impose hard quotation bands.

The jointly fitted entropy methods recover 7,099 and 7,101 dates
under the lognormal and uniform references, respectively.
Coverage is particularly sparse in the early sample: the lognormal
reference retains only five dates in 1993, one in 1994, and four
in 1995.
Most failures reflect infeasible quotation bands within the specified
density family.
Common-sample comparisons address differences in observation dates;
they do not recover information on dates where an entropy fit is
unavailable.
On their respective baseline-matched samples, the fitted premiums
differ from the baseline by 0.98 and 1.26 percent on average;
the corresponding 95th percentiles of absolute relative differences
are 3.07 and 3.76 percent.

\begin{table}[t!]
\centering\small
\setlength{\tabcolsep}{3pt}
\caption{Coverage and Fitted-Premium Differences}
\label{Tab:DPPRecoveryCoverageFit}
\begin{tabular*}{\textwidth}{@{\extracolsep{\fill}}p{0.33\textwidth}rrrr@{}}
\toprule
Method & \shortstack[r]{Valid\\dates}
& \shortstack[r]{Baseline-matched\\dates}
& \shortstack[r]{Mean relative\\difference\\(\%)}
& \shortstack[r]{Absolute relative\\difference (\%)\\95th percentile} \\
\midrule
Baseline put spread & $8,791$ & $8,791$ & --- & --- \\
Call spread & $8,787$ & $8,760$ & --- & --- \\
Smooth curve, joint premium & $8,812$ & $8,779$ & $-0.28$ & $1.75$ \\
Entropy, lognormal reference & $7,099$ & $7,094$ & $0.98$ & $3.07$ \\
Entropy, uniform reference & $7,101$ & $7,096$ & $1.26$ & $3.76$ \\
\bottomrule
\end{tabular*}
\par\vspace{0.5em}
\begin{minipage}{\textwidth}\footnotesize\raggedright
\textit{Notes:} Each method is attempted on the 8,825 original premium
dates. A daily estimate requires recovery at every selected maturity
needed for that date. Premium differences are
$100(\chi_t^{(j)}/\chi_t^{\mathrm{baseline}}-1)$, calculated on
each method's baseline-matched daily sample.
The last column is the 95th percentile of their absolute values.
These are relative price differences, not percentage-point changes in
the DPP. Dashes indicate that the method retains the original premium.
Counts in this table are method-specific; the five-way intersection
contains 7,078 dates.
\end{minipage}
\end{table}

\paragraph{Outer bounds from option quotations}

For each selected maturity, I bound $q=\mathbb Q(X<1)$ over a linear
relaxation of positive-return distributions whose mean is one,
whose at-the-money put price equals the original constructed premium,
and whose put prices lie inside all retained bid--ask quotations.
The relaxation retains both probability mass and the first moment
above the largest quoted strike, so it imposes no finite upper
return support.
It also permits limiting endpoint configurations; consequently,
the intervals are conservative outer bounds rather than a claim
that every endpoint is attained by a strictly positive distribution.
An atom at the forward is treated separately from the strict shortfall
event.
The relaxation retains every eligible put quotation at the selected
maturity, rather than only the nearby fitting window.
It partitions at quoted strikes and at zero and one, retaining both
tail mass and its first moment above the largest knot.
A separate left-limit copy of one distinguishes mass just below
the forward from an atom exactly at it.
Infeasible quotation sets are reported without relaxing their bands.

The secant inequalities in \eqref{eq:dpp_probability_secants} apply
to the true price curve.
If the AtM price is linearly interpolated between the same two
neighboring strikes, both secant slopes equal the baseline put-spread
slope.
That mechanical equality provides no independent robustness check.
The quotation-bound calculation instead lets each quoted put price
vary within its observed bid--ask interval while keeping the original
AtM premium fixed.

The original positive carry weights map maturity-level intervals
into daily and monthly probability intervals $[L_M,U_M]$.
Given the premium series, both empirical attributions are linear
functions of $\log q_M$.
For monthly comovement, writing $x_M=\Delta\log A_M$ and
$h_M=(x_M-\bar x)/\sum_{J\in\mathcal C}(x_J-\bar x)^2$ on the set
$\mathcal C$ of valid adjacent changes gives
\begin{equation*}
    \beta_q=\sum_{M\in\mathcal C}h_M
    (\log q_M-\log q_{M-1})
    =\sum_M c_M\log q_M.
\end{equation*}
Each transition adds $h_M$ to $c_M$ and subtracts it from $c_{M-1}$.
The minimum is obtained by choosing $\log L_M$ when $c_M\geq0$
and $\log U_M$ when $c_M<0$; the maximum reverses these choices.
The severity interval follows from $\beta_m=1-\beta_q$.
For the recession comparison, the same calculation uses weights
$R_M/n_R-(1-R_M)/n_E$ and
$\mathcal G(m)/\mathcal G(A)=1-\mathcal G(q)/\mathcal G(A)$ when
$\mathcal G(A)>0$.
These are exact extrema over the box of monthly intervals and
conservative attribution envelopes for the underlying distributions.
They describe quotation uncertainty conditional on the original
forward and premium, not sampling confidence intervals.

The quotation envelopes are not sufficiently tight to establish
severity dominance on their own.
On the 5,978 dates common to the distributional outer-bound estimator
and the baseline, the severity-elasticity envelope is
$[-0.064,2.165]$, and the recession severity-share envelope is
$[20.2,163.3]$ percent.
The former uses 374 adjacent monthly changes and the latter 387 months.
The outer envelopes include values below equal probability and
severity contributions; this does not establish that such values
are attainable by admissible underlying distributions.
The empirical conclusion is therefore stability across the reported
component-recovery methods, rather than identification from quotation
bounds alone.

Table~\ref{Tab:DPPRecoveryQuotationBounds} also reports the
convexity-only relaxation and the two finite-density-family calculations.
The latter remain conditional on their specified support and interval
density representation.

\begin{table}[t!]
\centering\small
\setlength{\tabcolsep}{3pt}
\caption{Quotation-Based Attribution Envelopes}
\label{Tab:DPPRecoveryQuotationBounds}
\begin{tabular*}{\textwidth}{@{\extracolsep{\fill}}lrrrr@{}}
\toprule
Restrictions & Daily dates & \shortstack[r]{Months /\\changes}
& \shortstack[r]{Severity-elasticity\\envelope}
& \shortstack[r]{Recession severity-share\\envelope (\%)} \\
\midrule
Convexity inequalities & $8,553$ & $439 / 438$ & $[-0.085,\,2.181]$ & $[26.9,\,160.9]$ \\
Finite density family, main support & $7,041$ & $408 / 398$ & $[-0.060,\,2.184]$ & $[22.0,\,163.8]$ \\
Finite density family, wider support & $7,056$ & $409 / 399$ & $[-0.060,\,2.185]$ & $[22.2,\,163.6]$ \\
Distributional outer relaxation & $5,978$ & $387 / 374$ & $[-0.064,\,2.165]$ & $[20.2,\,163.3]$ \\
\bottomrule
\end{tabular*}
\par\vspace{0.5em}
\begin{minipage}{\textwidth}\footnotesize\raggedright
\textit{Notes:} Each row uses the dates common to the stated bound
calculation and the baseline, holding the original forward and premium
fixed. Months apply to recession comparisons; adjacent changes apply
to elasticities. Samples differ across rows, so widths should not be
read as comparisons on identical observations.
These are measurement envelopes, not sampling confidence intervals.
Finite-family results depend on the stated support and interval-density
representation; the distributional outer relaxation has no finite
upper return bound. Attribution extrema over monthly interval boxes
need not be jointly attainable by admissible price curves.
\end{minipage}
\end{table}

\paragraph{Recovery Error and Regression Attribution}
\label{App:DPPRecoveryError}

Conditional on the constructed premium, write
$\log\widehat q_M=\log q_M+e_M$ and $x_M=\Delta\log A_M$. Then
\begin{equation}
 \widehat\beta_q-\beta_q
 =\frac{\operatorname{Cov}(\Delta e_M,x_M)}{\operatorname{Var}(x_M)},
 \qquad
 \mathcal G(\widehat q)-\mathcal G(q)
 =\mathbb E[e_M\mid R]-\mathbb E[e_M\mid E].
 \label{eq:dpp_recovery_error}
\end{equation}
The probability loading shifts with
$\operatorname{Cov}(\Delta e_M,\Delta\log A_M)$, while the recession gap
shifts with the difference in mean log-recovery error across phases.
Orthogonality of these log errors preserves the population estimands;
unbiased errors in probability levels alone need not do so after taking logs.
If the premium is also measured with error, both regressor error and
its covariance with probability error enter the loading.

The interval-box extrema choose each $\log q_M$ according to the sign
of the attribution weight. They are worst-case choices, not draws from
a distribution of quotation noise. Agreement across recovery methods
is informative about the implemented estimators but cannot establish
orthogonality when the same quotations enter all methods. The HAC and
block-bootstrap intervals condition on recovered components and should
not be read as incorporating a separate quotation-error model.

\paragraph{The 34 excluded component dates}

To assess whether the missing component dates could explain the
baseline result, I retain the estimated probabilities on the 8,791
usable dates and allow probability on each of the other 34 dates
to range over $[a_t,1]$.
This is a conservative closed admissibility range under positive
returns and the forward-mean restriction; the limiting endpoints
need not be attained.
I then rebuild the monthly premium and probability intervals using
all 8,825 premium dates and apply the preceding attribution formulas.
The resulting severity elasticity lies between 1.003 and 1.058,
and severity's share of the recession log premium lies between
90.5 and 92.6 percent.
The severity-minus-probability recession log gap remains between
0.466 and 0.490.
Thus, the excluded dates cannot overturn severity dominance when
the retained component estimates are held fixed.
These ranges isolate exclusion sensitivity; they do not incorporate
measurement error on the retained dates.

\paragraph{Joint block-bootstrap sensitivity}
For the original 439-month baseline sample, I assess sampling uncertainty
by jointly resampling monthly log components,
their original adjacent log changes, and the recession indicator in circular
blocks of 6, 12, and 24 months.
Each block length uses 4,999 draws, and the table reports marginal
95 percent percentile intervals.
Changes are computed before resampling, so concatenating blocks creates no
artificial monthly changes; the probability and severity adding-up restrictions
continue to hold in every usable draw.

\begin{table}[htbp]
\centering
\small
\caption{Joint Block-Bootstrap Sensitivity}
\label{Tab:DPPJointBlockBootstrap}
\begin{tabular}{rcccc}
\toprule
Block & Severity & Recession severity & Probability & Relative probability \\
months & elasticity & share (percent) & log gap & volatility \\
\midrule
6 & [0.998, 1.072] & [85.2, 96.8] & [0.018, 0.077] & [0.209, 0.277] \\
12 & [0.996, 1.074] & [84.8, 97.9] & [0.010, 0.077] & [0.203, 0.284] \\
24 & [0.996, 1.076] & [84.5, 98.5] & [0.009, 0.077] & [0.197, 0.297] \\
\bottomrule
\end{tabular}
\begin{minipage}{0.98\textwidth}
\footnotesize\raggedright
\emph{Notes:} Entries are marginal 95 percent percentile intervals.
Relative probability volatility is
$\operatorname{sd}(\Delta\log q_M)/\operatorname{sd}(\Delta\log m_M)$.
Identical monthly index draws are used for the monthly comovement and recession
comparisons. The first original month's unavailable change stays missing when
that month is drawn. Draws without both recession and expansion observations
are excluded only from the recession calculations; counts are reported in the
replication diagnostics. Negative recession-premium draws are retained and
shares are not clipped.
\end{minipage}
\end{table}

Across the three block lengths, the smallest lower endpoint for the recession
severity share is 84.5 percent, and the smallest lower endpoint for the
severity-minus-probability recession log gap is 0.255.
The largest upper endpoint for the volatility of probability relative to
severity is 0.297.
These comparisons describe the marginal intervals and are not a simultaneous
test across block lengths.

These intervals measure sampling sensitivity conditional on the recovered
component series.
They do not incorporate uncertainty in the recovery of probabilities from
option quotations.
Only four recession episodes are available; resampling blocks reuses and can
split these episodes, and does not produce additional independent cycles.
The bootstrap therefore complements the comparisons across time windows and
individual recessions, while its finite-sample coverage and sensitivity to
structural breaks remain limitations.

\subsubsection{Common Scale and Conditional Downside Shape}
\label{App:DPPScaleRestrictions}

This subsection reports the recovered-probability comparisons and
the implementation of the joint quotation restrictions.
All calculations use the original 30-day construction,
carrying-cost inputs, component at-the-money premiums, and
selected quotation bands.
The main sample contains 6,966 daily dates in 405 covered calendar
months between January 1990 and July 2026.
These are not 405 consecutive calendar months.
The archived outputs retain daily coverage, corrected scalar
dual bounds, numerical residuals, and fitted distributions.

\paragraph{Scale predictions away from the forward.}
Let $G_t(k)=\mathbb Q_t(X_t<k)$. Under common scale, relative to date $b$,
\begin{equation}
 \lambda_{t,b}=\frac{a_t}{a_b}=\frac{s_t}{s_b}.
 \label{eq:dpp_scale_ratio}
\end{equation}
\begin{equation}
 G_{t,b}^{\mathrm{scale}}(k)
 =G_b\!\left(1+\frac{k-1}{\lambda_{t,b}}\right).
 \label{eq:dpp_scale_probability_prediction}
\end{equation}
The composition is
$q_t=\mathbb Q(Z<0)$, $a_t=s_t\mathbb E[(-Z)^+]$, and
$m_t=s_t\mathbb E[-Z\mid Z<0]$.
\begin{equation}
 G_{t,b}^{\mathrm{scale}}(0.95)=G_b(0.975)
 \quad\text{when }a_t=2a_b.
 \label{eq:dpp_scale_doubling_example}
\end{equation}
Thus downside probabilities away from the forward can rise even with
constant probability at the forward.
The scale transformation must preserve positivity:
$\mathbb Q_b[1+\lambda_{t,b}(X_b-1)>0]=1$.
Replacing negative transformed values by zero changes the mean and
premium and does not preserve this benchmark.

\paragraph{Recovered-probability comparisons.}

For a fixed reference date $b$, define the discrepancy between
the observed probability and the common-scale prediction as
\begin{equation}
	\varepsilon_{t,b}(k)
	=
	G_t(k)-G_{t,b}^{\mathrm{scale}}(k)
	\label{eq:dpp_scale_probability_discrepancy}
\end{equation}
This is also the difference between the observed probability
change from $b$ and the change predicted by the scale benchmark.
Predictions are formed at the daily frequency before aggregation,
using identical dates and weights for observed and predicted
values.
An average of daily distributions with different scales need not
itself be a rescaled version of the reference distribution.

The smooth-curve and two entropy reconstructions have 7,018
common valid-fit dates from the 8,825 premium dates.
Requiring coverage of the fixed probability cutoffs leaves
6,966 dates, including 6,351 expansion dates and 615 recession
dates.
There are 394 usable adjacent-month comparisons from 438
potential comparisons and 129,733 eligible daily pairs from
176,432 potential pairs.

For each daily pair, label the date with the larger at-the-money
premium $h$ and the other date $l$.
With $r=a_l/a_h$, common scale implies
$G_h(k)=G_l(1+r(k-1))$.
Daily pair comparisons are combined using product carry weights
$D_tD_b$, normalized within each adjacent-month comparison.
Summary statistics weight the monthly comparisons equally.
Methods and cutoffs use the same eligible pairs.
Calendar gaps are not bridged, and overlapping pairs are not
independent observations.

\begin{table}[!htbp]
	\centering
	\small
	\caption{Recovered-probability discrepancies from common scale}
	\label{Tab:DPPScaleDescriptive}
	\begin{tabular}{llrrr}
		\toprule
		Method & $K/F$ & RMS observed & RMS predicted & RMS discrepancy\\
		\midrule
		Smooth curve & 0.975 & 2.738 & 2.252 & 1.081\\
		& 0.950 & 2.802 & 2.310 & 0.775\\
		& 1.000 & 1.506 & 0.000 & 1.506\\
		Entropy: lognormal & 0.975 & 2.755 & 2.254 & 1.012\\
		& 0.950 & 3.158 & 2.415 & 0.963\\
		& 1.000 & 1.380 & 0.000 & 1.380\\
		Entropy: uniform & 0.975 & 2.663 & 2.242 & 0.908\\
		& 0.950 & 2.669 & 2.350 & 0.612\\
		& 1.000 & 1.409 & 0.000 & 1.409\\
		\bottomrule
	\end{tabular}
	\par\smallskip
	\begin{minipage}{\textwidth}\footnotesize
		\emph{Notes:} Entries are percentage points.
		RMS denotes the root mean square across 394 adjacent-month
		comparisons.
		The first two columns of statistics refer to observed and
		scale-predicted probability changes; the final column
		refers to their difference.
		These are descriptive recovered-distribution comparisons,
		not quotation-based exclusions or sampling confidence
		intervals.
	\end{minipage}
\end{table}

The separate quotation-bound calculation retains 6,999 usable
dates in its expanded sample, adding 33 dates and one
adjacent-month comparison to the main sample.
All 2,367 monthly objectives in that calculation are numerically
resolved.
No signed-discrepancy outer interval excludes zero at
$k=0.95$, $0.975$, or $1$, in either sample.
On the matched sample, median interval widths are 28.18, 32.57,
and 35.13 percentage points, respectively.
These intervals are too wide to establish a departure using the
averaged discrepancy alone.
Separate extrema may be attained by different distributions,
so inclusion of zero does not establish that one distribution
satisfies all restrictions jointly.

\paragraph{Fixed anchors and maturity aggregation.}

The selected put quotations include the 21 nearest at-the-money
strikes and strict brackets around $K/F=0.975$ and $0.95$,
within the maintained strike window.
Both sides of the forward enter the joint calculation.

At each component maturity, the premium anchor is fixed at the
value obtained by strike interpolation of bid--ask midpoints.
The original forward and carrying-cost inputs are also fixed.
The component anchors are not optimized within the quotation
bands.
For the original maturity-interpolation weight $w$ and carrying
factors $D_s,D_l$, the daily mixture weights are
\[
\alpha_s
=
\frac{(1-w)D_s}{(1-w)D_s+wD_l}
\qquad
\alpha_l
=
\frac{wD_l}{(1-w)D_s+wD_l}
\]
The daily adjusted premium is
$a_t=\alpha_sa_t^s+\alpha_la_t^l$.
Every component law retains unit mass, unit forward mean, its
original premium anchor, and its selected quotation bands.
Common scale applies to the daily mixture; it does not require
identical laws at the two component maturities.

All exclusion statements condition on these fixed anchors,
forwards, carrying costs, and quotation restrictions.
The calculations do not assess sensitivity to allowing the
anchors themselves to vary within their quotation bands.

\paragraph{Joint full-distribution scale calculation.}

Under common scale,
$Z_t=(X_t-1)/a_t$ has a common distribution across dates.
Writing
$C(x)=\mathbb E^{\mathbb Q}[(x-Z)^+]$ gives
\[
\frac{\mathcal P_t(k)}{a_t}
=
C\left(\frac{k-1}{a_t}\right)
\qquad
C(0)=1
\]
For a continuous distribution,
$C'(0)=q$.
More generally, the strict shortfall probability is the left
derivative $C'_-(0)$; the right derivative also includes any
atom at the forward.
The common function is chosen once for the comparison window,
rather than separately for each date.

For implementation, write $A=\max_t a_t$ within a window and
transform returns to
$R=1+(A/a_t)(X_t-1)$.
Quotation strikes transform by the same affine map.
The linear program represents each component law using
nonnegative masses at the transformed payoff knots and an
upper tail described by its mass and first moment.
The relevant put payoffs are affine between knots, so this
representation preserves their expectations.
Equality of the weighted component masses and tail quantities
across dates imposes the common daily law.

For clarity, let $h$ index the positive-weight component maturities
at date $t$, with weights $\alpha_{th}$ and anchors $a_{th}$.
All quotations below are undiscounted put prices divided by the
component forward. Set
$\gamma_t=A/a_t$ and $\kappa_{thi}=1+\gamma_t(k_{thi}-1)$ for
quoted normalized strike $k_{thi}$.
The common knot set contains zero, one, and every positive
$\kappa_{thi}$; write its largest knot as $\bar r$.
For component $(t,h)$, let $\pi_{thj}$ be mass at knot $r_j$,
$v_{th}$ upper-tail mass, and $\mu_{th}$ its first moment.
The component moment restrictions are
\begin{equation}
\begin{aligned}
 \sum_j\pi_{thj}+v_{th}&=1,\qquad
 \sum_j r_j\pi_{thj}+\mu_{th}=1,\\
 \sum_j(1-r_j)^+\pi_{thj}&=\gamma_ta_{th},\qquad
 \mu_{th}\geq\bar r v_{th}.
\end{aligned}
\label{eq:app_scale_component_lp}
\end{equation}
All masses and moments are nonnegative. One common daily law is
imposed by requiring
$\sum_h\alpha_{th}\pi_{thj}=\nu_j$,
$\sum_h\alpha_{th}v_{th}=v_*$, and
$\sum_h\alpha_{th}\mu_{th}=\mu_*$ at every date, with
$\nu_j,v_*,\mu_*$ independent of $t$.
The fitted original-unit quotation is
$\widehat p_{thi}=\gamma_t^{-1}
\sum_j(\kappa_{thi}-r_j)^+\pi_{thj}$.
The objective is to minimize $\delta\geq0$ subject to these
restrictions and
\begin{equation}
 p_{thi}^{\mathrm{bid}}-\delta a_t
 \leq\widehat p_{thi}
 \leq p_{thi}^{\mathrm{ask}}+\delta a_t.
\label{eq:app_scale_quote_lp}
\end{equation}
Thus $\delta$ is measured in fractions of the daily adjusted premium.
The upper tail lies beyond every transformed quotation strike, so it
has no put payoff at the retained strikes; its mass and moment preserve
the remaining moment restrictions without a finite upper-price cap.
For numerical implementation, one component at each date is eliminated
using the mixture equalities and its nonnegativity is retained explicitly.

Allowing zero terminal prices and limiting tail configurations
gives an outer closure of the class with strictly positive terminal equity values.
Excluding this larger class excludes the maintained strict
model.
A fit in the outer class need not establish an attainable
law with strictly positive terminal equity values.

When the original bands cannot be matched, the calculation
permits a common allowance $\delta a_t$ on each side of every
selected band at date $t$, while keeping the premium anchors
exact.
A corrected dual lower bound above $10^{-8}$ certifies
exclusion.
The numerical primal solution supplies a fit and residual
diagnostics; it is not a mathematically repaired upper
certificate.
Every full-scale window in the reported samples is numerically
resolved.
All exclusions in this run arise from the quotation restrictions
rather than a standalone premium-anchor or support contradiction.

\paragraph{The magnitude of the pricing departures.}

Across the 405 main-sample monthly windows, the median corrected
dual lower bound on the extra quotation allowance is
1.238 percent of the daily adjusted at-the-money premium.
Among the 352 excluded months, the median is 1.393 percent.
For adjacent-month windows, the corresponding all-window median
is 2.126 percent.

To provide a spread benchmark, I interpolate each component
maturity's normalized quoted half-spread to the forward using
the same strike bracket as the midpoint anchor.
Component half-spreads are expressed in the same units as their
adjusted premiums and combined using the original
carry-adjusted maturity weights.
The median daily interpolated half-spread is 2.741 percent of
the premium.
The median of the within-month median half-spreads is
2.801 percent across all months and 2.717 percent among excluded
months.
Within the excluded months, the median ratio of the certified
extra allowance to that month's median interpolated half-spread
is 0.560.

The benchmark is an interpolated half-spread at the forward,
not a directly quoted contract spread on every date or the
spread of each binding wing quotation.
The extra allowance is additional to the original bid--ask bands
already admitted by the calculation.
Its being smaller than a typical at-the-money half-spread
therefore does not make the original restrictions compatible,
or establish that every wing discrepancy is economically small.
A certified lower bound is a minimum required allowance, not an allowance
shown sufficient to restore compatibility. The reported bounds therefore
do not supply an upper bound on the economic size of the discrepancy.

\paragraph{Conditional downside shape.}

The constant-$q$ and varying-$q$ models in
\eqref{eq:dpp_conditional_put_curve} leave the upside law
unrestricted across dates.
Their comparison isolates the restriction that the broad
shortfall probability remains constant.
The full-scale model additionally restricts the upside shape.

Under the conditional restriction,
$(1-X_t)\mid X_t<1=m_tU$, where
$\mathbb E[U]=1$ and the distribution of $U$ is common across
dates.
Since $m_t=a_t/q_t$, for $k\leq1$,
\[
\mathcal P_t(k)
=
q_t\mathbb E[(m_tU-(1-k))^+]
=
a_tH\left(\frac{(1-k)q_t}{a_t}\right)
\]
Here $H(z)=\mathbb E[(U-z)^+]$, so $H(0)=1$.
The constant-$q$ model chooses one unknown probability for the
window; the varying-$q$ model permits a separate probability
at each date.

The first computational stage projects the unchanged component
restrictions into marginal bounds on put prices at all selected
downside strikes and at $0.95$, $0.975$, and $0.99$, together
with bounds on shortfall probability.
The component grids are combined within each date, and the
original carry-adjusted maturity weights give daily intervals.
These separate marginal intervals enlarge the feasible set.

An inverse conditional put-curve formulation imposes necessary
common-shape restrictions while solving for the unknown
probabilities.
A positive corrected certificate in this outer problem excludes
every admissible probability path.
Its objective is a numerical feasibility diagnostic, not an
economically measured price deviation.

More explicitly, let $[L_t(k),U_t(k)]$ and
$[\underline q_t,\overline q_t]$ be the daily marginal projections.
For $0<k<1$, intersect $[L_t(k)/a_t,U_t(k)/a_t]$ with $[0,k]$,
using $0\leq\mathcal P_t(k)\leq k a_t$, and write its endpoints as
$\rho_t^-(k)$ and $\rho_t^+(k)$.
Empty intersections are retained as infeasible constraints.
Let $J=H^{-1}$, with $J(0)$ interpreted as the upper support endpoint.
Its grid contains zero, one, and all distinct price-interval endpoints.
Necessary restrictions are
\begin{equation}
\begin{aligned}
 J(1)&=0,\qquad J\text{ is convex},\qquad
 \frac{J(\rho_{j+1})-J(\rho_j)}{\rho_{j+1}-\rho_j}\leq-1,\\
 \max\{a_t,\underline q_t\}&\leq q_t
 \leq\min\{1,\overline q_t\},\qquad J(0)\leq\frac{q_t}{a_t},\\
 J(\rho_t^+(k))&\leq\frac{(1-k)q_t}{a_t},\\
 \frac{(1-k)q_t}{a_t}&\leq J(\rho_t^-(k))
 \quad\text{when }\rho_t^-(k)>0.
\end{aligned}
\label{eq:app_conditional_inverse_lp}
\end{equation}
The last inequality is omitted at a zero lower price bound because
$H\geq0$ already supplies that bound.
The slope restriction follows from $-1\leq H'\leq0$ on its decreasing
part; convexity is preserved by inversion.
Common conditional loss support and nonnegative terminal equity imply
$J(0)\leq q_t/a_t$ at every date.
These constraints are linear in the grid values of $J$ and the unknown
probabilities. The constant-$q$ calculation uses one probability variable
for the window, while the varying-$q$ calculation uses one per date.

The implementation represents $J$ in units of $1/\max_t a_t$, so
its grid variables and probability variables lie in $[0,1]$.
Convexity is imposed without dividing by nearly coincident endpoint
spacings; distinct floating-point endpoints are never merged.
Each nonconstant constraint row is divided by the largest absolute
coefficient or right-hand side; a constant-only row uses a divisor
of at least one. A common nonnegative elastic slack, bounded by one,
relaxes these normalized rows. Minimizing it gives an exclusion
diagnostic for the marginal necessary system, rather than a price distance.

If the necessary problem is feasible, the candidate
probabilities are passed to a second linear program restoring
the complete original component quotation and moment
restrictions.
Only a numerically verified joint fit receives the outer-fit
classification.
A failed candidate remains unresolved because another
probability vector may fit.
The outer closure can include zero-loss endpoint mass, so
candidate probabilities are not identified estimates of strict
shortfall-event probabilities.

For a supplied probability path, this verification is a linear
program. Set $m_t=a_t/q_t$ and
$B=\min_t q_t/a_t$. Let $\nu_j$ be a common unit-mass,
unit-mean law on conditional-loss knots $u_j\in[0,B]$.
For each component, its nonnegative labeled downside masses
$d_{thj}$ satisfy
\begin{equation}
 \sum_h\alpha_{th}d_{thj}=q_t\nu_j,\qquad
 m_t\sum_j u_jd_{thj}=a_{th}.
\label{eq:app_conditional_joint_lp}
\end{equation}
An otherwise unrestricted component upside law supplies the remaining
mass and expected gain $a_{th}$. Its excess-return tail is represented
by mass and first moment, with no finite upper cap.
The knot grids retain every selected quotation payoff kink and the
support endpoints. All original put bands, including strikes above
the forward, are imposed jointly.
The sparse stop-loss-curve implementation is equivalent to these
mass restrictions; accepted solutions are converted back to masses
and checked against the original component payoffs, moments, and mixture.
A failed verification concerns only the supplied probability path
and therefore cannot exclude the class with unknown $q_t$.

\paragraph{Numerical bounds and classification rules.}

The full-scale distance and inverse elastic programs have bounded
variables. Write either in the form
\begin{equation}
 \min_x c^{\mathsf T}x\quad\text{subject to}\quad
 \mathsf Bx\leq b,\quad \mathsf Ex=d,\quad \ell\leq x\leq u.
\label{eq:app_scale_bounded_program}
\end{equation}
For numerical inequality multipliers $\lambda$, replace positive
entries by zero so that $\lambda\leq0$, and let $y$ denote the
equality multipliers. Define
$r=c-\mathsf B^{\mathsf T}\lambda-\mathsf E^{\mathsf T}y$.
Weak duality gives the bounded-variable lower bound
\begin{equation}
 L=b^{\mathsf T}\lambda+d^{\mathsf T}y+
 \sum_j\begin{cases}
 r_j\ell_j,&r_j\geq0,\\
 r_ju_j,&r_j<0.
 \end{cases}
\label{eq:app_scale_dual_correction}
\end{equation}
The bound includes the adverse contribution of any residual reduced
cost; it does not assume that numerical multipliers satisfy exact
dual stationarity. In the inverse problem, the equality term is absent
because $J(1)=0$ is imposed as a variable bound.
The implementation subtracts $10^{-10}(1+S)$, where $S$ is the sum
of the absolute terms in $L$, rounds downward to the next representable
floating-point value, and clips the lower bound at zero.
Exclusion requires this corrected numerical lower bound to exceed
$10^{-8}$.

The solver's primal and dual feasibility tolerances are $10^{-9}$.
Accepted solutions have maximum inequality, equality, and variable-bound
residual at most $5\times10^{-8}$ in the implemented units.
The full-scale calculation also checks the original component mass,
mean, and anchor restrictions, and records remaining quotation slack.
A joint conditional fit additionally requires maximum quotation slack
at most $10^{-8}$ of the daily adjusted premium, with its mass,
moment, and common-loss restrictions independently rechecked.
The sparse interior-point verification uses an optimality tolerance
of $10^{-10}$.
Nonfinite outputs, conflicting primal and dual diagnostics, failed
solves, and time limits remain unresolved.
A numerical witness is not an exact-arithmetic proof of attainability;
exclusion certificates and fitted witnesses retain their distinct roles.

\paragraph{Nesting and propagation of existing fits.}

The reported classifications respect
\[
\begin{aligned}
\{\text{full common scale}\}
&\subseteq\{\text{common conditional shape, constant }q\}\\
&\subseteq\{\text{common conditional shape, varying }q\}.
\end{aligned}
\]
This nesting also holds for the outer formulations used here.

To see the first inclusion, let a full-scale fit supply the
common transformed law of
$R=1+(A/a_t)(X_t-1)$.
Its premium satisfies
$\mathbb E[(1-R)^+]=A$.
Let $q=\mathbb Q(R<1)$ and, conditional on $R<1$, define
$U=q(1-R)/A$.
Then $\mathbb E[U]=1$ and
\[
1-X_t
=
\frac{a_t}{A}(1-R)
=
\frac{a_t}{q}U
\qquad
\text{on }X_t<1
\]
The same conditional shape and the same $q$ therefore apply at
every date.
The original component laws continue to satisfy their
quotation, anchor, mass, and mean restrictions.
Allowing the upside law to vary does not invalidate this fit.
The second inclusion follows by allowing an already constant
probability path in the varying-$q$ class.

Verified fits from each smaller class are carried into the nested larger
classes, retaining the accepted numerical residual standards. This changes
no exclusion, quotation, anchor, or sample and does not turn numerical
witnesses into exact-arithmetic certificates.

\paragraph{Coverage and compatibility across windows.}

Table~\ref{Tab:DPPScaleWindowRobustness} reports the final window and
sample comparisons, including inherited fits from nested classes.
The expanded sample is selected using outer quotation
feasibility and strike coverage.
This selection is not itself a certificate of an attainable
law with strictly positive terminal equity values.

\begin{table}[!htbp]
	\centering
	\small
	\setlength{\tabcolsep}{4pt}
	\caption{Coverage and joint compatibility across windows}
	\label{Tab:DPPScaleWindowRobustness}
	\begin{tabular}{llrrrr}
		\toprule
		Window/sample & Restriction
		& $N$ & Excluded & Outer fit & Unresolved\\
		\midrule
		\multicolumn{6}{l}{
			\textit{Main sample: 6,966 daily observations}}\\
		Monthly & Full scale & 405 & 352 & 53 & 0\\
		& Conditional, constant $q$ & 405 & 345 & 55 & 5\\
		& Conditional, varying $q$ & 405 & 112 & 138 & 155\\
		Adjacent months & Full scale & 394 & 389 & 5 & 0\\
		& Conditional, constant $q$ & 394 & 388 & 5 & 1\\
		& Conditional, varying $q$ & 394 & 174 & 6 & 214\\
		\midrule
		\multicolumn{6}{l}{
			\textit{Main sample: at least ten dates per constituent month}}\\
		Monthly & Full scale & 356 & 319 & 37 & 0\\
		& Conditional, constant $q$ & 356 & 312 & 39 & 5\\
		& Conditional, varying $q$ & 356 & 90 & 116 & 150\\
		Adjacent months & Full scale & 348 & 343 & 5 & 0\\
		& Conditional, constant $q$ & 348 & 342 & 5 & 1\\
		& Conditional, varying $q$ & 348 & 133 & 5 & 210\\
		\midrule
		\multicolumn{6}{l}{
			\textit{Expanded sample: 6,999 daily observations}}\\
		Monthly & Full scale & 406 & 354 & 52 & 0\\
		& Conditional, constant $q$ & 406 & 347 & 54 & 5\\
		& Conditional, varying $q$ & 406 & 116 & 133 & 157\\
		Adjacent months & Full scale & 395 & 390 & 5 & 0\\
		& Conditional, constant $q$ & 395 & 389 & 5 & 1\\
		& Conditional, varying $q$ & 395 & 179 & 6 & 210\\
		\bottomrule
	\end{tabular}
	\par\smallskip
	\begin{minipage}{\textwidth}\footnotesize
		\emph{Notes:} Verified fits are inherited by nested larger
		model classes.
		No exclusion count changes through witness propagation.
		All calculations retain the original component quotations,
		fixed midpoint anchors, forwards, carrying costs, and
		maturity weights.
		An outer fit need not correspond to an attainable
		law with strictly positive terminal equity values.
		Unresolved outcomes remain in the denominators.
		Overlapping windows are separate compatibility exercises;
		the counts are not independent tests or sampling
		rejection frequencies.
	\end{minipage}
\end{table}

Candidate failures and time limits are never counted as exclusions.
The final main-sample classifications are 112 exclusions, 138 outer fits,
and 155 unresolved months.

\paragraph{What the nested comparisons establish.}

Of the 352 monthly full-scale exclusions, 345 remain excluded
when the upside shape is unrestricted, two have a constant-$q$
outer fit, and five remain unresolved.
Thus most exclusions do not require imposing a common upside
shape.
This is not a downside-quotations-only test: the projections and
joint verification retain all original component quotation and
moment restrictions, including quotations above the forward.

Among the 345 constant-$q$ exclusions, 80 have a verified joint
outer fit when $q_t$ varies, 112 remain excluded, and 153 remain
unresolved.
The probability paths in the 80 fits are witnesses of
compatibility in the enlarged class, not estimates of the daily
physical or strict priced probabilities.

The 112 exclusions establish that at least 27.7 percent of all
405 months require a departure from common conditional downside
shape under the maintained measurement restrictions.
For the outer class alone, assigning every unresolved month
to incompatibility would raise the share to
$(112+155)/405=65.9$ percent.
This is not an upper bound for the class with strictly positive terminal equity values, because outer fits need not be attainable there.
Neither endpoint is a sampling confidence bound.

\paragraph{Direction of departures and unresolved windows.}

The saved outputs contain corrected scalar dual bounds and
primal quotation fits, but not the full vectors of dual
multipliers.
Quotations attaining the maximum allowance in a saved minimax
fit need not form a unique conflicting set, and an active
quotation can have zero dual weight.
They therefore do not identify a unique directional tail
response.

Across the 352 excluded main-sample months, the saved fits
contain binding downside quotations both above asks and below
bids.
They also contain binding quotations on dates both above and
below the within-window median premium.
These diagnostics do not support a uniform direction of
departure in the left tail.

The unresolved months are not confined to the earliest or
sparsest observations.
Their median coverage is 19 dates and 829 selected component
quotations, compared with 20 dates and 878 quotations in the
fitted months.
Of the 155 unresolved months, 150 contain at least ten retained
dates.
The years 2000--2004 account for 45 unresolved months out of
60 covered months.
These classifications describe limits of the numerical
feasibility search, not low statistical test power.

\end{document}